\documentclass[a4paper,UKenglish,cleveref,thm-restate]{lipics-v2019}

\bibliographystyle{plainurl}%

\title{HyperLTL Satisfiability is $\Sigma_1^1$-complete, HyperCTL* Satisfiability is $\Sigma_1^2$-complete}

\author{Marie Fortin}{University of Liverpool, UK}{Marie.Fortin@liverpool.ac.uk}{https://orcid.org/0000-0001-5278-0430}{}
\author{Louwe B. Kuijer}{University of Liverpool, UK}{Louwe.Kuijer@liverpool.ac.uk}{https://orcid.org/0000-0001-6696-9023}{}
\author{Patrick Totzke}{University of Liverpool, UK}{totzke@liverpool.ac.uk}{http://orcid.org/0000-0001-5274-8190}{}
\author{Martin Zimmermann}{University of Liverpool, UK}{Martin.Zimmermann@liverpool.ac.uk}{http://orcid.org/0000-0002-8038-2453}{}

\authorrunning{M.~Fortin and L.~Kuijer and P.~Totzke and M.~Zimmermann}

\Copyright{M.~Fortin and L.~Kuijer and P.~Totzke and M.~Zimmermann}

\ccsdesc[500]{Theory of computation~Logic and verification}
\ccsdesc[500]{Theory of computation~Formal languages and automata theory}

\keywords{HyperLTL, HyperCTL*, Satisfiability, Analytical Hierarchy}

\category{} %

\relatedversion{} %

\supplement{}%

\funding{Partially funded by EPSRC grants EP/S032207/1 and EP/V025848/1.}%

\nolinenumbers
\hideLIPIcs

\EventEditors{John Q. Open and Joan R. Access}
\EventNoEds{2}
\EventLongTitle{42nd Conference on Very Important Topics (CVIT 2016)}
\EventShortTitle{CVIT 2016}
\EventAcronym{CVIT}
\EventYear{2016}
\EventDate{December 24--27, 2016}
\EventLocation{Little Whinging, United Kingdom}
\EventLogo{}
\SeriesVolume{42}
\ArticleNo{23}

\usepackage[utf8]{inputenc}

\usepackage{amsmath}
\usepackage{amssymb}
\usepackage{lettrine}

\usepackage{xspace}
\usepackage{turnstile}

\usepackage{booktabs}
\usepackage{multirow}
\usepackage{pifont}

\usepackage{multicol}
\usepackage{stackengine}
\usepackage{pgf}

\usepackage{tikz}
\usetikzlibrary{decorations.pathreplacing,positioning}
\usetikzlibrary{arrows,automata}
\tikzset{>=stealth, shorten >=1pt}
\tikzset{every edge/.style = {thick, ->, draw}}
\tikzset{every loop/.style = {thick, ->, draw}}
\pgfdeclarelayer{background}
\pgfsetlayers{background,main}

\newcommand{\expansion}{e}

\newcommand{\contcard}{\cont}
\newcommand{\mcgame}{\mathcal{G}}

\newcommand{\last}{\mathrm{rcnt}}

\newcommand{\set}[1]{\{#1\}}
\newcommand{\nats}{\mathbb{N}}

\renewcommand{\epsilon}{\varepsilon}

\renewcommand{\phi}{\varphi}
\newcommand{\size}[1]{|#1|}

\newcommand{\N}{\mathbb{N}}
\newcommand\Nat{\mathbb{N}}

\newcommand{\x}{\times}
\newcommand{\pow}[1]{2^{#1}}
\newcommand{\cont}{\mathfrak{c}}

\newcommand{\cceq}{\mathop{::=}}

\newcommand{\tsys}{\mathcal{T}}

\newcommand{\T}{\mathcal{T}}

\newcommand{\initmark}{I}

\newcommand{\F}{\mathop{\mathbf{F}\vphantom{a}}\nolimits}
\newcommand{\G}{\mathop{\mathbf{G}\vphantom{a}}\nolimits}
\DeclareMathOperator{\U}{\mathbf{U}}
\newcommand{\X}{\mathop{\mathbf{X}\vphantom{a}}\nolimits}

\newcommand{\ltl}{{LTL}\xspace}
\newcommand{\ctlstar}{{CTL$^*$}\xspace}
\newcommand{\hyltl}{{HyperLTL}\xspace}
\newcommand{\hyctlstar}{{HyperCTL$^*$}\xspace}
\newcommand{\fol}{{FO}$[<]$\xspace}

\newcommand\FO{\textup{FO}}

\newcommand\HyperLTL{\textup{HyperLTL}}
\newcommand\simpleHyperLTL{\encPhi {\FOw}}

\newcommand{\suffix}[2]{#1[#2,\infty)}
\newcommand{\var}{\mathcal{V}}
\newcommand{\ap}[0]{\mathrm{AP}}
\newcommand\AP{\textup{AP}}
\newcommand\Free{\textup{Free}}
\newcommand\depth{\textit{depth}}

\newcommand\FOw{\FO[\le]}
\newcommand\varfo{\textit{Var}}
\newcommand\pos{\textit{pos}}
\newcommand\Sfo[1]{\Sigma_{#1}(\FOw)}

\newcommand\enc[2]{\mathit{enc}(#1,#2)}
\newcommand\encn[2]{\mathit{enc}_{#2}(#1)}
\newcommand\Tn[2]{#1^{(#2)}}
\newcommand\nun[2]{#1^{(#2)}}
\newcommand\encPhi[1]{\mathit{enc}(#1)}

\newcommand\simpl[1]{\widehat {#1}}

\newcommand\Sign[1]{\Sigma_{#1}}
\newcommand\Pin[1]{\Pi_{#1}}
\newcommand\Tl{T_\ell}
\newcommand\Tr{T_r}

\newcommand\phib{\varphi_{\mathit{bd}}}
\newcommand\Vl{V_\ell}
\newcommand\Vr{V_r}
\newcommand\Sigmal{\Sigma_\ell}
\newcommand\Sigmar{\Sigma_r}

\newcommand\argl{\textup{\texttt{arg1}}}
\newcommand\argr{\textup{\texttt{arg2}}}
\newcommand\res{\textup{\texttt{res}}}
\newcommand\add{\textup{\texttt{add}}}
\newcommand\mult{\textup{\texttt{mult}}}
\newcommand\fbt{\textup{\texttt{fbt}}}
\newcommand\pset{\textup{\texttt{set}}}
\newcommand\phiop{\varphi_{{+},{\cdot}}}
\newcommand\Top{T_{{+},{\cdot}}}
\newcommand\phiset{\varphi_{\cont}}
\newcommand\Kset{\tsys_{\cont}}
\newcommand\xo{x}
\newcommand\yo{y}
\newcommand\zo{z}

\newcommand\ys{y}
\newcommand\xt{x}
\newcommand\yt{y}
\newcommand\ai[1]{a_{#1}}
\newcommand\ax{a}
\newcommand\pix[1]{\pi_{#1}}

\definecolor{com}{RGB}{0,128,0}

\renewcommand{\phi}{\varphi}
\newcommand{\east}{\mathit{east}}
\newcommand{\west}{\mathit{west}}
\newcommand{\north}{\mathit{north}}
\newcommand{\south}{\mathit{south}}

\begin{document}

\maketitle

\begin{abstract}
Temporal logics for the specification of information-flow properties are able to express relations between multiple executions of a system.
The two most important such logics are HyperLTL and HyperCTL*, which generalise LTL and CTL* by trace quantification. 
It is known that this expressiveness comes at a price, i.e.\ satisfiability is undecidable for both logics. 

In this paper we settle the exact complexity of these problems, showing that both are in fact highly undecidable:
we prove that HyperLTL satisfiability is $\Sigma_1^1$-complete and HyperCTL* satisfiability is $\Sigma_1^2$-complete. 
These are significant increases over the previously known lower bounds and the first upper bounds.
To prove $\Sigma_1^2$-membership for HyperCTL*, we prove that every satisfiable HyperCTL* sentence has a model that is equinumerous to the continuum, the first upper bound of this kind. We prove this bound to be tight.
Finally, we show that the membership problem for every level of the HyperLTL quantifier alternation hierarchy is $\Pi_1^1$-complete.

\end{abstract}

\section{Introduction}
\label{sec:intro}
Most classical temporal logics like \ltl and \ctlstar refer to a single execution trace at a time while information-flow properties, which are crucial for security-critical systems,
require reasoning about multiple executions of a system. 
Clarkson and Schneider~\cite{ClarksonS10} coined the term \emph{hyperproperties} for such properties which, structurally, are sets \emph{of sets} of traces.
Just like ordinary trace and branching-time properties, hyperproperties can be specified using temporal logics, e.g.\ \hyltl and \hyctlstar~\cite{ClarksonFKMRS14}, expressive, but intuitive specification languages that are able to express typical information-flow properties such as noninterference, noninference, declassification, and input determinism.
Due to their practical relevance and theoretical elegance, hyperproperties and their specification languages have received considerable attention during the last decade~\cite{DBLP:conf/atva/AbrahamBBD20,DBLP:conf/qest/AbrahamB18,bartocci2021flavours,abs-2104-14025,BerardHH18,bozzelli2021asynchronous,ClarksonFKMRS14,ClarksonS10,CoenenFHH19,DimitrovaFT20,FZ17,GutsfeldMO20,HO2020104639,DBLP:conf/mfcs/KrebsMV018,DBLP:journals/corr/abs-2010-03311}.

\hyltl is obtained by extending \ltl~\cite{Pnueli77}, the most influential specification language for linear-time properties, by trace quantifiers to refer to multiple executions of a system. 
For example, the \hyltl formula
\[\forall \pi.\ \forall \pi'.\ \G( i_\pi \leftrightarrow i_{\pi'}) \rightarrow \G (o_\pi \leftrightarrow o_{\pi'}) \]
expresses input determinism, i.e.\ every pair of traces that always has the same input (represented by the proposition~$i$) also always has the same output (represented by the proposition~$o$). 
Similarly, \hyctlstar is the extension of the branching-time logic \ctlstar~\cite{EmersonH86} by path quantifiers. 
\hyltl only allows formulas in prenex normal form while \hyctlstar allows arbitrary quantification, in particular under the scope of temporal operators. 
Consequently, \hyltl formulas are evaluated over sets of traces while \hyctlstar formulas are evaluated over transition systems, which yield the underlying branching structure of the traces. 

All basic verification problems, e.g.\ model checking~\cite{Finkbeiner21,FinkbeinerRS15}, runtime monitoring~\cite{AgrawalB16,BonakdarpourF16,BrettSB17,FinkbeinerHST18}, and synthesis~\cite{BonakdarpourF20,FinkbeinerHHT20,FinkbeinerHLST20}, have been studied. Most importantly, \hyctlstar model checking over finite transition systems is decidable and TOWER-complete for a fixed transition system~\cite{FinkbeinerRS15, MZ20}. 
However, for a small number of alternations, efficient algorithms have been developed and were applied to a wide range of problems, e.g.\ an information-flow analysis of an I2C bus master~\cite{FinkbeinerRS15}, the symmetric access to a shared resource in a mutual exclusion protocol~\cite{FinkbeinerRS15}, and to detect the use of a defeat device to cheat in emission testing~\cite{BartheDFH16}.

But surprisingly, the exact complexity of the satisfiability problems for \hyltl and \hyctlstar is still open. 
Finkbeiner and Hahn proved that \hyltl satisfiability is undecidable~\cite{FinkbeinerH16}, a result which already holds when only considering finite sets of ultimately periodic traces
 and $\forall\exists$-formulas. 
In fact, Finkbeiner et al.\ showed that \hyltl satisfiability restricted to finite sets of ultimately periodic traces is $\Sigma_1^0$-complete~\cite{FHH18} (i.e.\ complete for the set of recursively enumerable problems).
Furthermore, Hahn and Finkbeiner proved that the $\exists^*\forall^*$-fragment has decidable satisfiability~\cite{FinkbeinerH16} while Mascle and Zimmermann studied the \hyltl satisfiability problem restricted to bounded sets of traces~\cite{MZ20}. The latter work implies that \hyltl satisfiability restricted to finite sets of traces (even non ultimately periodic ones) is also $\Sigma_1^0$-complete.
Finally, Finkbeiner et al.\ developed tools and heuristics~\cite{FHH18,FinkbeinerHS17}.

As every \hyltl formula can be turned into an equisatisfiable \hyctlstar formula, \hyctlstar satisfiability is also undecidable. 
Moreover, Rabe has shown that it is even $\Sigma_1^1$-hard~\cite{Rabe16}, i.e.\ it is not even arithmetical.
However, both for \hyltl and for \hyctlstar satisfiability, only lower bounds, but no upper bounds, are known. 

\subparagraph*{Our Contributions.}
In this paper, we settle the complexity of the satisfiability problems for \hyltl and \hyctlstar
by determining exactly how undecidable they are.
That is, we provide matching lower and upper bounds in terms of the analytical hierarchy and beyond,
where decision problems (encoded as subsets of $\nats$) are classified based on their definability by formulas of higher-order arithmetic, namely by the type of objects one can quantify over and by the number of alternations of such quantifiers.
We refer to Roger's textbook~\cite{Rogers87} for fully formal definitions.
For our purposes, it suffices to recall the following classes.
$\Sigma_1^0$ contains the sets of natural numbers of the form
\[
\set{x \in \nats \mid \exists x_0 \cdots \exists x_k.\ \psi(x, x_0, \ldots, x_k)}
\] 
where quantifiers range over natural numbers and $\psi$ is a quantifier-free arithmetic formula.
The notation~$\Sigma_1^0$ signifies that there is a single block of existential quantifiers (the subscript~$1$) ranging over natural numbers (type~$0$ objects, explaining the superscript~$0$).
Analogously, $\Sigma_1^1$ is induced by arithmetic formulas with existential quantification of type~$1$ objects (functions mapping natural numbers to natural numbers) and arbitrary (universal and existential) quantification of type~$0$ objects.
Finally, $\Sigma_1^2$ is induced by arithmetic formulas with existential quantification of type~$2$ objects (functions mapping type~$1$ objects to natural numbers) and arbitrary  quantification of type~$0$ and type~$1$ objects.
So, $\Sigma_1^0$ is part of the first level of the arithmetic hierarchy, $\Sigma_1^1$ is part of the first level of the analytical hierarchy, while $\Sigma_1^2$ is not even analytical.

In terms of this classification, we prove that \hyltl satisfiability is $\Sigma_1^1$-complete while \hyctlstar satisfiability is $\Sigma_1^2$-complete, thereby settling the complexity of both problems
and showing that they are highly undecidable.
In both cases, this is a significant increase of the lower bound and the first upper bound.

First, let us consider \hyltl satisfiability.
The $\Sigma_1^1$ lower bound is a reduction from the recurrent tiling problem, a standard $\Sigma_1^1$-complete problem asking whether $\nats\times\nats$ can be tiled by a given finite set of tiles.
So, let us consider the upper bound: $\Sigma_1^1$ allows to quantify over type~$1$ objects: functions from natural numbers to natural numbers, or, equivalently, over sets of natural numbers, i.e.\ countable objects.
On the other hand, \hyltl formulas are evaluated over sets of infinite traces, i.e.\ uncountable objects. 
Thus, to show that quantification over type~$1$ objects is sufficient, we need to apply a result of Finkbeiner and Zimmermann proving that every satisfiable \hyltl formula has a countable model~\cite{FZ17}.
Then, we can prove $\Sigma_1^1$-membership by expressing the existence of a model and the existence of appropriate Skolem functions for the trace quantifiers by type~$1$ quantification. 
We also prove that the satisfiability problem remains $\Sigma_1^1$-complete when restricted to ultimately periodic traces, or, equivalently, when restricted to finite traces.

Then, we turn our attention to \hyctlstar satisfiability.
Recall that \hyctlstar formulas are evaluated over (possibly infinite) transition systems, which can be much larger than type~$2$ objects whose cardinality is bounded by $\contcard$, the cardinality of the continuum.
Hence, to obtain our upper bound on the complexity we need, just like in the case of \hyltl, an upper bound on the size of minimal models of satisfiable \hyctlstar formulas.
To this end, we generalise the proof of Finkbeiner and Zimmermann to \hyctlstar, showing that every satisfiable \hyctlstar formula has a model of size~$\contcard$. 
We also exhibit a satisfiable \hyctlstar formula~$\varphi_\contcard$ whose models all have at least cardinality~$\contcard$, as they have to encode all subsets of $\nats$ by disjoint paths.
Thus, our upper bound~$\contcard$ is tight.

With this upper bound on the cardinality of models, we are able to prove $\Sigma_1^2$-membership of \hyctlstar satisfiability by expressing with type~$2$ quantification the existence of a model and the existence of a winning strategy in the induced model checking game.
The matching lower bound is proven by directly encoding the arithmetic formulas inducing $\Sigma_1^2$ as instances of the \hyctlstar satisfiability problem.
To this end, we use the formula~$\varphi_\contcard$ whose models have for each subset $A \subseteq \nats$ a path encoding $A$.
Now, quantification over type~$0$ objects (natural numbers) is simulated by quantification of a path encoding a singleton set, quantification over type~$1$ objects (which can be assumed to be sets of natural numbers) is simulated by quantification over the paths encoding such subsets, and existential quantification over type~$2$ objects (which can be assumed to be subsets of $\pow{\nats}$) is simulated by the choice of the model, i.e.\ a model encodes $k$ subsets of $\pow{\nats}$ if there are $k$ existential type~$2$ quantifiers.
Finally, the arithmetic operations can easily be implemented in \hyltl, and therefore also in \hyctlstar.

After settling the complexity of satisfiability, we turn our attention to the \hyltl quantifier alternation hierarchy and its relation to satisfiability. 
Rabe remarks that the hierarchy is strict~\cite{Rabe16}, and Mascle and Zimmermann show that every \hyltl formula has an equi-satisfiable formula with one quantifier alternation~\cite{MZ20}.
Here, we present a novel proof of strictness by embedding the \fol alternation hierarchy, which is also strict~\cite{CohenB71,Thomas81}. 
We use our construction to prove that deciding equivalence to formulas of fixed quantifier alternation is $\Pi_1^1$-complete (i.e.\ the co-class of $\Sigma_1^1$).

All proofs omitted due to space restrictions can be found in the appendix.

\section{Preliminaries}
\label{sec:definitions}
Fix a finite set~$\ap$ of atomic propositions. A \emph{trace} over $\ap$ is a map $t \colon \nats \rightarrow \pow{\ap}$, denoted by $t(0)t(1)t(2) \cdots$. 
It is \emph{ultimately periodic}, if $t = x \cdot y^\omega$ for some $x,y \in (\pow{\ap})^+$, i.e.\ there are $s,p>0$ with $t(n) = t(n+p)$ for all $n \ge s$.
The set of all traces over $\ap$ is  $(\pow{\ap})^\omega$. 

A transition system~$\tsys = (V,E,v_\initmark, \lambda)$ consists of a set~$V$ of vertices, a set~$E \subseteq V \times V$ of (directed) edges, an initial vertex~$v_\initmark \in V$, and a labelling~$\lambda\colon V \rightarrow \pow{\ap}$ of the vertices by sets of atomic propositions. A path $\rho$ through~$\tsys$ is an infinite sequence~$\rho(0)\rho(1)\rho(2)\cdots$ of vertices with $(\rho(n),\rho(n+1))\in E$ for every $n \ge 0$.

\subparagraph*{\hyltl.}

The formulas of \hyltl are given by the grammar
\[
\phi  {} \cceq {}  \exists \pi.\ \phi \mid \forall \pi.\ \phi \mid \psi \qquad\qquad
\psi {}  \cceq {}  a_\pi \mid \neg \psi \mid \psi \vee \psi \mid \X \psi \mid \psi \U \psi
\]
where $a$ ranges over atomic propositions in $\ap$ and where $\pi$ ranges over a fixed countable set~$\var$ of \emph{(trace) variables}. Conjunction, implication, and equivalence are defined as usual, and the temporal operators eventually~$\F$ and always~$\G$ are derived as $\F\psi = \neg \psi\U \psi$ and $\G \psi = \neg \F \neg \psi$. A \emph{sentence} is a  formula without free variables.

The semantics of \hyltl is defined with respect to a \emph{trace assignment}, a partial mapping~$\Pi \colon \var \rightarrow (\pow{\ap})^\omega$. The assignment with empty domain is denoted by $\Pi_\emptyset$. Given a trace assignment~$\Pi$, a variable~$\pi$, and a trace~$t$ we denote by $\Pi[\pi \rightarrow t]$ the assignment that coincides with $\Pi$ everywhere but at $\pi$, which is mapped to $t$. 
Furthermore, $\suffix{\Pi}{j}$ denotes the trace assignment mapping every $\pi$ in $\Pi$'s domain to $\Pi(\pi)(j)\Pi(\pi)(j+1)\Pi(\pi)(j+2) \cdots $.

For sets~$T$ of traces and trace assignments~$\Pi$ we define 
\begin{itemize}
	\item $(T, \Pi) \models a_\pi$ if $a \in \Pi(\pi)(0)$,
	\item $(T, \Pi) \models \neg \psi$ if $(T, \Pi) \not\models \psi$,
	\item $(T, \Pi) \models \psi_1 \vee \psi_2 $ if $(T, \Pi) \models \psi_1$ or $(T, \Pi) \models \psi_2$,
	\item $(T, \Pi) \models \X \psi$ if $(T,\suffix{\Pi}{1}) \models \psi$,
	\item $(T, \Pi) \models \psi_1 \U \psi_2$ if there is a $j \ge 0$ such that $(T,\suffix{\Pi}{j}) \models \psi_2$ and for all $0 \le j' < j$: $(T,\suffix{\Pi}{j'}) \models \psi_1$, 
	\item $(T, \Pi) \models \exists \pi.\ \phi$ if there exists a trace~$t \in T$ such that $(T,\Pi[\pi \rightarrow t]) \models \phi$, and 
	\item $(T, \Pi) \models \forall \pi.\ \phi$ if for all traces~$t \in T$: $(T,\Pi[\pi \rightarrow t]) \models \phi$. 
\end{itemize}
We say that $T$ \emph{satisfies} a sentence~$\phi$ if $(T, \Pi_\emptyset) \models \phi$. In this case, we write $T \models \phi$ and say that $T$ is a \emph{model} of $\phi$. 
Although \hyltl sentences are required to be in prenex normal form, they are closed under Boolean combinations, which can easily be seen by transforming such formulas into prenex normal form. Two \hyltl sentences~$\varphi$ and $\varphi'$ are equivalent if $T \models \varphi$ if and only if $T \models \varphi'$ for every set~$T$ of traces.

\subparagraph*{\hyctlstar.}

The formulas of \hyctlstar are given by the grammar
\begin{align*}
\phi & {} \cceq {} a_\pi \mid \neg \phi \mid \phi \vee \phi \mid \X \phi \mid \phi \U \phi \mid \exists \pi.\ \phi \mid \forall \pi.\ \phi
\end{align*}
where $a$ ranges over atomic propositions in $\ap$ and where $\pi$ ranges over a fixed countable set~$\var$ of \emph{(path) variables}, and where we require that each temporal operator appears in the scope of a path quantifier. Again, other Boolean connectives and temporal operators are derived as usual.
Sentences are formulas without free variables.

Let $\tsys$ be a transition system. The semantics of \hyctlstar is defined with respect to a \emph{path assignment}, a partial mapping~$\Pi$ from variables in $\var$ to paths of $\tsys$. The assignment with empty domain is denoted by $\Pi_\emptyset$. Given a path assignment~$\Pi$, a variable~$\pi$, and a path~$\rho$ we denote by $\Pi[\pi \rightarrow \rho]$ the assignment that coincides with $\Pi$ everywhere but at $\pi$, which is mapped to $\rho$. 
Furthermore, $\suffix{\Pi}{j}$ denotes the trace assignment mapping every $\pi$ in $\Pi$'s domain to $\Pi(\pi)(j)\Pi(\pi)(j+1)\Pi(\pi)(j+2) \cdots $, its suffix from position $j$ onwards.
 For transition systems~$\tsys$ and paths assignments~$\Pi$ we define 
\begin{itemize}
	\item $(\tsys, \Pi) \models a_\pi$ if $a \in \lambda(\Pi(\pi)(0))$, where $\lambda$ is the labelling function of $\tsys$,
	\item $(\tsys, \Pi) \models \neg \psi$ if $(\tsys, \Pi) \not\models \psi$,
	\item $(\tsys, \Pi) \models \psi_1 \vee \psi_2 $ if $(\tsys, \Pi) \models \psi_1$ or $(\tsys, \Pi) \models \psi_2$,
	\item $(\tsys, \Pi) \models \X \psi$ if $(\tsys,\suffix{\Pi}{1}) \models \psi$,
	\item $(\tsys, \Pi) \models \psi_1 \U \psi_2$ if there exists a $j \ge 0$ such that $(\tsys,\suffix{\Pi}{j}) \models \psi_2$ and for all $0 \le j' < j$: $(\tsys,\suffix{\Pi}{j'}) \models \psi_1$, 
	\item $(\tsys, \Pi) \models \exists \pi.\ \phi$ if there exists a path~$\rho$ of $\tsys$, starting in $\last(\Pi)$, such that $(\tsys,\Pi[\pi \rightarrow \rho]) \models \phi$, and 
	\item $(\tsys, \Pi) \models \forall \pi.\ \phi$ if for all paths~$\rho$ of $\tsys$ starting in $\last(\Pi)$: $(\tsys,\Pi[\pi \rightarrow \rho]) \models \phi$. 
\end{itemize}
Here, $\last(\Pi)$ is the initial vertex of $\Pi(\pi)$, where $\pi$ is the path variable most recently added to $\Pi$, and the initial vertex of $\tsys$ if $\Pi$ is empty.\footnote{For the sake of simplicity, we refrain from formalising this notion properly, which would require to keep track of the order in which variables are added to $\Pi$'s domain.}
We say that $\tsys$ \emph{satisfies} a sentence~$\phi$ if $(\tsys, \Pi_\emptyset) \models \phi$. In this case, we write $\tsys \models \phi$ and say that $\tsys$ is a \emph{model} of $\phi$. 

\subparagraph*{Complexity Classes for Undecidable Problems.}
A type~$0$ object is a natural number~$n \in \nats$, a type~$1$ object is a function $f \colon \nats \rightarrow \nats$, and a type~$2$ object is a function~$f\colon (\nats \rightarrow \nats) \rightarrow \nats$.
As usual, predicate logic with quantification over type~$0$ objects (first-order quantifiers) is called first-order logic.
Second- and third-order logic are defined similarly.

We consider formulas of arithmetic, i.e.\ predicate logic with signature~$(0,1, +, \cdot, <)$ evaluated over the natural numbers. With a single free variable of type~$0$, such formulas define sets of natural numbers (see, e.g.\ Rogers~\cite{Rogers87} for more details):
\begin{itemize}
	\item $\Sigma_1^0$ contains the sets of the form $\set{x \in \nats \mid \exists x_0 \cdots \exists x_k.\ \psi(x, x_0, \ldots, x_k)}$ where $\psi$ is a quantifier-free arithmetic formula and the $x_i$ are variables of type~$0$. 
	\item $\Sigma_1^1$ contains the sets of the form $\set{x \in \nats \mid \exists x_0 \cdots \exists x_k.\ \psi(x, x_0, \ldots, x_k)}$ where $\psi$ is an arithmetic formula with arbitrary (existential and universal) quantification over type~$0$ objects and the $x_i$ are variables of type~$1$. 
	\item $\Sigma_1^2$ contains the sets of the form $\set{x \in \nats \mid \exists x_0 \cdots \exists x_k.\ \psi(x, x_0, \ldots, x_k)}$ where $\psi$ is an arithmetic formula with arbitrary (existential and universal) quantification over type~$0$ and type~$1$ objects and the $x_i$ are variables of type~$2$.
\end{itemize}
Note that there is a bijection between functions of the form~$f\colon \nats\rightarrow \nats$ and subsets of $\nats$, which is implementable in  arithmetic. 
Similarly, there is a bijection between functions of the form~$f \colon (\nats \rightarrow \nats)\rightarrow\nats$ and subsets of $\pow{\nats}$, which is again implementable in arithmetic.
Thus, whenever convenient, we use quantification over sets of natural numbers and over sets of sets of natural numbers, instead of quantification over type~$1$ and type~$2$ objects; in particular when proving lower bounds. We then include $\in$ in the signature.

\section{HyperLTL satisfiability is $\Sigma_1^1$-complete}
\label{sec:sat}
In this section we settle the complexity of the satisfiability problem for HyperLTL:
given a \hyltl sentence, determine whether it has a model.

\begin{theorem}\label{thm:hyltl-sat}
  \hyltl satisfiability is $\Sigma^1_1$-complete.
\end{theorem}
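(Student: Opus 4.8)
The plan is to prove the two directions separately, since they use quite different techniques.

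For the $\Sigma_1^1$ lower bound, I would reduce from the recurrent tiling problem, a standard $\Sigma_1^1$-complete problem: given a finite set of tile types with horizontal and vertical compatibility relations and a distinguished tile~$t_0$, decide whether $\nats \times \nats$ can be tiled such that $t_0$ occurs infinitely often in the bottom row. The idea is to encode a tiling as a set of traces, where each trace represents one ``cell'' (or one row) of the grid, carrying atomic propositions that record its tile type and its coordinates. The main modeling challenge is that \hyltl has no native way to talk about a two-dimensional grid, so I would use the temporal dimension of a trace to encode one axis (say, the horizontal position within a row) and use trace quantification together with a linear order encoded in the propositions to handle the vertical axis. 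A $\forall\pi.\,\exists\pi'$ pattern can enforce that for each cell there is a successor cell to the right and above, with matching tile-compatibility constraints expressed by $\G$-formulas comparing $a_\pi$ and $a_{\pi'}$; the recurrence condition ``$t_0$ infinitely often'' is captured by $\G\F$. I would have to be careful to force the encoded structure to genuinely be (or contain) a grid rather than some degenerate set of traces, which is the delicate part of any such reduction.

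For the $\Sigma_1^1$ upper bound, the key external input is the Finkbeiner--Zimmermann result~\cite{FZ17} that every satisfiable \hyltl sentence has a \emph{countable} model. This is what makes the problem analytical rather than higher: a countable set of traces can be coded by a single type~$1$ object (equivalently a subset of $\nats$, via a pairing of trace-index and position-and-proposition). So I would write the statement ``$\varphi$ is satisfiable'' as: there exists a countable model~$T$ (existential type~$1$ quantifier) together with, for each existential trace quantifier in $\varphi$, a Skolem function selecting a witness trace (again type~$1$ data), such that the quantifier-free LTL matrix holds along all relevant trace tuples. Crucially, once $T$ and the Skolem functions are fixed, verifying that the resulting trace assignment satisfies the LTL body is an \emph{arithmetic} condition with only type~$0$ quantifiers: temporal operators $\X$, $\U$, $\F$, $\G$ unfold into quantification over positions~$j \in \nats$, and universally-quantified trace variables unfold into type~$0$ quantification over indices into the countable set~$T$. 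Hence the whole statement has the shape ``$\exists$ type-$1$ objects. (arithmetic formula with only type-$0$ quantifiers)'', which is exactly $\Sigma_1^1$.

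The main obstacle I expect is the lower bound rather than the upper bound: the upper bound is essentially a bookkeeping argument once the countable-model theorem is cited, whereas the reduction from recurrent tiling requires a genuinely clever encoding of a two-dimensional grid into a set of infinite traces, and getting the quantifier pattern to \emph{force} the existence of an honest infinite grid (as opposed to admitting spurious models that trivially satisfy the local constraints) is where the real work lies. A natural approach is to make each trace encode a full row, using the temporal axis for the column index and a separate counter-proposition to enforce that rows are correctly stacked and that no cell is omitted, so that any model necessarily exhibits a consistent tiling. I would also note that the same encoding, with rows made ultimately periodic, should yield the stated strengthening that $\Sigma_1^1$-completeness persists when models are restricted to ultimately periodic traces.
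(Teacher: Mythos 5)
Your proposal follows essentially the same route as the paper on both sides: the lower bound is the paper's reduction from the recurrent tiling problem, using a marker proposition whose unique temporal position indexes the trace (the paper uses columns where you use rows, an immaterial symmetry), with a $\forall\pi_1.\exists\pi_2$ successor axiom, agreement of same-index traces, local matching via $\G$ and $\X$, and recurrence via $\G\F$; the upper bound likewise rests on the countable-model theorem of~\cite{FZ17} plus existentially quantified Skolem data. One substantive divergence in the upper bound deserves attention: you claim that, once the model and Skolem functions are fixed, the LTL matrix can be verified by ``unfolding'' the temporal operators into type~$0$ quantification --- but that unfolding produces an arithmetic formula whose quantifier structure depends on the input sentence~$\varphi$, so it does not by itself yield the single $\Sigma_1^1$ formula (uniform in the encoding of $\varphi$) that membership requires. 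The paper circumvents this by existentially quantifying one more type~$1$ object, an \emph{expansion} function recording the truth value of every quantifier-free subformula at every position, and then stating the (finitely many, uniform) consistency conditions with type~$0$ quantification over encoded subformulas and positions; alternatively, your version can be repaired by noting that $\Sigma_1^1$ is closed under computable many-one reductions and that truth of $\Sigma_1^1$ sentences is itself $\Sigma_1^1$. Finally, your closing remark that ``the same encoding, with rows made ultimately periodic'' gives the ultimately-periodic strengthening is too quick: columns of an arbitrary recurrent tiling need not be ultimately periodic, which is why the paper switches to the $\Sigma_1^1$-complete tiling problem for the region below the diagonal (padding with a ``null'' tile so that all traces are eventually constant) to prove that variant.
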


We should contrast this result with \cite[Theorem 1]{FHH18}, which shows that
\hyltl satisfiability by
\emph{finite} sets of ultimately periodic traces is $\Sigma^0_1$-complete. The $\Sigma^1_1$-completeness
of \hyltl satisfiability in the general case implies that, in particular, the set of
satisfiable \hyltl sentences is neither recursively enumerable nor co-recursively
enumerable. A semi-decision procedure, like the one introduced in \cite{FHH18}
for finite sets of ultimately periodic traces, therefore cannot exist in general.

\begin{proof}[Proof sketch]
The $\Sigma^1_1$ upper bound relies on the fact that every satisfiable \hyltl
formula has a countable model~\cite{FZ17}. 
This allows us to represent these models, and Skolem functions on them, by sets of natural numbers, which are 
type~$1$ objects. In this encoding, trace assignments are type~$0$ objects, as traces in a countable set can be identified by natural numbers. With some more existential type~$1$ quantification one can then express the existence of a function witnessing that every trace assignment consistent with the Skolem functions satisfies the quantifier-free part of the formula under consideration. Hence, \hyltl
satisfiability is in $\Sigma^1_1$.

\smallskip

We show hardness of HyperLTL satisfiability by a reduction from the \emph{recurring tiling problem}
which is given as follows.
A \emph{tile} is a function $\tau\colon\{\east,\west,\north,\south\}\to \mathcal{C}$
that maps directions into a finite set $\mathcal{C}$ of colours.
Given a finite set $\mathcal{T}$ of tiles, a \emph{tiling of the positive quadrant}
with $\mathcal{T}$ is a function $T:\mathbb{N}\times\mathbb{N}\to\mathcal{T}$ with the property that:
\begin{itemize}
	\item if $T(i,j)=\tau_1$ and $T(i+1,j)=\tau_2$, then $\tau_1(\east)=\tau_2(\west)$ and
	\item if $T(i,j)=\tau_1$ and $T(i,j+1)=\tau_2$ then $\tau_1(\north)=\tau_2(\south)$.
\end{itemize}
The \emph{recurring tiling problem} is to determine, given a finite set $\mathcal{T}$ of tiles and a designated $\tau_0\in \mathcal{T}$, whether there is a tiling~$T$ of the positive quadrant with $\mathcal{T}$ such that there are infinitely many $j\in\mathbb{N}$ such that $T(0,j)=\tau_0$. 
This problem is known to be $\Sigma_1^1$-complete \cite{Harel85}, so if we reduce it to \hyltl 
satisfiability this will establish the desired hardness result. 

The first step in reducing the tiling problem to \hyltl satisfiability is to create a formula that is satisfied only in models that can be interpreted as an $\mathbb{N}\times \mathbb{N}$ grid.
 We do this by using a designated atomic proposition~$x$ that in each trace holds at exactly one time. Furthermore, we require that each trace in a model can be identified by the time at which $x$ holds.
That is, for every $i\in\N$ there is at least one trace where $x$ holds at position $i$, and furthermore we require this trace to be unique.
Hence, we can interpret this trace as $\set{i} \times \nats$.
This will allow us to interpret the entire model as $\N\x\N$.
Note that this means that a trace is a vertical slice of $\nats \times \nats$. We choose to do so, as it simplifies the reduction. 

If we treat each tile in $\mathcal{T}$ as an atomic proposition, it is then easy to create a formula that is satisfied exactly if each point of such a grid-like model contains exactly one tile, the tile colours match and the tile $\tau_0$ occurs infinitely often in $\{0\}\times \mathbb{N}$. The (prenex normal form of the) conjunction of the formula establishing the grid and the formula enforcing a recurring tiling then holds if and only if there is a recurring tiling of the positive quadrant with $\mathcal{T}$.

See \cref{sec:hyltlsat-proofs} for a more detailed proof.
\end{proof}

The problem of whether there is a tiling of $\{(i,j)\in \mathbb{N}^2\mid i\geq j\}$, i.e.\ the part of $\mathbb{N}\times\mathbb{N}$ below the diagonal, such that a designated tile $\tau_0$ occurs on every row, is also $\Sigma_1^1$-complete~\cite{Harel85}.\footnote{The proof in \cite{Harel85} is for the part \emph{above} the diagonal with $\tau_0$ occurring on every column, but that is easily seen to be equivalent.} This problem can similarly be reduced to the satisfiability of \hyltl sentences on sets of ultimately periodic traces, see \cref{sec:hyltlsat-proofs} for details. It follows that \hyltl satisfiability on ultimately periodic traces is also $\Sigma^1_1$-hard. A slightly modified version of the $\Sigma^1_1$-membership proof also applies, so the problem is $\Sigma^1_1$-complete.

\begin{theorem}
\label{thm:hyltlsatup-copmleteness}
  HyperLTL satisfiability restricted to sets of ultimately periodic traces is $\Sigma^1_1$-complete.
\end{theorem}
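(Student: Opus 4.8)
The plan is to establish both directions by adapting the arguments behind \cref{thm:hyltl-sat}. For the lower bound I would reduce from the below-diagonal variant of the recurring tiling problem, exploiting the fact that its columns are \emph{finite}; for the upper bound I would observe that every set of ultimately periodic traces is automatically countable, so the $\Sigma_1^1$-membership argument goes through essentially unchanged and is in fact slightly simpler.

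For $\Sigma_1^1$-hardness, recall that deciding whether $\{(i,j) \in \N^2 \mid i \geq j\}$ admits a tiling in which $\tau_0$ occurs on every row is $\Sigma_1^1$-complete. The crucial point is that, when traces are read as vertical slices, the slice at column $i$ consists only of the cells $(i,0), \ldots, (i,i)$ and is therefore finite. I would encode column $i$ by an ultimately periodic trace carrying, at positions $0$ through $i$, one tile proposition per position, the marker $x$ at position $i$, and a fixed blank symbol at every position greater than $i$; the blank tail is exactly what makes each trace ultimately periodic. A structural formula then enforces that $x$ holds exactly once per trace, that for every $i \in \N$ there is a unique trace with $x$ at position $i$, and that exactly one tile sits at each cell at or below the marker while only blanks occur above it. Vertical colour matching is internal to a trace and expressed with $\G$; horizontal matching between columns $i$ and $i+1$ is expressed by selecting, for a trace $\pi$, the trace $\pi'$ whose marker is one step later (via $\F(x_\pi \wedge \X x_{\pi'})$) and demanding $\east_\pi = \west_{\pi'}$ at every valid height. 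The recurrence ``$\tau_0$ on every row'' becomes $\forall \pi.\ \exists \pi'.\ \F(x_\pi \wedge (\tau_0)_{\pi'})$: as $\pi$ ranges over all columns its marker position $j$ ranges over all rows, and a trace $\pi'$ carrying $\tau_0$ at height $j$ necessarily has its marker at some position $\geq j$, hence encodes a legitimate cell $(i,j)$ with $i \geq j$. The prenex normal form of the conjunction of these formulas is satisfiable by a set of ultimately periodic traces iff the required tiling exists.

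For $\Sigma_1^1$-membership, the key simplification is that there are only countably many ultimately periodic traces over the finite alphabet $\pow{\ap}$, each being finitely represented by a pair of finite words. Consequently every model consisting of ultimately periodic traces is countable, so I need not even invoke the countable-model theorem of \cite{FZ17}: I may directly identify such a model with a set $T \subseteq \N$ of codes, a type-$1$ object, and identify each trace in it by its code, a type-$0$ object. I would then existentially quantify (type-$1$) over $T$ together with Skolem functions for the existential trace quantifiers, and universally quantify (type-$0$) over the finitely many trace codes making up an assignment consistent with those Skolem functions. Because each such assignment now consists of finitely many finitely represented ultimately periodic traces, whether it satisfies the quantifier-free LTL matrix is decidable, so this innermost condition is arithmetical and no auxiliary existential witness function is required. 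The resulting formula has shape $\exists (\text{type }1).\ \forall (\text{type }0).\ (\text{arithmetical})$, witnessing membership in $\Sigma_1^1$.

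The main obstacle I anticipate lies in the hardness direction: the structural formula must simultaneously force the blank tail, so that every trace in \emph{any} model is genuinely ultimately periodic, and pin down the column/row geometry precisely enough that the simple recurrence formula really captures ``$\tau_0$ on every row''. In particular I must guarantee that \emph{every} tile, and $\tau_0$ in particular, can occur only at a valid cell below the diagonal, so that the presence of $\tau_0$ at height $j$ in $\pi'$ is equivalent to a legitimate placement at a cell $(i,j)$ with $i \geq j$. Getting the interaction between the marker $x$ and the blank region exactly right, while keeping the whole formula in prenex normal form, is the delicate part of the construction.
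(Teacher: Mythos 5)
Your proposal is correct and takes essentially the same approach as the paper: the paper's proof is exactly a reduction from tiling $\{(i,j)\in\mathbb{N}^2 \mid i\geq j\}$ with $\tau_0$ on every row, using the same three modifications you describe (a special blank/``null'' tile at every position beyond the marker $x$, matching checked only at or before the marker, and the recurrence formula $\forall \pi_1.\ \exists \pi_2.\ \F(x_{\pi_1}\wedge (\tau_0)_{\pi_2})$), together with a ``slightly modified'' version of the $\Sigma_1^1$-membership proof. Your membership details---countability being automatic for sets of ultimately periodic traces, and the decidability of the quantifier-free matrix on finitely coded traces, which lets you drop both the appeal to~\cite{FZ17} and the existentially quantified expansion function---are a correct instantiation of the modification the paper leaves unspecified.
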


\section{The \hyltl Quantifier Alternation Hierarchy}
\label{sec:hierarchy}
The number of quantifier alternations in
a formula is a crucial parameter in the complexity of \hyltl model-checking
\cite{FinkbeinerRS15,Rabe16}.
A natural question is then to understand \emph{which} properties
can be expressed with $n$ quantifier alternations, that is, given
a sentence~$\varphi$, determine if there exists an equivalent one with at
most $n$ alternations.
In this section, we show that this problem is in fact exactly as hard as the
\hyltl unsatisfiability problem (which asks whether a \hyltl sentence has no model),
and therefore $\Pi^1_1$-complete. Here, $\Pi_1^1$ is the co-class of $\Sigma_1^1$, i.e.\ it contains the complements of the $\Sigma_1^1$ sets.

Formally, the \hyltl quantifier alternation hierarchy is defined as follows.
Let $\varphi$ be a \hyltl formula. We say that $\varphi$ is a $\Sigma_0$- or
a $\Pi_0$-formula if it is quantifier-free. It is a $\Sigma_n$-formula
if it is of the form $\varphi = \exists \pi_1 \cdots \exists \pi_k.\ \psi$
and $\psi$ is a $\Pi_{n-1}$-formula. It is a $\Pi_n$-formula if
it is of the form $\varphi = \forall \pi_1 \cdots \forall \pi_k.\ \psi$ and
$\psi$ is a $\Sigma_{n-1}$-formula.
We do not require each block of quantifiers to be non-empty, i.e.\ we may have
$k=0$ and $\varphi = \psi$.
By a slight abuse of notation, we also let $\Sigma_n$ denote the set of hyperproperties
definable by a $\Sigma_n$-sentence, that is, the set of all 
$L(\varphi) = \{ T \subseteq (2^\AP)^\omega \mid T \models \varphi\}$
such that $\varphi$ is a $\Sigma_n$-sentence of \hyltl.

\begin{theorem}[{\cite[Corollary 5.6.5]{Rabe16}}]
  The quantifier alternation hierarchy of \hyltl is strict:
  for all $n > 0$, $\Sigma_n \subsetneq \Sigma_{n+1}$.
\end{theorem}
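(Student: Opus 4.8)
The plan is to transfer strictness from a hierarchy already known to be strict, namely the quantifier-alternation hierarchy of first-order logic over $\omega$-words. By Thomas' theorem~\cite{Thomas81} this \fol hierarchy is strict, so for every $n>0$ I can fix an $\omega$-language $L_n$ that is definable by a $\Sigma_{n+1}$-sentence of \fol but by no $\Sigma_n$-sentence. The goal is to manufacture from $L_n$ a hyperproperty $H_n$ that sits in the $(n+1)$-th but not the $n$-th level of the \hyltl hierarchy.

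First I would fix an encoding of $\omega$-words as sets of traces. Using a fresh proposition $m$, I represent the word-position $i$ by a single trace $t_i$ that carries $m$ together with the letter at position~$i$ at one designated time and is empty everywhere else; the set $T_w=\{t_i\mid i\in\nats\}$ is a \emph{valid encoding} of $w$, and on it the traces are in order-preserving bijection with $\nats$. I then set up a forward translation sending a $\Sigma_{n+1}$-sentence of \fol to a $\Sigma_{n+1}$-sentence $\hat\Phi$ of \hyltl that maps each block of position quantifiers to a block of like trace quantifiers, and each first-order atom to a guarded temporal formula: the order atom $i<i'$ becomes $\F(m_\pi\wedge\X\F m_{\pi'})$ and a letter test becomes $\F(m_\pi\wedge a_\pi)$. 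Crucially, I do \emph{not} axiomatise validity inside the formula; I simply define $H_n=\{T\mid T\models\hat\Phi\}$, which lies in $\Sigma_{n+1}$ by construction, and only use that on the specific valid encodings $T_w$ one has $T_w\models\hat\Phi$ iff $w\in L_n$.

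For the lower bound I would argue by contradiction. Suppose $H_n$ were $\Sigma_n$-definable, say by a \hyltl-sentence $\Psi$; then $\Psi$ and $\hat\Phi$ define the same hyperproperty, so they agree on every $T_w$. I would back-translate $\Psi$: on a valid encoding its trace quantifiers range exactly over the traces $t_i$, hence become position quantifiers over $\nats$, and its quantifier-free temporal matrix becomes a quantifier-free \fol-formula in the chosen positions. Since on a valid encoding the finitely many selected traces are empty except at their marked positions, every temporal operator only ever inspects the relative position of these finitely many marks and therefore collapses to a finite Boolean combination of order and letter atoms. The result is a $\Sigma_n$-sentence of \fol defining $L_n$, contradicting the choice of $L_n$; hence $H_n\notin\Sigma_n$ and $\Sigma_n\subsetneq\Sigma_{n+1}$ for all $n>0$.

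The main obstacle is exactly this collapse, and within it the next-operator $\X$. Over the sparse encoding, $\F$, $\G$ and $\U$ see only the order of the marks and so translate to quantifier-free \fol, but $\X$ can detect \emph{exact successor}, $i'=i+1$, which is not quantifier-free \fol-definable. I would neutralise it by spacing the marks out in time so that no two of them are ever adjacent and $\X$ carries no information about distinct marks (matching \fol), or, alternatively, allow successor in the matrix and invoke the strictness of the dot-depth hierarchy~\cite{CohenB71} instead. The point of all this care is that both translations must incur \emph{zero} alternation offset: it is this exact, level-preserving correspondence between position and trace quantifiers that lets the strictness of the first-order hierarchy propagate, level by level, to \hyltl.
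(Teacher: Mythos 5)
Your proposal follows essentially the same route as the paper's appendix proof: encode words as sets of traces carrying a fresh marker proposition, translate \fol sentences into \hyltl sentences with the identical quantifier prefix, and back-translate any putatively equivalent $\Sigma_n$-sentence by collapsing its quantifier-free temporal matrix over sufficiently spaced encodings into a Boolean combination of order and letter atoms---where the paper's stretch functions and Ehrenfeucht--Fra\"iss\'e claim make your $\X$-neutralisation precise, the one quantitative correction being that the marks must be spaced beyond the \emph{temporal depth} of the back-translated sentence (the paper takes spacing $N=\depth(\psi)+1$), since merely ``never adjacent'' still lets iterated $\X$ detect bounded distances. Apart from that point, and your use of $\omega$-words where the paper encodes finite words (a difference that matters for the stronger \cref{thm:strictness-finite} needed later, but not for the strictness statement itself), the two arguments coincide.
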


The strictness of the hierarchy also holds if we restrict our attention to
sentences whose models consist of finite sets of traces that end in the suffix~$\emptyset^\omega$, i.e.\ that are essentially finite.

\begin{theorem}\label{thm:strictness-finite}
  For all $n > 0$, there exists a $\Sigma_{n+1}$-sentence $\varphi$ of \hyltl
  that is not equivalent to any $\Sigma_{n}$-sentence, and such that for all
  $T \subseteq {(2^{\AP})}^\omega$, if $T \models \varphi$ then $T$ contains
  finitely many traces and $T \subseteq {(2^{\AP})}^\ast\emptyset^\omega$.
\end{theorem}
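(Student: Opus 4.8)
My plan is to prove Theorem~\ref{thm:strictness-finite} by reducing the question to the strictness of the first-order alternation hierarchy over finite words, which is a classical result \cite{CohenB71,Thomas81}. The excerpt already announces this strategy (``a novel proof of strictness by embedding the \fol alternation hierarchy''), so I would make that embedding explicit and, crucially, arrange the construction so that the witnessing \hyltl sentences are satisfiable only by essentially finite models.

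\textbf{Step 1: Encode finite words as essentially finite trace sets.} I would fix a way of representing a finite word $w = w_0 \cdots w_{m-1}$ over some alphabet as a finite set~$T_w$ of traces, each ending in $\emptyset^\omega$. The natural choice is to use one distinguished proposition, say~$p$, that marks a unique position per trace (as in the satisfiability reduction of \cref{thm:hyltl-sat}), together with propositions encoding the letter at that position; then each position~$i$ of~$w$ corresponds to exactly one trace in~$T_w$, and the linear order on positions is recovered from the position of~$p$. The key point is that such a~$T_w$ is finite and contained in ${(2^{\AP})}^\ast\emptyset^\omega$, matching the conclusion of the theorem. I would write a \hyltl sentence~$\phi_{\mathrm{enc}}$ (with a fixed small number of quantifiers) that forces any model to have exactly this shape, i.e.\ to be $T_w$ for some finite word~$w$.

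\textbf{Step 2: Translate \fol formulas over words into \hyltl.} Given an \fol${[<]}$ sentence~$\alpha$ over words, I would define a translation~$\alpha \mapsto \widehat{\alpha}$ into \hyltl such that $T_w \models \widehat{\alpha}$ iff $w \models \alpha$. First-order quantification over positions becomes trace quantification (each trace encodes one position), the order~$<$ between positions is expressible as a quantifier-free \hyltl formula comparing the positions of the~$p$-markers of two traces using~$\X$ and~$\U$, and letter predicates become the corresponding propositions. The essential bookkeeping is that this translation is \emph{alternation-preserving}: an \fol sentence with~$n$ quantifier blocks maps to a \hyltl sentence in~$\Sigma_n$ (modulo the fixed overhead from~$\phi_{\mathrm{enc}}$, which I would fold into the outermost block so as not to increase the alternation count).

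\textbf{Step 3: Transfer strictness and the main obstacle.} For each~$n$, the \fol strictness theorem gives a~$\Sigma_{n+1}(\FOw)$ sentence~$\alpha_n$ not equivalent to any~$\Sigma_n(\FOw)$ sentence over finite words. I would take $\varphi := \phi_{\mathrm{enc}} \wedge \widehat{\alpha_n}$, which lies in~$\Sigma_{n+1}$ and is satisfied only by essentially finite~$T_w$'s. Suppose for contradiction that~$\varphi$ were equivalent to some~$\Sigma_n$-sentence~$\psi$. \emph{The hard part} is then the \emph{converse} direction of the embedding: I must translate back, showing that such a~$\psi$, when restricted to the essentially finite models of the form~$T_w$, yields a~$\Sigma_n(\FOw)$ sentence equivalent to~$\alpha_n$, contradicting \fol strictness. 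The obstacle is that an arbitrary \hyltl sentence~$\psi$ need not respect the encoding: its trace quantifiers range over \emph{all} traces in~$T_w$, and its temporal operators may inspect the suffix structure beyond the single~$p$-marked position. I would handle this by proving that, \emph{on models of the form~$T_w$}, every \hyltl formula is equivalent to one in a normal form whose quantifiers effectively range only over ``position-encoding'' traces and whose quantifier-free part is equivalent to a Boolean combination of position comparisons and letter tests, i.e.\ to the image of some \fol formula under the Step~2 translation, with the same alternation depth. Establishing this normal form --- in particular that temporal operators do not secretly add first-order quantifier alternations and that the back-translation preserves the~$\Sigma_n$ level --- is the technically delicate core of the argument; everything else is routine bookkeeping about the encoding.
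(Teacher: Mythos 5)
Your overall route is the same as the paper's (Appendix~\ref{sec:translations}): encode finite words as essentially finite trace sets with one marked position per trace, translate \fol forward into \hyltl preserving the quantifier prefix, and derive strictness by translating a hypothetical $\Sigma_n$-equivalent back. You also correctly identify the back-translation as the crux. But the normal-form claim you rest it on is false for the \emph{canonical, rigid} encoding you fix in Step~1. If each word $w$ has a single encoding $T_w$ in which the marker of the trace for position $i$ sits at a fixed time determined by $i$, then the quantifier-free part of a \hyltl formula can compare markers up to bounded \emph{distance}, not merely order: $\F(p_{\pi} \land \X p_{\pi'})$ says ``the position encoded by $\pi'$ is the successor of that encoded by $\pi$'', and iterating $\X$ yields $y = x + k$ for each fixed $k$. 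These predicates are not quantifier-free in \fol: already $y = x+1$ costs a quantifier ($x < y \land \lnot \exists z.\ x < z \land z < y$), so the quantifier-free kernel of a $\Sigma_n$ \hyltl sentence back-translates only into a Boolean combination of $\Sigma_1/\Pi_1$ conditions, landing in $\Sigma_{n+1}$ rather than $\Sigma_n$ --- and at that level the strictness of the \fol hierarchy gives no contradiction. So ``temporal operators do not secretly add first-order quantifier alternations'' is exactly the point that fails under a rigid encoding; it is not delicate bookkeeping but the place where the argument breaks.

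The missing idea is the paper's stretch-invariance device. One must \emph{not} fix a canonical encoding: the paper admits all encodings $\enc w f$ for arbitrary stretch functions $f$, shows that forward images $\encPhi \varphi$ are stretch-invariant (\cref{lem:stretchinv}), hence so is any $\Sigma_n$-sentence $\psi$ equivalent to one, and may therefore be evaluated on the $N$-stretched encodings $\encn w N$ with $N = \depth(\psi)+1$. At spacing $N$, an Ehrenfeucht--Fra\"iss\'e argument (Claim~\ref{claim:EF} inside \cref{lem:QFsimpl}) shows a quantifier-free formula of temporal depth less than $N$ sees only the \emph{order} of the markers and the initial labels, so it is equivalent to a formula in $\simpleHyperLTL$, i.e., the image of an \fol formula with the same prefix; \cref{lem:Qsimpl,lem:eqsimpl} then complete the back-translation at the same $\Sigma_n$ level. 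A secondary simplification you miss: your $\phi_{\mathrm{enc}}$ axiomatizing ``the model is exactly some $T_w$'' is both unnecessary and risky --- equivalence over all trace sets restricts to equivalence over encodings for free, so no exact axiomatization is needed, and your plan to fold a surjectivity clause of shape $\forall\pi_1.\exists\pi_2$ into the outermost block does not type-check at $n = 1$ (a $\Pi_2$ conjunct does not merge into a $\Sigma_2$ prefix). The paper instead conjoins only the $\Sigma_2$ finiteness sentence $\exists \pi.\ \forall \pi'.\ (\F \G \emptyset_\pi) \land \G(\G \emptyset_\pi \rightarrow \G \emptyset_{\pi'})$, which every encoding satisfies, to obtain the ``finitely many traces in ${(2^{\AP})}^\ast\emptyset^\omega$'' clause of \cref{thm:strictness-finite}.
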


This fact 
is a necessary ingredient for our argument that membership
at some fixed level of the alternation hierarchy is $\Pi_1^1$-hard.
It could be derived from a small adaptation of the proof in~\cite{Rabe16},
and we provide an alternative proof
in Appendix~\ref{sec:translations} by exhibiting a connection between the \hyltl quantifier alternation hierarchy and the quantifier alternation hierarchy for first-order logic over finite
words, which is known to be strict~\cite{CohenB71,Thomas82}.

Our goal is to prove the following.

\begin{theorem}\label{thm:Pi11}
  Fix $n > 0$. The problem of deciding whether a \hyltl sentence is equivalent
  to some $\Sigma_n$-sentence is $\Pi^1_1$-complete.
\end{theorem}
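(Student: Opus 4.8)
The plan is to prove $\Pi^1_1$-completeness of the ``equivalent to some $\Sigma_n$-sentence'' problem by establishing membership and hardness separately. For $\Pi^1_1$-membership, I would express the property ``$\varphi$ is equivalent to some $\Sigma_n$-sentence'' using the machinery already developed for Theorem~\ref{thm:hyltl-sat}. The natural formulation is: there exists a $\Sigma_n$-sentence $\psi$ such that for all sets $T$ of traces, $T \models \varphi \iff T \models \psi$. The existential quantification over $\psi$ is merely a quantification over a natural number (a Gödel code of a formula), hence type~$0$. The difficulty is the universal quantification over \emph{all} sets of traces $T$, which are uncountable objects. Here I would again invoke the countable-model property from~\cite{FZ17}: since any witness to inequivalence $\varphi \not\equiv \psi$ is a set $T$ satisfying exactly one of the two sentences, and each of $\varphi \wedge \neg\psi$ and $\neg\varphi \wedge \psi$ (after prenexing) is itself a \hyltl sentence, by the countable-model property it suffices to quantify over \emph{countable} sets $T$, which are type~$1$ objects. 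Thus ``$\varphi \equiv \psi$'' is $\Pi^1_1$ (a universal type~$1$ quantifier over the countable candidate distinguisher, with the $\models$ relations being $\Sigma^1_1$/$\Pi^1_1$ as in the satisfiability proof), and prefixing the type~$0$ existential over $\psi$ keeps us in $\Pi^1_1$.

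For $\Pi^1_1$-hardness, the goal is to reduce \hyltl unsatisfiability (which is $\Pi^1_1$-complete as the co-class of Theorem~\ref{thm:hyltl-sat}) to the level-$n$ membership problem. Given an arbitrary \hyltl sentence $\vartheta$, I want to construct a sentence $\varphi_\vartheta$ that is equivalent to a $\Sigma_n$-sentence if and only if $\vartheta$ is unsatisfiable. The strategy is to combine $\vartheta$ with a fixed ``hard'' witness to strictness. By Theorem~\ref{thm:strictness-finite}, fix a $\Sigma_{n+1}$-sentence $\chi$ that is not equivalent to any $\Sigma_n$-sentence, and whose models are all essentially finite (finite sets of traces ending in $\emptyset^\omega$). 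I would arrange $\varphi_\vartheta$ so that its behaviour degenerates to something $\Sigma_n$-definable precisely when $\vartheta$ is unsatisfiable, and otherwise reproduces the hard behaviour of $\chi$. Concretely, on disjoint copies of the alphabet, I would build $\varphi_\vartheta$ to assert, roughly, ``either the trace set witnesses $\chi$ (on one part of the alphabet) guarded by satisfiability of $\vartheta$ (on another part), or it is one of a few trivially $\Sigma_n$-definable configurations.''

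The key calibration is the following. If $\vartheta$ is unsatisfiable, the guard can never be met, so $\varphi_\vartheta$ collapses to a sentence describing only the trivial configurations, which I will ensure is $\Sigma_n$-definable (indeed $\Sigma_0$- or $\Sigma_1$-definable). If $\vartheta$ is satisfiable, then composing a model of $\vartheta$ with the $\chi$-part lets $\varphi_\vartheta$ inherit the full distinguishing power of $\chi$ across the levels, so that $\varphi_\vartheta$ is \emph{not} equivalent to any $\Sigma_n$-sentence. Making this biconditional precise is the crux: I must show both that satisfiability of $\vartheta$ genuinely transports the non-$\Sigma_n$-definability of $\chi$ to $\varphi_\vartheta$, and that unsatisfiability genuinely trivialises $\varphi_\vartheta$ down to a $\Sigma_n$ level. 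The cleanest device is to let $\vartheta$ operate on a fresh set of propositions so that conjoining it with $\chi$ (and prenexing) raises the quantifier prefix in a controlled way, while ensuring that in the unsatisfiable case the whole sentence becomes equivalent to $\false$ (or to a fixed low-level sentence), which is vacuously $\Sigma_n$.

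The main obstacle I expect is the hardness direction, specifically the forward transport: proving that when $\vartheta$ is satisfiable, the composite $\varphi_\vartheta$ strictly requires level $n+1$ and is not expressible at level $n$. This requires an inseparability or reduction argument showing that any hypothetical $\Sigma_n$-sentence equivalent to $\varphi_\vartheta$ could be projected back to a $\Sigma_n$-sentence equivalent to $\chi$, contradicting Theorem~\ref{thm:strictness-finite}. Controlling the interaction between the $\vartheta$-guard and the quantifier prefix of $\chi$ under prenexing---so that satisfiability of $\vartheta$ adds no alternation that could spuriously lower or raise the effective level---is the delicate engineering step; the essential-finiteness clause of Theorem~\ref{thm:strictness-finite} is what gives the room needed to keep the construction tight, and I would rely on it to argue that the trivial configurations in the unsatisfiable case can be separated cleanly from the hard ones.
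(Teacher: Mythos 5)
Your architecture coincides with the paper's. For the upper bound, your argument is essentially the one in the paper: enumerate (G\"odel codes of) $\Sigma_n$-sentences $\psi$, observe that $\varphi \equiv \psi$ holds if and only if $(\lnot\varphi \land \psi) \lor (\varphi \land \lnot\psi)$ is unsatisfiable, which is $\Pi^1_1$ by \cref{thm:hyltl-sat} (your unpacking via the countable-model property of~\cite{FZ17} is exactly what makes that work), and use closure of $\Pi^1_1$ under number quantification. For hardness, your plan --- reduce from \hyltl unsatisfiability by pairing the input $\vartheta$ with a strictness witness $\chi$ from \cref{thm:strictness-finite} in a composite model, so that unsatisfiability of $\vartheta$ collapses the composite to an unsatisfiable (hence trivially $\Sigma_n$) sentence --- is also the paper's plan.

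However, the step you defer as ``the delicate engineering step'' is precisely where the entire technical content of the lower bound lies, and your proposal neither supplies it nor chooses a composition under which it would go through. What is needed is a decomposition lemma (the paper's \cref{lem:split}): every $\Sigma_n$-sentence, evaluated on composite models, is equivalent to a \emph{finite disjunction} of conjunctions $\varphi^i_\ell \land \varphi^i_r$ of $\Sigma_n$-sentences speaking only about the $\chi$-part and only about the $\vartheta$-part, respectively. Given this, one fixes a model $T_\varphi$ of $\vartheta$ and selects the disjuncts with $T_\varphi \models \varphi^i_r$, turning a hypothetical $\Sigma_n$-equivalent of the composite into $\phib \land \bigvee_i \varphi^i_\ell$, a $\Sigma_n$-definition of $\chi$ --- the desired contradiction. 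Your specific device, disjoint copies of the alphabet, is moreover unlikely to support such a decomposition: a candidate $\Sigma_n$-sentence over a joint model may couple the two parts temporally, e.g.\ via a conjunct $\F(a_\pi \land b_{\pi'})$ with $\pi$ a $\chi$-trace and $\pi'$ a $\vartheta$-trace, and such formulas do not factor into Boolean combinations of one-sided formulas. The paper decouples the parts by splitting the \emph{time domain} as well as the trace set ($\$$-padding, so that every joint trace assignment becomes a word in $\Sigmal^\ast \cdot \Sigmar^\omega$ over disjoint alphabets); the quantifier-free base case then reduces to a concatenation-splitting property of LTL proved via counter-free automata (\cref{lem:splitLTL}), and the universal-quantifier case requires proving a simultaneous dual statement for $\Pi_n$ with a finite choice-function argument so that the decomposition does not increase alternation depth. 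The boundedness clause of \cref{thm:strictness-finite} is what makes this time split well defined and \hyltl-definable (\cref{lemma:bounded}); your appeal to essential finiteness only for ``separating trivial configurations'' misses this role. Without these ingredients the biconditional calibration you describe remains an unproven claim, so the hardness direction has a genuine gap.
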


The easier part is the upper bound, since a corollary of
\cref{thm:hyltl-sat} is that the problem of deciding whether two
\hyltl formulas are equivalent is $\Pi^1_1$-complete.
The lower bound is proven by reduction from the \hyltl unsatisfiability problem.
The proof relies on Theorem~\ref{thm:strictness-finite}: given a sentence
$\varphi$, we are going to combine $\varphi$ with some $\Sigma_{n+1}$-sentence
$\varphi_{n+1}$ witnessing the strictness of the hierarchy, to construct a sentence $\psi$
such that $\varphi$ is unsatisfiable if and only $\psi$ is equivalent to
a $\Sigma_n$-sentence.
Intuitively, the formula $\psi$ will describe models consisting of the
``disjoint union'' of a model of $\varphi_{n+1}$ and a model of~$\varphi$.
Here ``disjoint'' is to be understood in a strong sense: we split both
the set of traces and the time domain into two parts, used respectively to
encode the models of $\varphi_{n+1}$ and those of $\varphi$.

\begin{figure}
\centering
  \begin{tikzpicture}[yscale=0.6,font=\small]
    \node (03) at (0,3) {$\{a,b\}$};
    \node (13) at (1,3) {$\{a\}$};
    \node (23) at (2,3) {$\{a\}$};
    \node (02) at (0,2) {$\{b\}$};
    \node (12) at (1,2) {$\emptyset$};
    \node (22) at (2,2) {$\{a\}$};
    \begin{pgfonlayer}{background}
      \draw[teal,fill,fill opacity = 0.2,semithick]
      (03.north west) rectangle (22.south east) ;
    \end{pgfonlayer}
    \node[teal] at (-1,2.5) {\large $T_\ell$} ;
    
    \node (31) at (3,1) {$\{a\}$};
    \node (41) at (4,1) {$\{a\}$};
    \node (51) at (5,1) {$\{a,b\}$};
    \node (61) at (6,1) {$\emptyset$};
    \node (71) at (7,1) {$\{a\}$};
    \node (81) at (8,1) {${\cdots}$};
    \node (30) at (3,0) {$\{b\}$};
    \node (40) at (4,0) {$\{a\}$};
    \node (50) at (5,0) {$\{b\}$};
    \node (60) at (6,0) {$\{a,b\}$};
    \node (70) at (7,0) {$\{a\}$};
    \node (80) at (8,0) {${\cdots}\vphantom{\{a\}}$};
    \begin{pgfonlayer}{background}
      \draw[brown,fill, fill opacity = 0.2,semithick]
      (31.north west) rectangle (80.south east) ;
      \node[brown] at (9,0.5) {\large $T_r$} ;
    \end{pgfonlayer}

    \foreach \i in {0,...,2} {
      \foreach \j in {0,1} {
        \node (ij) at (\i,\j) {$\{ \$ \}$} ;
      }
    }
    \foreach \i in {3,...,7} {
      \foreach \j in {2,3} {
        \node (ij) at (\i,\j) {$\{ \$ \}$} ;
      }
    }
    \node at (8,3) {${\cdots}$};
    \node at (8,2) {${\cdots}$};
  \end{tikzpicture}
  \caption{Example of a split set of traces where
  each row represents a trace and $b=3$.
      \label{fig:split}}
\end{figure}

\smallskip

To make this more precise, let us introduce some notations.
We assume a distinguished symbol $\$ \notin \AP$.
We say that a set of traces $T \subseteq {(2^{\AP\cup \{\$\}})}^\omega$ is
\emph{bounded} if there exists $b \in \Nat$ such that
$T \subseteq {(2^{\AP})}^b \cdot \{\$\}^\omega$.
We say that $T$ is \emph{split} if there exist $b \in \Nat$ and $T_1$, $T_2$
such that $T = T_1 \uplus T_2$,
$T_1 \subseteq {(2^{\AP})}^b \cdot \{\$\}^\omega$, and
$T_2 \subseteq \{\$\}^b \cdot {(2^{\AP})}^\omega$.
Note that $b$ is unique here.
Hence, we define the left and right part of $T$ as $\Tl = T_1$ and
$\Tr = \{t \in {(2^{\AP})}^\omega \mid \{\$\}^b \cdot t \in T_2\}$, respectively
(see \cref{fig:split}).

It is easy to combine \hyltl specifications for the left and right part
of a split model into one global formula
(cf.\ Appendix~\ref{sec:hierarchy-proofs}).

\begin{lemma}
    \label{lem:dupsi}
  For all \hyltl sentences~$\varphi_\ell, \varphi_r$, one can construct
  a sentence $\psi$ such that for all split 
  $T \subseteq {(2^{\AP\cup \{\$\}})}^\omega$,
  it holds that   $\Tl \models \varphi_\ell$ and $\Tr \models \varphi_r$ if and only if $T \models \psi$.
\end{lemma}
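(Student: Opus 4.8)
The plan is to build $\psi$ as the conjunction of two relativised copies of $\varphi_\ell$ and $\varphi_r$, reading the left and right parts of a split $T$ off the single set of traces~$T$. The key observation is that in a split $T = T_1 \uplus T_2$ the two components are distinguishable by their eventual $\$$-behaviour: a trace of $T_1$ carries $\$$ at every position from $b$ onwards, whereas a trace of $T_2$ carries $\neg\$$ at every position from $b$ onwards. Hence I would set the two guards
\[
  \mathrm{left}(\pi) := \F\G\,\$_\pi \qquad\text{and}\qquad \mathrm{right}(\pi) := \F\G\,\neg\$_\pi ,
\]
so that on any split $T$ the guard $\mathrm{left}(\pi)$ holds exactly for the traces of $T_1$ and $\mathrm{right}(\pi)$ exactly for those of $T_2$; since $T = T_1\uplus T_2$, these guards partition the traces of $T$. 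I would then relativise every quantifier of $\varphi_\ell$ by $\mathrm{left}$ (replacing $\exists\pi.\,\phi$ by $\exists\pi.\,(\mathrm{left}(\pi)\wedge\phi)$ and $\forall\pi.\,\phi$ by $\forall\pi.\,(\mathrm{left}(\pi)\to\phi)$) and every quantifier of $\varphi_r$ by $\mathrm{right}$, take $\psi := \simpl{\varphi_\ell}\wedge\simpl{\varphi_r}$, and finally bring $\psi$ back into prenex normal form using the closure of \hyltl under Boolean combinations noted in \cref{sec:definitions}. Since each guard mentions only the variable it is attached to, pulling the inner quantifier blocks outward past the guards is a sound quantifier manipulation.

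For the left conjunct no further work is needed beyond relativisation. Since $\Tl = T_1$ by definition, the relativised quantifiers of $\simpl{\varphi_\ell}$ range over $T$ exactly over the traces of $\Tl$, and the quantifier-free matrix of $\varphi_\ell$ — which never mentions $\$$ — reads the same $\AP$-labels on these traces whether we evaluate over $T$ or over $\Tl$. At positions $\ge b$ a left trace carries the label $\{\$\}$, whose $\AP$-part is $\emptyset$, which is exactly how $\varphi_\ell$ already sees such a trace inside $\Tl$. Thus $T \models \simpl{\varphi_\ell}$ iff $\Tl \models \varphi_\ell$.

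The right conjunct is the only place requiring care, because $\Tr$ is obtained from $T_2$ by deleting the common prefix $\{\$\}^b$, so the time domain of $\varphi_r$ is shifted by $b$ relative to its encoding in $T$, and $b$ is not known to $\psi$. To evaluate the matrix $\psi_r$ of $\varphi_r$ at position $b$ without hard-coding $b$, I would replace this matrix by
\[
  \$_{\pi^*}\U\bigl(\neg\$_{\pi^*}\wedge\psi_r\bigr) ,
\]
where $\pi^*$ is any one of the (now right-guarded) variables bound by $\varphi_r$. Because every right trace carries $\$$ exactly on positions $0,\dots,b-1$, the until-formula is forced to pick precisely position $b$ — the left conjunct of $\U$ requires $\$_{\pi^*}$ strictly before it, the right conjunct requires $\neg\$_{\pi^*}$ at it — and it then asserts $(T,\suffix{\Pi}{b})\models\psi_r$. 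Since all right traces share the same $b$, the choice of $\pi^*$ is immaterial. Reading $\psi_r$ from position $b$ over a right trace $\{\$\}^b\cdot t$ coincides with reading it from position $0$ over $t\in\Tr$, because the relevant suffixes agree on $\AP$ and $\psi_r$ mentions only $\AP$; a routine induction on $\psi_r$ then gives $T\models\simpl{\varphi_r}$ iff $\Tr\models\varphi_r$.

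The main obstacle is exactly this positional shift by the unknown, model-dependent $b$, and the until-construction above is what resolves it; verifying that it pins down position $b$ and commutes with the LTL semantics is the technical heart of the argument. Two degenerate cases remain. If $\varphi_r$ (resp.\ $\varphi_\ell$) has no quantifiers, its matrix is a Boolean constant containing no trace variables, so no shift and no $\pi^*$ are needed and the relativisation is vacuous. If $\Tr$ (resp.\ $\Tl$) is empty, the guarded quantifiers behave correctly, a guarded $\exists$ being false and a guarded $\forall$ vacuously true, matching satisfaction over the empty set of traces. Combining the two equivalences yields $T\models\psi$ iff $\Tl\models\varphi_\ell$ and $\Tr\models\varphi_r$, as required.
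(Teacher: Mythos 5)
Your proposal is correct and matches the paper's proof essentially step for step: the paper relativises quantifiers with the same eventual-$\$$ guards (e.g.\ $\F\G\lnot\$_\pi$ for the right part) and handles the unknown offset $b$ with exactly your trick, replacing the quantifier-free part of $\varphi_r$ by $\$_\pi \U (\lnot\$_\pi \land \psi_r)$ for some variable $\pi$ of $\varphi_r$, before renormalising to prenex form. Your additional verification that the until-formula pins down position $b$, and your treatment of the degenerate cases, are sound elaborations of the same argument.
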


Conversely, any \hyltl formula that only has split models can be decomposed
into a Boolean combination of formulas that only talk about the left or right
part of the model. This is formalised in the lemma below, proven in
Appendix~\ref{sec:hierarchy-proofs}.

\begin{lemma}\label{lem:split}
  For all \hyltl $\Sigma_n$-sentences $\varphi$
  there exists a finite family $(\varphi^i_\ell, \varphi^i_r)_i$ of $\Sigma_n$-sentences 
  such that for all 
  split $T \subseteq {(2^{\AP\cup \{\$\}})}^\omega$: 
  $T \models \varphi$ if and only if there is an $i$ with
  $\Tl \models \varphi^i_\ell$ and $\Tr \models \varphi^i_r$.
\end{lemma}

We are now ready to prove \cref{thm:Pi11}. %

\begin{proof}[Proof of \cref{thm:Pi11}]
  The upper bound is an easy consequence of \cref{thm:hyltl-sat}:
  Given a \hyltl sentence $\varphi$, we can enumerate all $\Sigma_n$-sentences
  $\psi$, and deciding if $\psi$ is equivalent to $\varphi$
  is in $\Pi_1^1$: $\varphi$ and $\psi$ are equivalent if and only if
  $(\lnot \varphi \land \psi) \lor (\varphi \land \lnot \psi)$ is
  not satisfiable.

  We prove the lower bound by reduction from the unsatisfiability problem for
  \hyltl. So given an \hyltl sentence $\varphi$, we want to
  construct $\psi$ such that $\varphi$ is unsatisfiable if and only if
  $\psi$ is equivalent to a $\Sigma_n$-sentence.
  
  Fix a sentence $\Sigma_{n+1}$-sentence $\varphi_{n+1}$ that is in not
  equivalent to any $\Sigma_n$-sentence, and such that every model of
  $\varphi_{n+1}$ is bounded.
  The existence of such a formula is a consequence of
  \cref{thm:strictness-finite}.
  By \cref{lem:dupsi}, there exists a computable $\psi$ such that for all split models
  $T$, we have $T \models \psi$ if and only if $\Tl \models \varphi_{n+1}$
  and $\Tr \models \varphi$.

  First, it is clear that if $\varphi$ is unsatisfiable, then $\psi$ is
  unsatisfiable as well, and thus equivalent to
  $\exists \pi.\ a_\pi \land \lnot a_\pi$, which is a $\Sigma_n$-sentence
  since $n \ge 1$.
  
  Conversely, suppose towards a contradiction that $\varphi$ is satisfiable
  and that $\psi$ is equivalent to some $\Sigma_n$-sentence.
  Let $(\psi^i_\ell,\psi^i_r)_{i}$ be the finite family of $\Sigma_n$-sentences given
  by Lemma~\ref{lem:split} for $\psi$.
  Fix a model $T_\varphi$ of $\varphi$.
  For a bounded $T$, we let $\overline T$ denote the unique split set of traces such that
  $\overline \Tl = T$ and $\overline \Tr = T_\varphi$.
  For all $T$, we then have $T \models \varphi_{n+1}$ if and only if $T$ is
  bounded and $\overline T \models \psi$.
  Note that the set of bounded models can easily be defined by a
  \hyltl sentence $\phib$ (see \cref{lemma:bounded} in Appendix~\ref{sec:hierarchy-proofs}).
  We then have $T \models \varphi_{n+1}$ if and only if $T \models \phib$ and
  there exists $i$ such that $T \models \psi^i_\ell$ and
  $T_\varphi \models \psi^i_r$.
  So $\varphi_{n+1}$ is equivalent to
  \[
    \phib \land \bigvee\nolimits_{i \text{ with } T_{\varphi} \models \psi^i_r}  \psi^i_\ell \, ,
  \]
  which, since $\Sigma_n$-sentences are closed (up to logical equivalence)
  under conjunction and disjunction, is equivalent to a $\Sigma_n$-sentence.
  This contradicts the definition of $\varphi_{n+1}$.
\end{proof}

\section{HyperCTL$^*$ satisfiability is $\Sigma^2_1$-complete}
\label{sec:ctlsat}
Here, we consider the \hyctlstar satisfiability problem: given a \hyltl sentence, determine whether it has a model $\tsys$ (of arbitrary size). We prove that it is much harder than \hyltl satisfiability.
As a key step of the proof, we also prove that every satisfiable sentence
admits a model of cardinality at most $\cont$ (the cardinality of the continuum), and conversely, we exhibit
a satisfiable sentence whose models are all of cardinality at least~$\cont$.

\begin{theorem}\label{thm:hyctl-sat}
  \hyctlstar satisfiability is $\Sigma^2_1$-complete.
\end{theorem}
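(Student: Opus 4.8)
The plan is to prove $\Sigma_1^2$-completeness in two directions, following the roadmap laid out in the introduction.

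\textbf{Upper bound ($\Sigma_1^2$-membership).} The essential ingredient I would establish first is a \emph{small model property}: every satisfiable \hyctlstar sentence has a model of cardinality at most $\cont$. I expect this to be the main technical obstacle, and I would prove it by generalising the Finkbeiner--Zimmermann construction for \hyltl~\cite{FZ17}. The idea is to take an arbitrary model $\tsys$ and a winning strategy for the existential player in the model-checking game induced by $\varphi$ on $\tsys$, and then close off a small subsystem under the Skolem/strategy choices. Since each existential path quantifier, when resolved, picks a single path, and paths are sequences of vertices, one can build up the relevant submodel by a transfinite closure argument that only ever adds countably many paths per step; bounding the number of steps by $\cont$ (or by iterating $\omega_1$-many times while keeping each stage of size $\le \cont$) yields a model of size at most $\cont$. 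Once this bound is in hand, $\Sigma_1^2$-membership follows by the standard encoding: a model of size $\cont$ is encoded as a type-$1$ object (a vertex set identified with $\nats$, together with its edge relation and labelling, all coded as sets of naturals — note the transition system itself is coded by type-$1$ data, while paths through it are type-$1$ objects and the family of all paths is a type-$2$ object). Existentially guessing such a model is type-$1$ quantification, and asserting that the existential player wins the model-checking game requires existentially guessing a winning strategy, which is a type-$2$ object (a function assigning paths to positions of the game), with the correctness of the strategy checkable using arbitrary type-$0$ and type-$1$ quantification. This places the problem in $\Sigma_1^2$.

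\textbf{Lower bound ($\Sigma_1^2$-hardness).} Here I would reduce from membership in an arbitrary $\Sigma_1^2$ set, directly encoding an arithmetic formula $\exists X_0 \cdots \exists X_k.\ \psi$ (with $X_i$ of type~$2$, i.e.\ subsets of $\pow{\nats}$, and $\psi$ using arbitrary type-$0$ and type-$1$ quantification) as an instance of \hyctlstar satisfiability. The key gadget is the sentence $\varphi_\cont$, whose models are forced to contain, for every subset $A \subseteq \nats$, a distinct path encoding $A$; I would construct this and verify its models all have cardinality $\ge \cont$. With $\varphi_\cont$ available, the simulation of quantifiers proceeds by type: a type-$0$ quantifier ranging over $\nats$ becomes a path quantifier ranging over singleton-encoding paths; a type-$1$ quantifier ranging over subsets of $\nats$ becomes a path quantifier over the $A$-encoding paths; and the outermost existential type-$2$ quantifiers, ranging over subsets of $\pow{\nats}$, are simulated by the \emph{choice of model} — a model is made to designate, via additional labelling, exactly which of the $\cont$-many encoded paths belong to each of the $k$ chosen type-$2$ sets. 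The arithmetic operations $+,\cdot,<$ on the naturals encoded by paths are definable already in \hyltl and hence in \hyctlstar, so the quantifier-free matrix $\psi$ translates faithfully.

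\textbf{Assembling the equivalence.} I would then argue that the constructed \hyctlstar sentence is satisfiable precisely when the original arithmetic statement is true: a model exists iff one can choose the $k$ type-$2$ witnesses making $\psi$ true, because the model's designated subsets realise exactly such a choice, and $\varphi_\cont$ guarantees all the lower-type objects needed for the inner quantification are present. The main delicacies will be ensuring $\varphi_\cont$ really forces the full continuum of distinct encoding paths (rather than identifying some), and correctly handling the game-semantic subtleties of \hyctlstar quantification, where each quantifier picks a path starting at the current $\last(\Pi)$ vertex rather than at the root; I would address the latter by routing all quantifier-relevant structure through a common hub so that $\last(\Pi)$ does not obstruct access to the encoding paths. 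Combining the matching bounds gives $\Sigma_1^2$-completeness, establishing \cref{thm:hyctl-sat}.
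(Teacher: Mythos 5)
Your overall route coincides with the paper's: the same small-model property (a transfinite Skolem-closure argument, stabilising at $\omega_1$ with every stage of size at most $\cont$, as in \cref{prop:hyctl-model}), membership via existentially guessing a model together with a winning strategy for Verifier in the induced model-checking game (\cref{prop:hyctl-membership}), and hardness via a gadget forcing continuum-many encoding paths ($\phiset$, \cref{lem:phiset}), \hyltl-definable arithmetic traces ($\phiop$, \cref{lem:phiop}), and simulation of existential type-$2$ quantifiers by the labelling of the model (\cref{lem:hyctlstar-reduction}). Your ``common hub'' device is precisely the paper's solution to the $\last(\Pi)$ issue: every quantified path starts at the initial vertex and all encodings are shifted by one step via $\X$ (hence the $\X \pset_{\pix x}$ guards).

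There is, however, one genuine error in your membership argument: you claim a model of size $\cont$ is ``encoded as a type-$1$ object (a vertex set identified with $\nats$, \ldots)'' and that guessing it is type-$1$ quantification. This cannot work: a type-$1$ object carries only countably much information, a vertex set identified with $\nats$ yields a \emph{countable} transition system, and your own gadget $\phiset$ shows that a satisfiable \hyctlstar sentence may have no model of cardinality below $\cont$ --- so a procedure quantifying over type-$1$-coded models would wrongly declare $\phiset$-like sentences unsatisfiable. The repair is what the paper does: identify vertices with type-$1$ objects (e.g.\ subsets of $\nats$), so that the transition system --- edge relation and labelling included --- is a type-$2$ object, existentially quantified; since you already guess the strategy at type $2$, the prefix remains $\Sigma_1^2$ and the rest of your argument stands. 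A smaller omission on the hardness side: besides forcing distinct encoding paths, the reduction must force all (possibly duplicated) $\pset$-paths with the same $\{0,1\}$-labelling to carry the \emph{same} $\{a_1,\ldots,a_n\}$-labelling, otherwise the extracted interpretation of the type-$2$ variables is ill-defined; the paper enforces this with an explicit $\forall\pi.\forall\pi'$ conjunct, whereas your sketch only gestures at ``designated subsets''.
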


On the other hand, \hyctlstar satisfiability restricted to finite transition systems is $\Sigma_1^0$-complete. The upper bound follows from \hyctlstar model checking being decidable~\cite{ClarksonFKMRS14} while the matching lower bound is inherited from \hyltl~\cite{FinkbeinerH16}.

\subparagraph*{Upper bound.}

We begin by proving membership in $\Sigma^2_1$. The first step is to obtain a bound on the size of minimal models of satisfiable \hyctlstar sentences. For this, we use a Skolemisation procedure. This procedure is a a transfinite  generalisation of the proof that all satisfiable \hyltl sentences have a  countable model \cite{FZ17}.

In the following, we use $\omega$ and $\omega_1$ to denote the first infinite and the first uncountable ordinal, respectively, and write $\aleph_0$ and $\aleph_1$ for their cardinality. 

\begin{proposition}\label{prop:hyctl-model}
Each satisfiable HyperCTL$^*$ sentence~$\phi$ has a model of size at most~$\cont$.
\end{proposition}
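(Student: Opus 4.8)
The plan is to mimic the countable-model construction for \hyltl from~\cite{FZ17}, but to carry it out by a transfinite closure of length $\omega_1$. Fix a model $\tsys = (V,E,v_\initmark,\lambda)$ of $\phi$. Since $(\tsys,\Pi_\emptyset)\models\phi$, the axiom of choice lets us fix, for every existential subformula $\exists\pi.\,\chi$ of $\phi$, a choice of witnessing path. The complication compared to \hyltl is that in \hyctlstar a quantifier may occur inside temporal operators, so a witness must be chosen as a function not only of the paths already assigned to the variables in scope, but also of the time offset at which the quantifier is reached and of the choices made for the temporal operators above it. I would package all of this as a winning strategy $\sigma$ for the verifier in the model-checking game on $\tsys$ (whose existence is exactly the truth of $\phi$); the witness paths always start in $\last(\Pi)$, as the semantics demands.

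The construction is a transfinite increasing sequence $(V_\alpha)_{\alpha\le\omega_1}$ of subsets of $V$. I take $V_0$ to be the countable set of vertices on a single $\sigma$-consistent play from $\Pi_\emptyset$, so in particular $v_\initmark\in V_0$. At a successor step, $V_{\alpha+1}$ adjoins to $V_\alpha$ every vertex occurring on a witness path that $\sigma$ produces against a play in which the opponent only ever chooses (universally quantified) paths staying inside $V_\alpha$, with arbitrary choices at the temporal operators; at limits I take unions, and finally set $\tsys' = \tsys \upharpoonright V_{\omega_1}$, with edge relation $E\cap(V_{\omega_1}\times V_{\omega_1})$. Correctness amounts to $\tsys'\models\phi$, which I would prove by showing that $\sigma$, restricted to plays remaining inside $\tsys'$, is still winning. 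Universal quantifiers range over fewer paths in $\tsys'$, so they only become easier; the crucial point is that every existential witness the strategy needs against such a play is again a path of $\tsys'$, which is precisely what closure guarantees. The labellings along all paths are inherited from $\tsys$, so the quantifier-free \ltl part is evaluated identically.

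For the cardinality bound I count step by step. A path is an element of $V_\alpha^\omega$, so there are at most $|V_\alpha|^{\aleph_0}$ of them, and a single $\sigma$-consistent play contains at most countably many quantifier moves, contributing at most $\aleph_0$ witness vertices; hence $|V_{\alpha+1}|\le |V_\alpha|^{\aleph_0}$. From $|V_0|=\aleph_0$ we get $|V_1|\le\aleph_0^{\aleph_0}=\cont$ and thereafter $|V_{\alpha+1}|\le\cont^{\aleph_0}=\cont$, so every $V_\alpha$ has size at most $\cont$. The length $\omega_1$ is forced by closure and its \emph{uncountable cofinality}: any path of $\tsys'$ is a countable sequence of vertices of $V_{\omega_1}=\bigcup_{\alpha<\omega_1}V_\alpha$, so all its vertices already appear in some $V_\beta$ with $\beta<\omega_1$, whence its $\sigma$-response was added at stage $\beta+1$ and no new witnesses arise at stage $\omega_1$. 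Thus the sequence has stabilised, and $|V_{\omega_1}|\le \aleph_1\cdot\cont=\cont$, using $\aleph_1\le\cont$.

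The main obstacle, I expect, is the correctness argument rather than the counting. One must set up the game (or Skolem functions) for \hyctlstar precisely enough to track how witnesses depend on the time offset and on the branch selected by the temporal operators, and to handle the $\last(\Pi)$ convention anchoring each freshly quantified path. Making the closure genuinely closed is exactly what forces the transfinite iteration, and it is the uncountable cofinality of $\omega_1$—not its cardinality—that makes the process terminate while keeping the resulting model within size $\cont$.
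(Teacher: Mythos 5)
Your proof is correct and follows essentially the same route as the paper's: a transfinite closure of length $\omega_1$ under existential witnesses, stabilisation via the uncountable cofinality of $\omega_1$ (every path's countably many vertices already appear at some countable stage), and the stage-by-stage bound $\cont$ with $|V_{\omega_1}|\le\aleph_1\cdot\cont=\cont$. The only cosmetic differences are that you package the witnesses as a winning strategy in the model-checking game where the paper uses Skolem functions with explicit time-offset arguments (these carry the same information), and that you take the induced subgraph on $V_{\omega_1}$ where the paper adds only the witness paths' edges---neither affects the argument.
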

\begin{proof}[Proof sketch]
Suppose $\phi$ has a model $\tsys$ of arbitrary size, and fix Skolem functions witnessing this satisfaction. We then create a transfinite sequence of transition systems~$\tsys_\alpha$. We start by taking $\tsys_0$ to be any single path from $\tsys$ starting in the initial vertex, and obtain $\tsys_{\alpha+1}$ by adding to $\tsys_\alpha$ all vertices and edges of the paths that are the outputs of the Skolem functions when restricted to inputs from $\tsys_\alpha$. If $\alpha$ is a limit ordinal we take $\tsys_\alpha$ to be the union of all previous transition systems.

This sequence does not necessarily stabilise at $\omega$, since $\tsys_{\omega}$ may contain a path~$\rho$ such that
$\rho(i)$ was introduced in $\tsys_i$. This would result in $\tsys_{\omega}$ containing a path that was not present in any earlier model $\tsys_i$ with $i<\omega$, and therefore we could have $\tsys_{\omega+1}\not = \tsys_{\omega}$.

The sequence does stabilise at $\omega_1$, however. This is because every path $\rho$ contains only countably many vertices, so if every element $\rho(i)$ of $\rho$ is introduced at some countable $\alpha_i$, then there is a countable $\alpha$ such that all of $\rho$ is included in $\tsys_\alpha$. It follows that $\tsys_{\omega_1}$ does not contain any ``new'' paths that were not already in some $\tsys_\alpha$ with $\alpha<\omega_1$, and therefore the Skolem function $f$ does not generate any ``new'' outputs either.

In each step of the construction at most $\cont$ new vertices are added, so $\tsys_{\omega_1}$ contains at most $\cont$ vertices. Furthermore, because $\tsys_{\omega_1}$ is closed under the Skolem functions, the satisfaction of $\phi$ in $\tsys$ implies its satisfaction in $\tsys_{\omega_1}$.
Details can be found in \cref{sec:hyctl-proofs}.
\end{proof}

With the upper bound at hand, we can place \hyctlstar satisfiability in $\Sigma_1^2$, as the existence of a model of size~$\cont$ can be captured by quantification over type~$2$ objects.

\begin{proposition}\label{prop:hyctl-membership}
\hyctlstar satisfiability is in $\Sigma_1^2$.
\end{proposition}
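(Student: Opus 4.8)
The plan is to show that $\hyctlstar$ satisfiability is in $\Sigma_1^2$ by expressing, as a $\Sigma_1^2$ formula of arithmetic, the statement ``$\phi$ has a model.'' By \cref{prop:hyctl-model} it suffices to quantify over models of size at most $\cont$, and since $\cont = |\pow{\nats}|$, such a transition system $\tsys = (V, E, v_\initmark, \lambda)$ can be represented with $V \subseteq \pow{\nats}$ (or, after fixing a bijection, with vertices coded as type~$1$ objects). The edge relation $E \subseteq V \times V$ and the labelling $\lambda \colon V \to \pow{\ap}$ are then relations on type~$1$ objects, and ``there exists such a $\tsys$'' is captured by a single existential quantifier over a type~$2$ object encoding the pair $(E, \lambda)$ together with the vertex set $V$. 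This is the outermost existential block that places the whole statement in $\Sigma_1^2$.

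The remaining work is to verify that ``$\tsys \models \phi$'' can be written with only type~$0$ and type~$1$ quantifiers, so that the inner matrix is $\Pi_1^1$ (and a fortiori within the $\Sigma_1^2$ template, whose inner formula permits arbitrary quantification over types $0$ and $1$). Here the key observation is that paths of $\tsys$ are infinite sequences of vertices, i.e.\ functions $\nats \to V$, and since each vertex is a type~$1$ object, a path is naturally coded as a type~$1$ object (a single function $\nats \times \nats \to \nats$, which is type~$1$ after pairing). A path assignment $\Pi$ over the finitely many variables occurring in $\phi$ is therefore a finite tuple of type~$1$ objects. I would make the satisfaction relation explicit by recursion on the structure of $\phi$: the temporal operators $\X$ and $\U$ introduce only type~$0$ (position) quantifiers; the atomic case $a_\pi$ is an arithmetic condition on $\lambda$ and the coded path; and crucially each path quantifier $\exists \pi$ / $\forall \pi$ ranges over paths of $\tsys$, which are type~$1$ objects, with the side condition that the path starts in $\last(\Pi)$ and respects $E$, both expressible arithmetically given $E$. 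Rather than defining truth directly (which requires handling nesting carefully), the cleanest route, mirroring the $\hyltl$ argument, is to express the existence of appropriate Skolem functions: I would existentially quantify (at type~$1$, since each Skolem function outputs a path, a type~$1$ object, from finitely many type~$1$ inputs) witnesses for the existential path quantifiers and then universally check over all choices for the universal quantifiers that the quantifier-free temporal matrix holds. All of this uses only type~$\le 1$ quantification, so it sits inside the matrix.

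Assembling these pieces, the complete arithmetic statement reads: there exists a type~$2$ object coding $(V, E, \lambda)$ of size at most $\cont$ such that, with the induced notions of path and path assignment, $(\tsys, \Pi_\emptyset) \models \phi$; and the latter is a $\Pi_1^1$ (indeed arithmetic-plus-type-$1$) predicate. This matches the $\Sigma_1^2$ definition from the preliminaries.

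The main obstacle, and the step I expect to require the most care, is the faithful coding of a continuum-sized transition system by a type~$2$ object together with the verification that the induced satisfaction relation is expressible with only type~$0$ and type~$1$ quantifiers. In particular, one must check that quantification over paths of $\tsys$ genuinely stays at type~$1$ (this hinges on vertices being coded as type~$1$ objects via $|V| \le \cont$, which is exactly what \cref{prop:hyctl-model} buys us) and that the bookkeeping for $\last(\Pi)$ and the ``starting vertex'' side conditions on path quantifiers is arithmetically definable. Once the encoding is fixed, the structural recursion is routine; the subtlety is purely in confirming that no quantifier over a type~$2$ object is needed inside the matrix, so that the single outermost type~$2$ existential quantifier suffices.
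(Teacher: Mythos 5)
There is a genuine type-counting error at the crucial step of your argument. You claim that the Skolem functions can be existentially quantified \emph{at type~$1$} and hence ``sit inside the matrix''. This is where the analogy with the \hyltl proof breaks down: there, the model is countable, traces are identified by natural numbers, trace assignments are type~$0$ objects, and so Skolem functions (mapping assignments to trace names) are type~$1$. Here, by contrast, the model has cardinality up to $\cont$, a path is a type~$1$ object, and a Skolem function maps tuples of paths (plus, as you omit, the time points at which the quantifiers are invoked, since \hyctlstar quantifiers occur under temporal operators --- cf.\ the Skolem functions $f\colon \mathit{paths}(\tsys)^k \times \nats^l \to \mathit{paths}(\tsys)^l$ used in the proof of \cref{prop:hyctl-model}) to paths. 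A function from a continuum-sized family of type~$1$ objects to type~$1$ objects is a type~$2$ object; it cannot in general be coded by a type~$1$ object, so your claim that the inner formula ``uses only type~$\le 1$ quantification'' is false as stated, and the matrix is not $\Pi_1^1$.

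The gap is repairable: since the Skolem functions are \emph{existentially} quantified and depend only on $\phi$ and the (existentially quantified) model, you can hoist them into the outermost existential type~$2$ block alongside the coding of $(V,E,\lambda)$; the $\Sigma_1^2$ template allows the inner type~$0$/$1$ formula to use these type~$2$ variables freely (application of a fixed type~$2$ function to type~$1$ arguments is expressible), and the remaining universal path quantifiers and position quantifiers are indeed of types~$1$ and $0$. You would also need to treat polarity carefully (existentially versus universally \emph{occurring} quantifiers, as formulas are not prenex and quantifiers appear under negations). This repaired version is essentially the paper's proof in different clothing: the paper existentially quantifies, as type~$2$ objects, both the model and a winning strategy for Verifier in the induced model checking game --- a strategy being a function on game vertices, which are type~$1$ objects since they contain path assignments --- and then checks with type~$0$/$1$ quantification that every play consistent with the strategy (a bounded-length sequence of type~$1$ objects) is winning. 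The game formulation buys a cleaner handling of nesting, negation, and until-formulas than direct Skolemisation, but both routes place the strategy-like witness exactly where it must go: in the outer type~$2$ block, not in the matrix.
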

\begin{proof}[Proof sketch]
As in the proof of \cref{thm:hyltl-sat}. Because every \hyctlstar formula is satisfied in a model of size at most $\cont$, these models can be represented by objects of type~$2$. Checking whether a formula is satisfied in a transition system is equivalent to the existence of a winning strategy for Verifier in the induced model checking game. Such a strategy is again a type~$2$ object, which is existentially quantified. Finally, whether it is winning can be expressed by quantification over individual elements and paths, which are objects of types~$0$ and $1$.

Checking the satisfiability of a \hyctlstar formula $\phi$ therefore amounts to existential third-order quantification (to choose a model and a winning strategy) followed by a second-order formula to verify that $\phi$ holds on the model. Hence \hyctlstar satisfiability is in $\Sigma^2_1$.
Details are presented in \cref{sec:hyctl-proofs}.
\end{proof}

\subparagraph*{Lower bound.}

We first describe a satisfiable \hyctlstar sentence $\phiset$ that does not have any model of
cardinality less than $\cont$ (more precisely, the initial vertex must have
uncountably many successors), thus matching the upper bound from
Proposition~\ref{prop:hyctl-model}.
We construct $\phiset$ with one particular model~$\Kset$ in mind,
defined below, though it also admits other models.

The idea is that we want all possible subsets of $A \subseteq \Nat$ to
be represented in $\Kset$ in the form of paths
$\rho_A$ such that $\rho_A(i)$ is labelled by $1$ if $i \in A$, and by $0$ otherwise.
By ensuring that the first vertices of these paths are pairwise distinct, we obtain the desired lower bound on the cardinality.
We express this in \hyctlstar as follows: First, we express that there is a part of the model (labelled by $\fbt$) where every reachable vertex has two successors, one labelled with $0$ and one labelled with $1$, i.e.\ the unravelling of this part contains the full binary tree.
Thus, this part has a path~$\rho_A$ %
as above for every subset~$A$, but their initial vertices are not necessarily distinct.
Hence, we also express that there is another part (labelled by $\pset$) that contains a copy of each path in the $\fbt$-part, and that these paths indeed start at distinct successors of the initial vertex.

\begin{figure}
    \centering
\def\dist{2.0cm}
\begin{tikzpicture}[->,
>=stealth', 
level/.style={sibling distance = 3cm/#1, level distance = \dist}, 
scale=0.7,
transform shape,
grow=left,thick]

\node (init) [state] {\phantom{0}}
child {
    node (t0) [state] {0} 
    child {
        node (t00) [state] {0} 
        child {edge from parent[dashed] }
        child {edge from parent[dashed] }
    }
    child {
        node (t01) [state] {1} 
        child {edge from parent[dashed] }
        child {edge from parent[dashed] }
    }
}
child {
    node (t1) [state] {1} 
    child {
        node (t10) [state] {0} 
        child {edge from parent[dashed] }
        child {edge from parent[dashed] }
    }
    child {
        node (t11) [state] {1} 
        child {edge from parent[dashed] }
        child {edge from parent[dashed] }
    }
}
;

{\color{red}
\node[state,right=6cm of t00] (s00) {0}; 
\node[state,right=1cm of s00] (s000) {0}; 
\node[state,right=1cm of s000] (s0000) {0}; 
\draw (init) -- (s00);
\draw (s00) -- (s000);
\draw (s000) -- (s0000);
\draw[dotted] (s0000) -- +(\dist,0);

\draw[dotted] (init) -- (2.5,1.33);
\draw[-,dotted] (2.5,1.33) -- +(5,0);
\draw[dotted] (init) -- (2.5,1.66);
\draw[-,dotted] (2.5,1.66) -- +(5,0);

\draw[dotted] (init) -- (2.5,0.25);
\draw[-,dotted] (2.5,0.25) -- +(5,0);

\draw[dotted] (init) -- (2.5,-0.25);
\draw[-,dotted] (2.5,-0.25) -- +(5,0);

\draw[dotted] (init) -- (2.5,-1.33);
\draw[-,dotted] (2.5,-1.33) -- +(5,0);
\draw[dotted] (init) -- (2.5,-1.66);
\draw[-,dotted] (2.5,-1.66) -- +(5,0);

\node[state,right=6cm of t01] (s01) {0}; 
\node[state,right=1cm of s01] (s010) {1}; 
\draw (init) -- (s01);
\draw (s01) -- (s010);
\draw[dotted] (s010) -- +(3.5cm,0);

\node[state,right=6cm of t10] (s10) {1}; 
\node[state,right=1cm of s10] (s100) {0}; 
\draw (init) -- (s10);
\draw (s10) -- (s100);
\draw[dotted] (s100) -- +(3.5,0);

\node[state,right=6cm of t11] (s11) {1}; 
\node[state,right=1cm of s11] (s111) {1}; 
\node[state,right=1cm of s111] (s1111) {1}; 
\draw (init) -- (s11);
\draw (s11) -- (s111);
\draw (s111) -- (s1111);
\draw[dotted] (s1111) -- +(\dist,0);

}
\draw[->] (0,1cm) -- (init);

\end{tikzpicture}
    \caption{A depiction of $\Kset$. Vertices in black (on the left including the initial vertex) are labelled by $\fbt$, those in red (on the right, excluding the initial vertex) are labelled by $\pset$.
}%

    \label{fig:phiset}
\end{figure}

We let $\Kset = (V_\cont,E_\cont,t_\varepsilon,\lambda_\cont)$ (see \cref{fig:phiset}), where
\begin{align*}
  V_\cont & = \{t_u \mid u \in \{0,1\}^\ast\}
      \cup \{s^i_A \mid i \in \Nat \land A \subseteq \Nat\} \\
  \lambda_\cont(t_\varepsilon) & = \{\fbt\} \, \quad
  \lambda_\cont(t_{u \cdot 0}) = \{\fbt,0\} \, \quad
  \lambda_\cont(t_{u \cdot 1}) = \{\fbt,1\} \, \quad
  \lambda_\cont(s^i_A) = \begin{cases}
    \{\pset,0\} & \text{if } i \notin A \\
    \{\pset,1\} & \text{if } i \in A
  \end{cases} \\
  E_\cont
    & = \{(t_u,t_{u0}),(t_u,t_{u1}) \mid u \in \{0,1\}^\ast\} \cup {} 
      \{ (t_\varepsilon, s^0_A) \mid A \subseteq \Nat \} \cup
      \{ (s^i_A,s^{i+1}_A) \mid A \subseteq \Nat, i \in \Nat \} \, .
\end{align*}

\begin{lemma}\label{lem:phiset}
There is a satisfiable \hyctlstar sentence~$\phiset$ that has only models of cardinality at least $\contcard$.
\end{lemma}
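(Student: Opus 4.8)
The plan is to construct $\phiset$ as a conjunction of several \hyctlstar subformulas, each enforcing one structural property of the intended model $\Kset$, and then to argue that any model must have at least $\contcard$ successors at the initial vertex. I would introduce fresh atomic propositions $\fbt$, $\pset$, $0$, and $1$, and write $\phiset = \varphi_{\fbt} \land \varphi_{\pset} \land \varphi_{\mathit{copy}} \land \varphi_{\mathit{distinct}}$, where each conjunct is a \hyctlstar sentence (recall every temporal operator must sit under a path quantifier).

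First I would write $\varphi_{\fbt}$ to assert that the initial vertex is labelled $\fbt$ and that from it one can reach, staying within $\fbt$-labelled vertices, a full-binary-tree structure: every $\fbt$-vertex has a successor labelled $\{\fbt,0\}$ and a successor labelled $\{\fbt,1\}$, and these are exactly its labels. Formally this is a statement of the shape $\forall \pi.\ \G\bigl(\fbt_\pi \rightarrow (\exists \pi'.\ \X(\fbt_{\pi'} \land 0_{\pi'})) \land (\exists \pi''.\ \X(\fbt_{\pi''} \land 1_{\pi''}))\bigr)$, with the $\G$ sitting under the outer $\forall \pi$. This guarantees that the unravelling of the $\fbt$-part contains, for every $A \subseteq \Nat$, a path $\rho_A$ whose $i$-th vertex carries $1$ iff $i \in A$. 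Next, $\varphi_{\pset}$ asserts that the initial vertex has successors labelled $\pset$ forming infinite chains (each $\pset$-vertex has a unique relevant $\pset$-successor, carrying a $0/1$ label). The key conjunct $\varphi_{\mathit{copy}}$ expresses that every path in the $\fbt$-part is matched by a $\pset$-path with the same sequence of $0/1$ labels: $\forall \pi.\ \bigl(\X\G\,\fbt_\pi \rightarrow \exists \pi'.\ \X\bigl(\pset_{\pi'} \land \G(0_\pi \leftrightarrow 0_{\pi'})\bigr)\bigr)$, so that the $\pset$-part realises a copy $\rho_A$ for every subset $A$.

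The crucial conjunct $\varphi_{\mathit{distinct}}$ must force the $\pset$-paths encoding distinct subsets to branch off at \emph{distinct} successors of the initial vertex, which is what yields the cardinality lower bound. The natural way is to require that any two $\pset$-successors of the initial vertex that ever disagree on their $0/1$ label must already be distinct as first vertices — but since bare vertex equality is not directly expressible, I would instead enforce the contrapositive structurally: each successor $s$ of the initial vertex determines a \emph{unique} infinite $\pset$-path, so that the $0/1$-sequence read along the path from $s$ is a function of $s$ alone. This can be written by saying that from any $\pset$-successor of the initial vertex, all continuing paths agree on their labels, i.e. the chain is deterministic in labels. Combined with $\varphi_{\mathit{copy}}$, which provides a $\pset$-path for \emph{every} $A \subseteq \Nat$, and the determinism forcing distinct sequences to start at distinct successors, the initial vertex must have at least $2^{\aleph_0} = \contcard$ successors. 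To complete the proof I would verify satisfiability by checking directly that $\Kset$ models each conjunct (routine, by construction of $V_\cont$, $E_\cont$, $\lambda_\cont$), and then argue the lower bound: in any model $\tsys \models \phiset$, the map sending $A \subseteq \Nat$ to the initial successor starting the $\pset$-path realising $A$ is injective, whence $\size{V} \ge \contcard$.

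The main obstacle I anticipate is $\varphi_{\mathit{distinct}}$: expressing that distinct label-sequences force distinct successor vertices, without an equality predicate on vertices and while respecting the constraint that temporal operators lie under path quantifiers. The delicate point is ruling out a model where a single successor of the initial vertex branches internally to realise many different subsets (which would let $\contcard$ subsets be encoded by countably many initial successors). I would resolve this by making the $\pset$-chains label-deterministic from each initial successor onward, so that the $0/1$-word along the unique chain from a successor is well-defined; then two successors realising different subsets cannot coincide, giving the injection $A \mapsto s^0_A$ and the bound $\size{V} \ge \contcard$.
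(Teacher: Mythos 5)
Your proposal is correct and matches the paper's proof essentially conjunct for conjunct: the full-binary-tree requirement on the $\fbt$-part, the copying of every $\fbt$-path to a $\pset$-path anchored at a successor of the initial vertex, and crucially your $\varphi_{\mathit{distinct}}$ is exactly the paper's condition that any two paths from the same $\pset$-vertex carry the same label sequence, which is also how the paper sidesteps the lack of vertex equality. The satisfiability check via $\Kset$ and the lower-bound argument via the injection $A \mapsto \rho_A(0)$ into the successors of the initial vertex are likewise the paper's.
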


\begin{proof}[Proof sketch]

  The formula $\phiset$ states that
  \begin{enumerate}
  \item the label of the initial vertex is $\{\fbt\}$ and the labels of
    non-initial vertices are $\{\fbt,0\}$, $\{\fbt,1\}$, $\{\pset,0\}$ or
    $\{\pset,1\}$;
  \item all $\fbt$-labelled vertices have a successor with label $\{\fbt,0\}$
    and one with label $\{\fbt,1\}$, and no $\pset$-labelled successor;
  \item for every path of $\fbt$-labelled vertices starting at a successor
    of the initial vertex, there is a path of $\pset$-labelled vertices
    (also starting at a successor of the initial vertex) with the same
    $\{0,1\}$ labelling;
  \item any two paths starting in the same $\pset$-labelled vertex have the same sequence of labels.
  \end{enumerate}
  In every model of $\phiset$, for every set~$A$ there is
  a $\pset$-labelled path $\rho_A$ encoding $A$ as above, starting at
  a successor of the initial vertex, and such that $A \neq A'$ implies
  $\rho_A(0) \neq \rho_{A'}(0)$.
  Details can be found in Appendix~\ref{sec:hyctl-proofs}.
\end{proof}

Before moving to the proof that \hyctlstar satisfiability is $\Sigma^2_1$-hard,
we introduce one last auxiliary formula that will be used in the reduction,
showing that addition and multiplication can be defined in \hyctlstar,
and in fact even in \hyltl, as follows:
  Let $\AP = \{\argl,\argr,\res,\add,\mult\}$ and let $\Top$ be the set of all traces $t \in {(2^{\AP})}^\omega$ such that 
  \begin{itemize}
  \item there are unique $n_1,n_2,n_3 \in \nats$ with $\argl \in t(n_1)$,
    $\argr \in t(n_2)$, and $\res \in t(n_3)$, and
  \item either $\add \in t(n)$ for all $n$ and $n_1+n_2 = n_3$,
    or $\mult \in t(n)$ for all $n$ and $n_1 \cdot n_2 = n_3$.
  \end{itemize}
The following lemma is proved in Appendix~\ref{sec:hyctl-proofs}.

\begin{lemma}\label{lem:phiop}

  There is a \hyltl sentence $\phiop$ which has $\Top$ as unique model.
\end{lemma}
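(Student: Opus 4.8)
The plan is to build $\phiop$ as a conjunction of three groups of requirements — well-formedness, soundness, and completeness — chosen so that a set $T$ of traces satisfies $\phiop$ exactly when $T = \Top$. The well-formedness requirements are purely single-trace (plain \ltl under a universal quantifier): for every $\pi$ I demand that exactly one of each of $\argl_\pi$, $\argr_\pi$, $\res_\pi$ occurs (``at least one'' via $\F$, ``at most one'' via $\G(\argl_\pi \to \X\G\neg\argl_\pi)$, and similarly for the others), and that $\pi$ carries $\add$ at every position or $\mult$ at every position but never both. These conditions guarantee that every trace of a model encodes a well-defined triple $(n_1,n_2,n_3)$ together with a choice of operation, so the remaining conjuncts only have to pin down which triples occur.

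The two arithmetic parts are built from position-comparison gadgets available in \hyltl. Writing $n_1,n_2,n_3$ for the unique marked positions of a trace $\pi$, I can compare and shift these positions across traces: $\G(\argr_\pi \leftrightarrow \argr_{\pi'})$ expresses that $\pi$ and $\pi'$ have the same $\argr$-position, while $\G(\argl_\pi \leftrightarrow \X \argl_{\pi'})$ forces the $\argl$-position of $\pi'$ to be exactly one larger than that of $\pi$ (uniqueness rules out a spurious marker at position~$0$). For addition I encode the recursion $0+n=n$ and $(m{+}1)+n = (m+n){+}1$. Completeness comes from a base trace $(0,0,0)$ together with two closure conditions of shape $\forall\pi\,\exists\pi'$ stating that incrementing $\argl$ (resp.\ $\argr$) and $\res$ simultaneously again yields an add-trace; starting from $(0,0,0)$ these generate every $(k,m,k+m)$. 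Soundness is obtained dually, from a base condition (an add-trace with $\argl$ at position $0$ has $\argr$ and $\res$ at the same position) and a predecessor condition ($\forall\pi\,\exists\pi'$ decrementing $\argl$ and $\res$). A straightforward induction on $n_1$ then shows every add-trace of a model satisfies $n_1+n_2=n_3$: the predecessor condition cannot be met when $n_1\ge 1$ but $n_3=0$ (so such traces are excluded), and otherwise it descends to a smaller instance covered by the induction hypothesis. Together, soundness and completeness force the $\add$-labelled traces of any model to be exactly the graph of addition.

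For multiplication I use $0\cdot n = 0$ and $(m{+}1)\cdot n = m\cdot n + n$, crucially referencing the already-pinned-down add-traces to realise the ``$+n$''. The multiplication step is a condition $\forall\pi\,\exists\pi'\,\exists\pi''$ asserting that if $\pi$ is a mult-trace with result $p$, then there is a mult-trace $\pi'$ with $\argl$ incremented and $\argr$ unchanged, together with an \emph{add}-trace $\pi''$ whose $\argl$-position equals $\res_\pi$, whose $\argr$-position equals $\argr_\pi$, and whose $\res$-position equals $\res_{\pi'}$; since $\pi''$ is forced to satisfy addition, this says exactly $\res_{\pi'} = p + n$. Completeness then generates $(m,n,m\cdot n)$ from the base family $(0,n,0)$ (itself obtained from $(0,0,0)$ by closing under incrementing $\argr$ while keeping $\res$ fixed), and soundness follows from the analogous descending step plus induction on $n_1$, now using add-soundness of $\pi''$ at each step. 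The mutual dependence is harmless: the add-conjuncts constrain $T$ independently of $\mult$, so in any model the add-traces are already exactly the addition graph before the mult-conjuncts are invoked.

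Finally, combining the three groups and putting the resulting Boolean combination into prenex normal form (as the paper notes \hyltl is closed under this) yields $\phiop$; checking that $\Top$ itself satisfies every conjunct is routine, and the arguments above show it is the only model. I expect the main obstacle to be the multiplication case: getting the recursion $(m{+}1)\cdot n = m\cdot n + n$ to interlock correctly with the addition gadget, and verifying that the descending soundness conditions together with well-founded induction on $n_1$ genuinely exclude every spurious triple while the ascending closure conditions genuinely include every correct one — that is, establishing \emph{uniqueness} of the model rather than mere satisfiability.
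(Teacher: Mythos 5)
Your proposal is correct and follows essentially the same route as the paper's proof: identical well-formedness constraints pinning down unique $\argl$/$\argr$/$\res$ positions and a global $\add$/$\mult$ flag, followed by Peano-style $\forall\pi\,\exists\pi'$ closure conditions ensuring all valid traces are present and descending conditions (with induction on the first argument) ensuring only valid traces are present, where the multiplication recursion $(m{+}1)\cdot n = m\cdot n + n$ delegates to an already-pinned-down add-trace exactly as in the paper. The only cosmetic difference is that the paper adds an explicit commutativity conjunct so that a single increment/decrement rule per operation suffices, whereas you use symmetric per-argument rules (plus a guarded rule generating the base family $(0,n,0)$), which is equally valid.
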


To establish $\Sigma_1^2$-hardness, we give an encoding of formulas of
existential third-order arithmetic into \hyctlstar.
As explained in Section~\ref{sec:definitions}, we can (and do for the remainder
of the section) assume that first-order (type 0) variables range over natural
numbers,
second-order (type 1) variables range over sets of natural numbers,
and third-order (type~$2$) variables range over sets of sets of natural numbers.

\begin{lemma}\label{lem:hyctlstar-reduction}
  Suppose $\phi = \exists \xt_1 \ldots \exists \xt_n.\ \psi$, where
  $\xt_1, \ldots, \xt_n$ are third-order variables,
  and $\psi$ is formula of second-order arithmetic.
  One can construct a \hyctlstar formula $\varphi'$ such that
  $(\Nat,0,1, +, \cdot, <,\in)$ is a model of $\varphi$ if and only if
  $\varphi'$ is satisfiable.
\end{lemma}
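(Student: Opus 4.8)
The plan is to build a \hyctlstar formula~$\varphi'$ that forces its models to contain the structure~$\Kset$ from Lemma~\ref{lem:phiset} (so that every subset $A \subseteq \Nat$ is encoded by a $\pset$-labelled path~$\rho_A$ with distinct initial vertices) together with the arithmetic gadget~$\Top$ from Lemma~\ref{lem:phiop}, and then to translate each quantifier of~$\phi$ into a path quantifier of the appropriate flavour. Concretely, I would take $\varphi'$ to be a formula stating that the model contains disjoint copies of the structures enforced by $\phiset$ and $\phiop$ (tagged by fresh propositions so the two parts do not interfere), followed by a translation~$\widehat\psi$ of the arithmetic matrix. The key correspondence is the one sketched in the introduction: a type~$0$ variable ranging over~$n \in \Nat$ is represented by a path encoding the singleton~$\{n\}$; a type~$1$ variable ranging over~$A \subseteq \Nat$ is represented by one of the paths~$\rho_A$; and a type~$2$ variable ranging over a set $\mathcal{A} \subseteq \pow{\Nat}$ is represented by the \emph{choice of model}, in the sense that the model commits to which $\pset$-paths are additionally marked by a proposition~$P_j$ recording membership in~$\xt_j$.

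The main work is the inductive translation~$\widehat{\cdot}$ from arithmetic formulas to \hyctlstar. I would proceed by structural induction on~$\psi$, maintaining the invariant that each free arithmetic variable currently in scope corresponds to a currently-bound path variable of~$\varphi'$. For the quantifiers: an existential first-order quantifier~$\exists x$ becomes~$\exists \pi$ relativised (via a guard~$\phisingle(\pi)$ expressible using~$\phiset$'s labelling) to $\pset$-paths encoding singletons; a first-order universal is dual; a second-order~$\exists A$ becomes~$\exists \pi$ relativised to arbitrary $\pset$-paths, using that $\phiset$ guarantees a path for every subset. Membership atoms $x \in A$ and $n \in A$ are handled by comparing the two encoding paths at the unique position where the singleton-path carries its $1$-label. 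The arithmetic atoms~$x + y = z$ and $x \cdot y = z$ are discharged by quantifying an additional path through the~$\Top$-gadget and asserting, via~$\phiop$'s defining property together with position-comparisons, that its $\argl,\argr,\res$ markers coincide with the positions encoded by~$\pi_x,\pi_y,\pi_z$. The only genuinely new ingredient beyond the plumbing is the third-order layer: the outermost existential block $\exists \xt_1 \cdots \exists \xt_n$ is \emph{not} translated into path quantifiers at all, but absorbed into the existential choice of the model, so that $\varphi'$ is satisfiable iff there exists a model (equivalently, an assignment to the~$P_j$'s over the $\pset$-paths) making the translated matrix true; this is precisely where the jump from $\Sigma_1^1$ to $\Sigma_1^2$ is realised.

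The correctness argument then splits into two directions. For soundness, from a satisfying assignment to~$\phi$ over~$(\Nat,0,1,+,\cdot,<,\in)$ I would construct a model by starting from~$\Kset$, adjoining a copy of~$\Top$, and marking each $\pset$-path~$\rho_A$ with~$P_j$ exactly when $A \in \xt_j$; an induction on~$\psi$ mirroring the translation shows $\varphi'$ is satisfied. For completeness, from any model of~$\varphi'$ I would read off an assignment: Lemma~\ref{lem:phiset} ensures every subset is represented, so the $\pset$-paths furnish a surjection onto~$\pow{\Nat}$, the $P_j$-markings define the sets~$\xt_j$, and the same induction shows the arithmetic formula holds. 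A small technical point to watch here is that a model of~$\varphi'$ may contain \emph{more} paths than~$\Kset$ (the formula~$\phiset$ only forces the existence of encodings, not minimality), so the universal path quantifiers in~$\widehat\psi$ must be carefully relativised to the genuinely "encoding" paths to avoid ranging over spurious structure; this relativisation is the technical crux I expect to cost the most care.

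I expect the main obstacle to be exactly this last issue of \emph{relativisation and non-interference}: ensuring that the universal quantifiers~$\widehat{\forall}$ only see well-formed encoding paths, that the three tagged subsystems ($\fbt$/$\pset$ part, $\Top$ part, and the singleton-encoding mechanism) coexist in one transition system without a path from one region contaminating a quantifier meant for another, and that the subtle $\last(\Pi)$-semantics of \hyctlstar path quantification (each quantified path must start at the endpoint of the most recently quantified path) is compatible with repeatedly re-selecting independent encoding paths. Handling~$\last$ correctly likely forces each quantifier in~$\widehat\psi$ to first navigate back to the initial vertex before selecting a fresh encoding path, which must itself be expressible; getting this navigation right uniformly across all quantifier types is where the bulk of the detailed construction lies, and the rest reduces to the routine inductive verification deferred, as elsewhere in the paper, to the appendix.
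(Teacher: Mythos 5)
Your proposal follows the paper's proof almost step for step: the same two gadgets ($\phiset$ from Lemma~\ref{lem:phiset} and $\phiop$ from Lemma~\ref{lem:phiop}), the same encodings (first-order variables as singleton-encoding $\pset$-paths, second-order variables as arbitrary $\pset$-paths, arithmetic atoms discharged by quantifying an auxiliary path through the $\Top$-part), and the same key move of absorbing the third-order existential block into the choice of model via fresh propositions marking the $\pset$-paths (your $P_j$ are the paper's $\ai{i}$). There is, however, one concrete missing ingredient that the correctness of the reduction hinges on: a \emph{uniformity constraint} forcing any two $\pset$-paths with the same $\{0,1\}$-labelling to agree on the markings. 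The paper includes the dedicated conjunct $\forall \pi.\ \forall \pi'.\ \big(\X \G (\pset_\pi \land \pset_{\pi'} \land (1_\pi \leftrightarrow 1_{\pi'}))\big) \rightarrow \bigwedge_{i} \G (\ai{i}_\pi \leftrightarrow \ai{i}_{\pi'})$ precisely for this. Without it your completeness direction fails: $\phiset$ admits models with several representative paths for the same set $A$ (the paper notes paths may be duplicated), and nothing stops a model from marking one representative with $P_j$ and another not. Then ``the $P_j$-markings define the sets $\xt_j$'' is not well-defined, and one can build a matrix $\psi$ --- e.g.\ asserting $\exists y.\ \exists y'.\ (\forall x.\ x \in y \leftrightarrow x \in y') \land y \in \xt_1 \land \lnot(y' \in \xt_1)$ --- whose translation is satisfiable even though $\phi$ is false over $(\Nat,0,1,+,\cdot,<,\in)$. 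You correctly flagged that models of $\varphi'$ contain spurious extra paths, but your proposed remedy (relativising quantifiers to ``genuinely encoding'' paths) does not address this marking-consistency issue, which is about the labelling being functional on sets rather than about the range of quantification.

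On your other anticipated obstacle, the $\last(\Pi)$ semantics: you slightly misread it --- a newly quantified path must start at the \emph{initial} vertex of the most recently quantified path, not at an ``endpoint'' (paths are infinite). The paper exploits exactly this: every quantified path in $\widehat\psi$ begins at the initial vertex $v_\initmark$ itself, with the actual encoding shifted one step later (hence the pervasive $\X$ guards: $\X \pset_{\pix x}$, $\X \add_\pi$, $\X (\ai{i})_{\pix y}$, and $\phiset$'s paths starting ``at a successor of the initial vertex''). Since each chosen path's first vertex is $v_\initmark$, $\last$ stays anchored there throughout the quantifier prefix, and no ``navigating back'' is needed --- which is fortunate, since moving $\last$ backwards is not expressible in \hyctlstar. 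With that device, and with the universal quantifiers relativised by using your guards as antecedents rather than conjuncts (routine, as you suggest), your construction becomes the paper's; the uniformity conjunct above is the one genuine omission to repair.
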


\begin{proof}  
  The idea of the proof is as follows. We represent sets of natural numbers
  as infinite paths with labels in $\{0,1\}$, so that quantification over
  sets of natural numbers in $\psi$ can be replaced by \hyctlstar path
  quantification.
    First-order quantification is handled in the same way, but using
  paths where exactly one vertex is labelled $1$.
In particular we encode first- and second-order variables~$x$ of $\varphi$ as path variables~$\pix{x}$ of $\varphi'$.
  For this to work, we need to make sure that every possible set
  has a path representative in the transition system (possibly several
  isomorphic ones).
  This is where formula $\phiset$ defined in Lemma \ref{lem:phiset} is used.
  For arithmetical operations, we rely on the formula $\phiop$ from
  Lemma~\ref{lem:phiop}.
  Finally, we associate with every existentially quantified third-order
  variable $\xt_i$ an atomic proposition $\ai i$, so that
  for a second-order variable $\ys$, $\ys \in \xt_i$ is interpreted as the
  atomic proposition $\ai i$ being true on $\pix y$.
  This is all explained in more details below.
  \smallskip
  
  Let $\AP = \{a_1,\ldots,a_n,0,1,\pset,\fbt,\argl,\argr,\res,\mult,\add\}$.
  From the formulas $\phiset$ and $\phiop$ defined in Lemmas~\ref{lem:phiset}
  and \ref{lem:phiop}, it is not difficult to construct a formula
  $\varphi_0$ such that:
  \begin{itemize}
  \item Every transition system constructed by extending $\Kset$ with
    all the traces in $\Top$, added as disjoint paths below
    the initial vertex, and any $\{\ai 1,\ldots,\ai n\}$ labelling,
    is a model of $\varphi_0$.
  \item In any $\tsys = (V,E,v_\initmark, \lambda)$ such that $\tsys \models \varphi$, the following conditions
    are satisfied:
    \begin{enumerate}
    \item For every $A \subseteq \Nat$, there exists a $\pset$-labelled path
      $\rho_A$ starting at a successor of $v_\initmark$ 
      such that $1 \in \lambda(\rho_A(i))$ if and only if $i \in A$,
      and $0 \in \lambda(\rho_A(i))$ if and only if $i \notin A$.
      Moreover, all such paths have the same $\{\ai 1,\ldots,\ai n\}$ labelling;
      this can be expressed by the formula
      \[
        \forall \pi.\ \forall \pi'.\
        \big(\X \G (\pset_\pi \land \pset_{\pi'} \land
        (1_\pi \leftrightarrow 1_{\pi'})\big) \rightarrow
        \bigwedge\nolimits_{\ax \in \{\ai 1, \ldots \ai n\}}
        \G (\ax_\pi \leftrightarrow \ax_{\pi'})
        \, .
      \]
    \item Every path starting at an $\add$- or $\mult$-labelled successor
      of the initial vertex has a label in $\Top$, and vice-versa.
    \end{enumerate}
  \end{itemize}

  We then let $\varphi' = \varphi_0 \land \widehat \psi$, where $\widehat \psi$
  is defined inductively from $\psi$ as follows:
  \begin{itemize}
  \item If $x$ ranges over sets of natural numbers,
    $\exists x. \psi'$ is replaced with
    $\exists \pix x.\ ((\X \pset_{\pix x}) \land \widehat {\psi'})$,
    and if $x$ ranges over natural numbers, with
    $\exists \pix x.\ ((\X \pset_{\pix x}) \land
    \X (0_{\pix x} \U (1_{\pix x} \land \X\G 0_{\pix x})) \land \widehat {\psi'})$.
  \item Predicates $\ys \in \xt_i$ where $\ys$ ranges over sets of natural
    numbers are replaced with $\X (\ai i)_{\pix \ys}$.
  \item Predicates $\xo \in \yt$ where $\xo$ ranges over natural numbers
    and $\yt$ over sets of natural numbers are replaced with
    $\F(1_{\pix \xo} \land 1_{\pix \yt})$.
  \item Predicates $\xo < \yo$ are replaced with
    $\F(1_{\pix \xo} \land \X \F 1_{\pix \yo})$.
  \item Predicates $\xo+\yo = \zo$ are replaced with
    $\exists \pi.\ (\X \add_\pi) \land \F(\argl_\pi \land 1_{\pix \xo}) \land
    \F(\argr_{\pix \xo} \land 1_{\pix \yo}) \land \F(\res_\pi \land 1_{\pix \zo})$,
    and similarly for $\xo \cdot \yo = \zo$.
  \end{itemize}

  If $\psi$ is true under some interpretation of $\xt_1, \ldots, \xt_n$
  as sets of sets of natural numbers,
  then we can construct a model of $\varphi'$ by choosing the
  $\{\ai 1,\ldots,\ai n\}$-labelling of a path $\rho_A = s^0_A s^1_A \ldots$
  in $\Kset$ as follows: $\ai i \in \lambda(\rho_A(0))$ if $A$ is in the
  interpretation of $\xt_i$.
  Conversely, if $\T \models \varphi'$ for some transition system $\T$,
  then for all sets $A \subseteq \Nat$ there is a path $\rho_A$ matching $A$ in $\T$,
  and all such paths have the same $\{\ai 1,\ldots,\ai n\}$-labelling, so we
  can define an interpretation of $\xt_1,\ldots,\xt_n$ by taking
  a set $A$ in the interpretation of $\xt_i$ if and only if $\ai i \in \lambda(\rho_A(0))$. Under this interpretation $\psi$ holds, and thus
  $\varphi$ is true.
\end{proof}

\begin{proposition}
  \hyctlstar satisfiability is $\Sigma^2_1$-hard.
\end{proposition}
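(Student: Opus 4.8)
The plan is to obtain $\Sigma_1^2$-hardness by reducing an \emph{arbitrary} $\Sigma_1^2$ set to \hyctlstar satisfiability, using \cref{lem:hyctlstar-reduction} as the main engine. By the definition of $\Sigma_1^2$ recalled in \cref{sec:definitions}, every such set has the form
\[
  S = \set{x \in \nats \mid \exists x_0 \cdots \exists x_k.\ \psi(x, x_0, \ldots, x_k)},
\]
where $x_0, \ldots, x_k$ are type-$2$ variables and $\psi$ is an arithmetic formula with arbitrary quantification over type-$0$ and type-$1$ objects — that is, precisely a formula of second-order arithmetic whose only free variables are $x$ and the $x_i$. This is exactly the syntactic shape handled by \cref{lem:hyctlstar-reduction}, apart from the single free first-order variable $x$.

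First I would eliminate that free variable by substitution. Given an input $x \in \nats$, I replace $x$ by the numeral $\underline x$ (namely $0$ if $x=0$, and $1+\cdots+1$ with $x$ summands otherwise) to obtain the closed $\Sigma_1^2$-sentence
\[
  \phi_x = \exists x_0 \cdots \exists x_k.\ \psi(\underline x, x_0, \ldots, x_k),
\]
which satisfies $x \in S$ if and only if $(\nats,0,1,+,\cdot,<,\in) \models \phi_x$, and which is computable from $x$. Since $\phi_x$ is a block of existential third-order quantifiers prefixing a formula of second-order arithmetic, it matches the hypothesis of \cref{lem:hyctlstar-reduction} verbatim. Feeding it in yields a \hyctlstar formula $\varphi'_x$ — again computable from $x$ — that is satisfiable if and only if $\phi_x$ is true. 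Composing, the map $x \mapsto \varphi'_x$ is a computable many-one reduction with $x \in S \iff \varphi'_x$ satisfiable; as $S$ was an arbitrary $\Sigma_1^2$ set, this establishes $\Sigma_1^2$-hardness.

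The one point that I would verify carefully is \emph{uniformity}: the construction of $\varphi_0$ and the inductive translation $\psi \mapsto \widehat\psi$ inside \cref{lem:hyctlstar-reduction} must be effective in the input sentence, and the numeral substitution must be recursive, so that the overall map $x \mapsto \varphi'_x$ is genuinely computable. Beyond this bookkeeping there is no real obstacle at this stage: the mathematical substance — simulating the choice of the third-order witnesses by the labelling of the model via $\phiset$, simulating type-$0$ and type-$1$ quantification by path quantification, and implementing arithmetic via $\phiop$ — is entirely contained in \cref{lem:hyctlstar-reduction,lem:phiset,lem:phiop}. Combined with the membership result \cref{prop:hyctl-membership}, this hardness bound yields the $\Sigma_1^2$-completeness asserted in \cref{thm:hyctl-sat}.
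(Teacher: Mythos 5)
Your proposal is correct and matches the paper's own proof essentially verbatim: the paper likewise reduces an arbitrary $\Sigma_1^2$ set by inlining the input $n$ into $\exists x_0 \cdots \exists x_k.\ \psi(x, x_0, \ldots, x_k)$ (your numeral substitution) and then invoking \cref{lem:hyctlstar-reduction} to obtain a computable map $n \mapsto \varphi'_n$ with $n \in N$ if and only if $\varphi'_n$ is satisfiable. Your extra remark on verifying uniformity of the translation is a sensible piece of bookkeeping that the paper leaves implicit.
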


\begin{proof}
  Let $N$ be a $\Sigma^2_1$ set, i.e.\
  $N = \set{x \in \nats \mid \exists x_0 \cdots \exists x_k.\ \psi(x, x_0, \ldots, x_k)}$ for some second-order arithmetic formula $\psi$ with existentially
  quantified third-order variables $x_i$.
  For every $n \in \Nat$, we can construct by inlining $n$ in
  $\exists x_0 \cdots \exists x_k.\ \psi(x, x_0, \ldots, x_k)$
  a sentence $\varphi_n$ such that $\varphi_n$ is true  if and only if  $n \in N$.
  Combining this with Lemma~\ref{lem:hyctlstar-reduction}, we obtain a
  computable function that maps any $n \in \Nat$ to a \hyctlstar formula
  $\phi'_n$ such that $n \in N$ if and only if $\phi'_n$ is satisfiable.
\end{proof}

\section{Conclusion}
\label{sec:conc}
In this work, we have settled the complexity of the satisfiability problems for \hyltl and \hyctlstar. In both cases, we significantly increased the lower bounds, i.e.\ from $\Sigma_1^0$ and $\Sigma_1^1$ to  $\Sigma_1^1$ and $\Sigma_1^2$, respectively, and presented the first upper bounds, which are tight in both cases.
Along the way, we also determined the complexity of restricted variants, e.g.\ \hyltl satisfiability restricted to ultimately periodic traces (or, equivalently, to finite traces) is still $\Sigma_1^1$-complete while \hyctlstar satisfiability restricted to finite transition systems is $\Sigma_1^0$-complete.
As a key step in this proof, we showed a tight bound of $\contcard$ on the size of minimal models for satisfiable \hyctlstar sentences.
Finally, we also show that deciding membership in any level of the \hyltl quantifier alternation hierarchy is $\Pi_1^1$-complete.

\subparagraph*{Acknowledgements} We thank Karoliina Lehtinen and Wolfgang Thomas for fruitful discussions. 

\newpage

\bibliography{content/references}

\clearpage
\appendix

\section{Proofs Omitted from Section~\ref{sec:sat}}
\label{sec:hyltlsat-proofs}
\subsection{Proof of Theorem~\ref{thm:hyltl-sat}}

The proof of $\Sigma_1^1$-completeness of the \hyltl satisfiability problem is split into two parts.

\begin{lemma}
\hyltl satisfiability is in $\Sigma_1^1$.	
\end{lemma}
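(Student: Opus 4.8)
The plan is to combine the countable model property with a careful accounting of the \emph{types} of the objects involved. By the theorem of Finkbeiner and Zimmermann~\cite{FZ17}, every satisfiable \hyltl sentence has a countable model, so a sentence $\varphi$ is satisfiable if and only if it has a model $T = \{t_0, t_1, \ldots\}$ consisting of countably many traces. Since $\AP$ is finite, each label $t_j(i) \in \pow{\AP}$ is coded by a natural number, so the whole model is encoded by a single type~$1$ object $F \colon \nats \to \nats$ via $F(\langle j, i \rangle) = \mathrm{code}(t_j(i))$ for a fixed pairing function $\langle \cdot, \cdot \rangle$, and conversely every such $F$ describes a countable, nonempty set of traces. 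The first move is therefore to existentially quantify $F$; this is the single leading block of existential type~$1$ quantifiers that $\Sigma_1^1$ permits. (The empty model, excluded by this encoding, is handled trivially by a separate disjunct: $\emptyset \models \varphi$ holds exactly when the leading quantifier of $\varphi$ is universal.)

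It then remains to express, using only arithmetic (type~$0$) quantification, that the set of traces encoded by $F$ is a model of $\varphi$. Write $\varphi = Q_1 \pi_1 \cdots Q_m \pi_m.\ \psi$ in prenex form with $\psi$ quantifier-free. Since the traces of the encoded model are indexed by natural numbers, each trace quantifier $Q_i \pi_i$ over $T$ is replaced verbatim by a type~$0$ quantifier $Q_i n_i$ ranging over indices, interpreting $\pi_i$ as $t_{n_i}$. The kernel $\psi$ is handled by the standard compositional translation of \ltl satisfaction into arithmetic: for a subformula $\theta$ and position $j$, the atom $a_{\pi_i}$ reads off $F$ and $n_i$, Boolean connectives translate directly, $\X$ shifts $j$ to $j+1$, and $\psi_1 \U \psi_2$ at $j$ becomes $\exists k \ge j.\ (\text{$\psi_2$ at } k \wedge \forall j'.\ j \le j' < k \to \text{$\psi_1$ at } j')$. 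As $\psi$ is a fixed finite formula, this yields a single arithmetic formula $\chi(F, n_1, \ldots, n_m)$ with only type~$0$ position quantifiers, expressing $(T,\Pi) \models \psi$ for the assignment $\Pi(\pi_i) = t_{n_i}$. Putting the pieces together, $\varphi$ is satisfiable if and only if
\[
  \exists F.\ Q_1 n_1 \cdots Q_m n_m.\ \chi(F, n_1, \ldots, n_m),
\]
which is one existential type~$1$ quantifier followed by arbitrary type~$0$ quantification, i.e.\ a $\Sigma_1^1$ formula.

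As an alternative matching the proof sketch, one may instead Skolemise the existential trace quantifiers, adding finitely many existential type~$1$ quantifiers for Skolem functions $f_i \colon \nats^{u_i} \to \nats$, so that the body universally quantifies only the indices of the universal trace variables; and one may replace the direct translation of $\psi$ by an additional existentially quantified type~$1$ \emph{labelling function} that annotates each position of each relevant assignment with the set of subformulas it satisfies. Both variants leave the matrix arithmetical, so either route establishes membership in $\Sigma_1^1$.

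The one point I would treat most carefully is the until operator. Its built-in eventuality is an unbounded search over future positions, which is harmless here because it is a type~$0$ existential. However, in the labelling-function formulation local consistency of the annotation is \emph{not} sufficient: one must additionally enforce the least-fixed-point (eventuality) condition, namely that $\psi_1 \U \psi_2$ is labelled true at $j$ only if some later position actually satisfies $\psi_2$, in order to rule out the spurious greatest-fixed-point labellings along infinite suffixes where $\psi_2$ never holds. With this constraint the labelling is pinned down to the correct \ltl semantics, and the whole statement stays within the arithmetic part, so $\Sigma_1^1$-membership follows.
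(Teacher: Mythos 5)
Your proof is correct, and both of its ingredients (the countable model property of~\cite{FZ17} plus arithmetization of the semantics) are the same as the paper's; the difference lies in how the trace-quantifier prefix is handled. Your main route replaces the prefix $Q_1\pi_1\cdots Q_m\pi_m$ \emph{verbatim} by type~$0$ quantifiers over trace indices, producing for each sentence $\varphi$ a separate $\Sigma_1^1$ sentence whose quantifier structure depends on $\varphi$. The paper instead quantifies existentially over Skolem functions for the existentially occurring trace variables \emph{and} over an expansion (labelling) function, and then verifies, uniformly in the code of $\varphi$, that every Skolem-consistent assignment has a correct expansion with the top formula true at position~$0$ --- precisely your ``alternative'' variant. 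This is not an idle stylistic choice: membership of the satisfiability \emph{problem} (a set of codes) in $\Sigma_1^1$ requires a single $\Sigma_1^1$ formula with the code of $\varphi$ free, and your varying prefix does not directly provide one. Your argument still goes through, but only via the additional standard fact that truth of $\Sigma_1^1$ sentences is itself a $\Sigma_1^1$-complete predicate and $\Sigma_1^1$ is closed under computable many-one reductions (see Rogers~\cite{Rogers87}); you should say this explicitly. The paper's Skolem-plus-expansion formulation is exactly what buys uniformity without that detour. Your closing caution about $\U$ is well taken and matches the paper: its consistency condition for until is a global biconditional with an unbounded (type~$0$) existential, which rules out the spurious greatest-fixed-point labellings that mere local consistency would admit.
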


\begin{proof}
Let $\varphi$ be a \hyltl formula, let $\Phi$ denote the set of quantifier-free subformulas of $\varphi$, and let $\Pi$ be a trace assignment whose domain contains the variables of $\varphi$. The expansion of $\varphi$ on $\Pi$ is the function~$\expansion_{\varphi, \Pi} \colon \Phi \times \nats \rightarrow \set{0,1}$ with
\[
\expansion_{\varphi, \Pi}(\psi, j)= \begin{cases}
1 &\text{if $\suffix{\Pi}{j} \models \psi$, and}\\
0 &\text{otherwise.}
\end{cases} 
\]
The expansion is completely characterised by the following consistency conditions:
\begin{itemize}
	\item $\expansion_{\varphi, \Pi} (a_\pi,j)=1$ if and only if $a \in \Pi(\pi)(j)$.
	\item $\expansion_{\varphi, \Pi} (\neg \psi,j)=1$ if and only if  $\expansion_{\varphi, \Pi} (\psi,j)=0$.
	\item $\expansion_{\varphi, \Pi} (\psi_1 \vee \psi_2,j)=1$ if and only if $\expansion_{\varphi, \Pi} (\psi_1,j)=1$ or $\expansion_{\varphi, \Pi} (\psi_2,j)=1$.
	\item $\expansion_{\varphi, \Pi} (\X\psi,j)=1$ if and only if $\expansion_{\varphi, \Pi} (\psi,j+1)=1$.
	\item $\expansion_{\varphi, \Pi} (\psi_1 \U \psi_2,j)=1$ if and only if there is a $j' \ge j$ such that $\expansion_{\varphi, \Pi} (\psi_2,j')=1$ and $\expansion_{\varphi, \Pi} (\psi_2,j'')=1$ for all $j'$ in the range~$j \le j'' < j'$.
\end{itemize}

Every satisfiable \hyltl sentence has a countable model~\cite{FZ17}.
Hence, to prove that the \hyltl satisfiability problem is in $\Sigma_1^1$, we express, for a given \hyltl sentence encoded as a natural number, the existence of the following type~$1$ objects (relying on the fact that there is a bijection between finite sequences over $\nats$ and $\nats$ itself):
\begin{itemize}

	\item A countable set of traces over the propositions of $\varphi$ encoded as a function~$T$ from~$\nats \times \nats$ to $\nats$, mapping trace names and positions to (encodings of) subsets of the set of propositions appearing in $\varphi$.
	
	\item A function~$S$ from $\nats \times \nats^*$ to $\nats$ to be interpreted as Skolem functions for the existentially quantified variables of $\varphi$, i.e.\ we map a variable (identified by a natural number) and a trace assignment of the variables preceding it (encoded as a sequence of natural numbers) to a trace name.

	\item A function~$E$ from $\nats \times \nats \times \nats$ to $\nats$, where, for a fixed~$a \in\nats$ encoding a trace assignment~$\Pi$, the function~$x,y\mapsto E(a, x, y)$ is interpreted as the expansion of $\varphi$ on $\Pi$, i.e.\ $x$ encodes a subformula in $\Phi$ and $y$ is a position.

\end{itemize}
Then, we express the following properties using only type~$0$ quantification: For every trace assignment of the variables in $\varphi$, encoded by $a \in \nats$, if $a$ is consistent with the Skolem function encoded by $S$, then the function~$x,y\mapsto E(a, x, y)$ satisfies the consistency conditions characterizing the expansion, and we have $E(a,x_0, 0) = 1$, where $x_0$ is the encoding of the maximal quantifier-free subformula of $\varphi$.
 We leave the tedious, but standard, details to the industrious reader. 
\end{proof}

\begin{lemma}
\hyltl satisfiability is $\Sigma_1^1$-hard.	
\end{lemma}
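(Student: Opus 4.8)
The plan is to reduce from the recurring tiling problem, which the excerpt has already established is $\Sigma_1^1$-complete \cite{Harel85}. Given a finite tile set $\mathcal{T}$ with designated $\tau_0$, I must construct a \hyltl sentence that is satisfiable if and only if there is a tiling $T\colon\N\times\N\to\mathcal{T}$ of the positive quadrant respecting the horizontal and vertical colour-matching constraints, with $T(0,j)=\tau_0$ for infinitely many $j$. Following the sketch, each trace will encode a vertical slice $\{i\}\times\N$ of the grid, and I will use a distinguished proposition~$x$ to index slices: the plan is that in every trace~$x$ holds at exactly one position, and the position at which $x$ holds serves as the column index~$i$ of that trace.

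First I would write the \emph{grid formula}~$\varphi_{\mathit{grid}}$ asserting: (a) in each trace $x$ holds at exactly one time, expressible as $\forall\pi.\ \F x_\pi \land \G(x_\pi \rightarrow \X\G\neg x_\pi)$; (b) column indices are realised densely and uniquely, i.e.\ for every $i\in\N$ there is exactly one trace with $x$ at position~$i$. Existence of a column~$0$ trace is $\exists\pi.\ x_\pi$, and the successor/density property is captured by $\forall\pi.\ \exists\pi'.\ \F(x_\pi \land \X x_{\pi'})$, forcing a trace for column~$i+1$ whenever one exists for column~$i$. Uniqueness is $\forall\pi.\ \forall\pi'.\ \big(\F(x_\pi\land x_{\pi'}) \rightarrow \G(\bigwedge_{\tau\in\mathcal{T}} (\tau_\pi \leftrightarrow \tau_{\pi'}))\big)$, i.e.\ any two traces whose $x$-positions coincide carry identical tile-data at every height and so denote the same column. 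This is the delicate part: I must ensure the indexing is rigid enough that two traces with the same $x$-position are genuinely interchangeable, while admitting models where the index set is exactly an initial segment of~$\N$ realising all of~$\N$.

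Next I would write the \emph{tiling formula}~$\varphi_{\mathit{tile}}$. Treating each $\tau\in\mathcal{T}$ as an atomic proposition, I assert with a single $\forall\pi$ that exactly one tile holds at each position (well-definedness of $T$), and that the \emph{vertical} constraint holds within each trace: $\G\bigwedge_{\tau_1,\tau_2}\big((\tau_{1,\pi}\land \X\tau_{2,\pi}) \rightarrow [\tau_1(\north)=\tau_2(\south)]\big)$, where the bracketed condition is simply $\true$ or $\false$ depending on whether the colours match. The \emph{horizontal} constraint relates a column to its right neighbour: using the $x$-indexing to align heights, I assert $\forall\pi.\ \forall\pi'.\ \big(\F(x_\pi \land \X x_{\pi'}) \rightarrow \G\bigwedge_{\tau_1,\tau_2}((\tau_{1,\pi}\land\tau_{2,\pi'})\rightarrow[\tau_1(\east)=\tau_2(\west)])\big)$, so that whenever $\pi'$ is the column immediately to the right of $\pi$, their tiles at each common height agree on the east/west colours. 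Finally the recurrence condition $T(0,j)=\tau_0$ infinitely often is $\exists\pi.\ \big(x_\pi \land \G\F \tau_{0,\pi}\big)$, picking the column-$0$ trace and requiring $\tau_0$ to recur along it.

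The reduction outputs the prenex normal form of $\varphi_{\mathit{grid}}\land\varphi_{\mathit{tile}}$. I would then argue both directions of correctness. For soundness, any recurring tiling yields the model $\{t_i \mid i\in\N\}$ where $t_i$ places $x$ at height~$i$ and tile $T(i,j)$ at each height~$j$; a routine check confirms all conjuncts hold. For completeness, any model~$M$ satisfies the grid formula, so the map sending the unique column-$i$ trace to its tile-sequence defines a tiling; the horizontal and vertical conjuncts guarantee the colour constraints, and the recurrence conjunct gives $\tau_0$ infinitely often in column~$0$. The main obstacle I anticipate is the completeness direction of the grid encoding: I must rule out pathological models where the $x$-positions skip values or where column data is inconsistent. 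The uniqueness conjunct handles inconsistency, and the density conjunct $\forall\pi.\exists\pi'.\F(x_\pi\land\X x_{\pi'})$ together with the column-$0$ witness forces the realised index set to contain an initial segment covering all of~$\N$; traces with indices outside this segment, or duplicate traces, are harmless precisely because uniqueness makes them copies of genuine columns.
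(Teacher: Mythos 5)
Your proposal is correct and follows essentially the same route as the paper's proof: a reduction from the recurring tiling problem in which each trace encodes a column of the grid, a uniquely occurring proposition~$x$ indexes the columns, density and uniqueness conjuncts make the model grid-like, and further conjuncts enforce exactly one tile per point, the vertical and horizontal colour constraints, and recurrence of $\tau_0$ in column~$0$. Your formulas differ only cosmetically from the paper's (e.g.\ $\F x_\pi \land \G(x_\pi \rightarrow \X\G\neg x_\pi)$ in place of the paper's $\neg x_\pi \U (x_\pi \wedge \X\G\neg x_\pi)$, and implication form rather than disjunction form for the matching conditions), and your correctness argument matches the paper's.
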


\begin{proof}
We reduce the recurring tiling problem to \hyltl satisfiability. 
In this reduction, each $x$-coordinate in the positive quadrant will be represented by a trace, and each $y$-coordinate by a point in time. In order to keep track of which trace represents which $x$-coordinate, we use one designated atomic proposition $x$ that holds on exactly one time point in each trace: $x$ holds at time $i$ if and only if the trace represents $x$-coordinate $i$.

For this purpose, let $\mathcal{C}$ and $\mathcal{T}$ be given, and define the following formulas over $\ap = \set{ x } \cup \mathcal{T}$:

\begin{itemize}
  \item Every trace has exactly one point where $x$ holds:
  \[\varphi_1 = \forall \pi.\ (\neg x_\pi \U (x_\pi\wedge \X\G\neg x_\pi))\]
  \item   For every $i\in\mathbb{N}$, there is a trace with $x$ in the $i$-th position:
  \[\phi_2 = (\exists \pi.\ x_\pi) \wedge (\forall \pi_1.\ \exists \pi_2.\ \F(x_{\pi_1}\wedge \X x_{\pi_2}))\]
  \item
  If two traces represent the same $x$-coordinate, then they contain the same tiles:
  \[\varphi_3 = \forall \pi_1.\ \forall\pi_2.\ (\F(x_{\pi_1}\wedge x_{\pi_2})\rightarrow \G(\bigwedge_{\tau\in \mathcal{T}}(\tau_{\pi_1}\leftrightarrow \tau_{\pi_2})))\]
  \item Every time point in every trace contains exactly one tile:
  \[\varphi_4=\forall \pi.\ \G\bigvee_{\tau\in \mathcal{T}}(\tau_\pi\wedge \bigwedge_{\tau'\in T\setminus \{\tau\}}\neg (\tau')_\pi)\]
  \item Tiles match vertically:
  \[\varphi_5=\forall \pi.\ \G\bigvee_{\tau\in \mathcal{T}}(\tau_\pi\wedge \bigvee\nolimits_{\tau'\in\{\tau'\in \mathcal{T}\mid \tau(\north)=\tau'(\south)\}}\X (\tau')_\pi)\]
  \item Tiles match horizontally:
  \[\varphi_6=\forall \pi_1.\ \forall\pi_2.\ (\F(x_{\pi_1}\wedge \X x_{\pi_2})\rightarrow \G\bigvee_{\tau\in \mathcal{T}}(\tau_{\pi_1}\wedge \bigvee\nolimits_{\tau'\in \{\tau'\in \mathcal{T}\mid \tau(\east)=\tau'(\west)\}}(\tau')_{\pi_2}))\]
  \item Tile~$\tau_0$ occurs infinitely often at $x$-position $0$:
  \[\varphi_7=\exists \pi.\ (x_\pi \wedge \G\F \tau_0)\]
\end{itemize}

Finally, take $\varphi_{\mathcal{T}} = \bigwedge_{1\leq i \leq 7}\varphi_i$. Technically $\varphi_{\mathcal{T}}$ is not a \hyltl formula, since it is not in prenex normal form, but it can be trivially transformed into one. Collectively, subformulas $\varphi_1$--$\varphi_3$ are satisfied in exactly those sets of traces that can be interpreted as $\mathbb{N}\times\mathbb{N}$. Subformulas $\varphi_4$--$\varphi_6$ then hold if and only if the $\mathbb{N}\times\mathbb{N}$ grid is correctly tiled with $\mathcal{T}$. Subformula $\varphi_7$, finally, holds if and only if the tiling uses the tile $\tau_0$ infinitely often at $x$-coordinate $0$. Overall, this means $\varphi_{\mathcal{T}}$ is satisfiable if and only if $\mathcal{T}$ can recurrently tile the positive quadrant.

The $\Sigma^1_1$-hardness of \hyltl satisfiability therefore follows from the $\Sigma_1^1$-hardness of the recurring tiling problem~\cite{Harel85}.
\end{proof}

\subsection{Proof of Theorem~\ref{thm:hyltlsatup-copmleteness}}

Recall that we need to prove that the \hyltl satisfiability problem restricted to ultimately periodic traces is $\Sigma_1^1$-complete.

\begin{proof}[Proof of \cref{thm:hyltlsatup-copmleteness}]
By a reduction of the existence problem for a tiling $\{(i,j)\in \mathbb{N}\times \mathbb{N}\mid i\geq j\}$ with the property that $\tau_0$ occurs at least once on each row.

That problem can be reduced to HyperLTL satisfiability on ultimately periodic traces. The reduction in question is very similar to the one discussed above, with the necessary changes being: (i) every time point beyond $x$ satisfies the special tile ``null'', (ii) horizontal and vertical matching are only checked at or before time point $x$ and (iii) for every $\pi_1$ there is a $\pi_2$ such that $\pi_2$ has designated tile~$\tau_0$ at the time where $\pi_1$ satisfies $x$ (so $\tau_0$ holds at least once in every row).
\end{proof}

\section{Strictness of the Quantifier Alternation Hierarchy}
\label{sec:translations}
The goal of this section is to prove \cref{thm:strictness-finite}, i.e.\ the strictness of the \hyltl quantifier alternation hierarchy, with a witness formula whose models are finite sets of traces in $(2^\AP)^\ast\cdot \emptyset^\omega$.
We rely on the fact that the quantifier alternation hierarchy of first-order
logic over finite words ($\FOw$) is strict \cite{CohenB71,Thomas82}, and embed
it into the \hyltl hierarchy.
The proof is organised as follows. We first define an encoding of finite words
as sets of traces. We then show that every first-order formula can be
translated into an equivalent (modulo encodings) HyperLTL formula with the
same quantifier prefix.
Finally, we show how to translate back HyperLTL formulas into $\FOw$ formulas
with the same quantifier prefix, so that if the HyperLTL alternation quantifier
hierarchy collapsed, then so would the hierarchy for $\FOw$.

\subparagraph{First-Order Logic over Words.}
Let $\AP$ be a finite set of atomic propositions. A finite word over $\AP$
is a finite sequence $w = w(0)w(1) \cdots w(k)$ with $w(i) \in 2^\AP$ for all
$i$. We let $|w|$ denote the \emph{length} of $w$, and
$\pos(w) = \{0,\ldots,|w|-1\}$ the set of \emph{positions} of $w$.
The set of all finite words over $\AP$ is ${(2^\AP)}^\ast$.

Assume a countably infinite set of variables $\varfo$.
The set of $\FOw$ formulas is given by the grammar
\[
  \varphi ::= a(x) \mid x \le y \mid\lnot \varphi
  \mid \varphi \lor \varphi \mid  \exists x.\ \varphi \mid \forall x.\ \varphi \, ,
\]
where $a \in \AP$ and $x,y \in \varfo$.
The set of free variables of $\varphi$ is denoted $\Free(\varphi)$.
A sentence is a formula without free variables.

The semantics is defined as follows, $w \in {(2^\AP)}^\ast$ being a finite word
and $\nu : \Free(\varphi) \to \pos(w)$ an interpretation mapping variables to
positions in $w$:
\begin{itemize}
\item $(w, \nu) \models a(x)$ if $a \in w(\nu(x))$.
\item $(w, \nu) \models x \le y$ if $\nu(x) \le \nu(y)$.
\item $(w, \nu) \models \lnot \varphi$ if $w, \nu \not\models \varphi$.
\item $(w, \nu) \models \varphi \lor \psi$ if $w, \nu \models \varphi$ or
  $(w, \nu) \models \psi$.
\item $(w, \nu) \models \exists x.\ \varphi$ if there exists a position
  $n \in \pos(w)$ such that $(w,\nu[x \mapsto n]) \models \varphi$.
  \item $(w, \nu) \models \forall x.\ \varphi$ if for all positions
  $n \in \pos(w)$: $(w,\nu[x \mapsto n]) \models \varphi$.
\end{itemize}
If $\varphi$ is a sentence, we write
$w \models \varphi$ instead of $(w,\nu) \models \varphi$.

As for \hyltl, a $\FOw$ formula in prenex normal form is a $\Sigma_n$-formula if its quantifier
prefix consists of $n$ alternating blocks of
quantifiers (some of which may be empty), starting with a block of
existential quantifiers.
We let $\Sfo n$ denote the class of languages of finite words definable
by $\Sigma_n$-sentences.

\begin{theorem}[\cite{Thomas82,CohenB71}]\label{thm:strictness-FO}
  The quantifier alternation hierarchy of $\FOw$ is strict: for all
  $n \ge 0$, $\Sfo {n} \subsetneq \Sfo {n+1}$.
\end{theorem}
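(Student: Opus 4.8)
The plan is to prove strictness through an Ehrenfeucht--Fra\"iss\'e argument tailored to the quantifier alternation hierarchy. The starting point is a game characterization of $\Sigma_n$-definability: for finite words $u,v \in (2^\AP)^\ast$ and parameters $n,k$, I would define the \emph{$\Sigma_n$-game with $k$ rounds} on $(u,v)$ as the usual pebble game in which Spoiler is allowed to change the word he plays in at most $n-1$ times, always beginning by placing a pebble in $u$ (the existential block). The standard Fra\"iss\'e-type theorem then reads: Duplicator wins this game if and only if every $\Sigma_n$-sentence of quantifier rank at most $k$ that holds in $u$ also holds in $v$. I would first establish this equivalence (or cite it), which is proved exactly as the classical Ehrenfeucht--Fra\"iss\'e theorem, by induction on $k$ with a case distinction on whether Spoiler spends part of his alternation budget in the current round.

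With this tool in hand, it suffices, for each $n$, to exhibit a language $L_n \in \Sfo{n+1}$ together with, for every quantifier rank $k$, a pair of words $u_k \in L_n$ and $v_k \notin L_n$ on which Duplicator wins the $k$-round $\Sigma_n$-game. Indeed, if $L_n$ were defined by a $\Sigma_n$-sentence $\varphi$ of rank $k$, then $u_k \models \varphi$ together with Duplicator's win would force $v_k \models \varphi$, contradicting $v_k \notin L_n$; hence $L_n \notin \Sfo{n}$ while $L_n \in \Sfo{n+1}$. The natural witnesses come from the Straubing--Th\'erien hierarchy and are defined by nested block patterns over a binary alphabet: $L_{n+1}$ is obtained from $L_n$ by one further layer of concatenation, so that a $\Sigma_{n+1}$-sentence describes it by adding one quantifier block, while any $\Sigma_n$-sentence is ``one block short.'' The words $u_k, v_k$ are built recursively, padding each block to a length far exceeding $k$ so that no sequence of $k$ pebbles can exhaust a block.

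The heart of the argument, and the step I expect to be hardest, is the Duplicator strategy. One argues by induction on the number $n$ of remaining alternations: within each existential or universal block, whatever finitely many pebbles Spoiler places, Duplicator responds so as to preserve an order- and label-preserving partial isomorphism between the played positions, while maintaining the invariant that the two words still look identical at the coarser granularity visible with $n-1$ further alternations. The padding guarantees that Spoiler, constrained to $n-1$ switches and $k$ pebbles, can never ``zoom in'' far enough to separate the innermost distinguishing block of $L_n$. Making this invariant precise, and checking that the recursive witness construction indeed places $L_n$ at level $n+1$ but not at level $n$, is the main combinatorial obstacle; this is exactly the content of the cited works of Cohen--Brzozowski and Thomas, whose block constructions I would reproduce rather than reinvent the witnesses.
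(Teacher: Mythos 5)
The paper does not actually prove this theorem: it is imported verbatim from the cited works of Cohen--Brzozowski and Thomas, so there is no in-paper argument for your proposal to diverge from. Your outline --- the Ehrenfeucht--Fra\"iss\'e game characterization of $\Sigma_n$ (Spoiler starting in $u$ and allowed at most $n-1$ switches of word, capturing preservation of rank-$k$ $\Sigma_n$-sentences from $u$ to $v$), together with padded block witnesses from the Straubing--Th\'erien hierarchy (correctly chosen, since $\FOw$ has order but no successor) --- is precisely the standard argument found in those references, and like the paper you defer the core combinatorics (the witness languages $L_n$ and the Duplicator invariant) to the same citations, which is the appropriate level of detail here.
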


\subparagraph{Encodings of Words.}
The idea is to encode a word $w \in {(2^\AP)}^\ast$ as a set of traces
$T$ where each trace in $T$ corresponds to a position in $w$;
letters in the word are reflected in the label of the first
position of the corresponding trace in $T$, while the total order $<$ is
encoded using a fresh proposition $o \notin \ap$. More precisely, each trace
has a unique position labelled~$o$, distinct from one trace to another,
and traces are ordered according to the order of appearance of
the proposition $o$.
Note that there are several possible encodings for a same word, and we may fix
a canonical one when needed.
This is defined more formally below.

A \emph{stretch function} is a monotone funtion~$f: \Nat \to \Nat\setminus\set{0}$, i.e.\ it satisfies $0 < f(0) < f(1) < \cdots$.
For all words $w \in {(2^{\AP})}^\ast$ and stretch functions~$f$, we define the set of traces~$\enc w f = \{t_n \mid n \in \pos(w)\} \subseteq (2^{\AP \cup \{o\}})^\ast\emptyset^\omega$
as follows: for all $i \in \Nat$,
\begin{itemize}
\item for all $a \in \AP$,
  $a \in t_n(i)$ if and only if $i = 0$ and $a \in w(n)$
\item $o \in t_n(i)$ if and only if $i = f(n)$.
\end{itemize}

It will be convenient to consider encodings with arbitrarily large spacing
between $o$'s positions.
To this end, for every $N \in \Nat$, we define a particular encoding
\[
  \encn w N = \enc w {n \mapsto N(n+1)} \, .
\]
So in $\encn w N$, two positions with non-empty labels are at distance at
least $N$ from one another.

Given $T = \enc w f$ and a trace assignment
$\Pi \colon \var \rightarrow T$,
we let $\Tn T N = \encn w N$, and
$\nun \Pi N \colon \var \rightarrow \Tn T N$ the trace assignment defined
by shifting the $o$ position in each $\Pi(\pi)$ accordingly, i.e.,
\begin{itemize}
	\item $o \in \Pi^{(N)}(\pi) (N(i+1))$ if and only if $o \in \Pi(\pi)(f(i))$ and
	\item for all $a \in \ap$: $a \in \Pi^{N}(\pi)(0)$ if and only if $a \in \Pi(\pi)(0)$.
\end{itemize}

\subparagraph{From FO to HyperLTL.}
We associate with every $\FOw$ formula $\varphi $ in prenex normal form
a \hyltl formula $\encPhi \varphi$ over $\ap \cup \{o\}$ by replacing
in $\varphi$:
\begin{itemize}
\item $a(x)$ with $a_x$, and
\item $x \le y$ with $\F(o_x \land \F o_y)$. 
\end{itemize}
In particular, $\encPhi \varphi$ has the same quantifier prefix as $\varphi$, which means that we treat variables of $\varphi$ as trace variables of $\encPhi{\varphi}$.

\begin{lemma}\label{lem:FO-to-hyltl}
  For every $ \FOw$ sentence $\varphi $ in prenex normal form,
  $\varphi$ is equivalent to $\encPhi \varphi$ in the following sense:
  for all $w \in {(2^{\AP})}^\ast$ and all stretch functions~$f$,
  \[
    w \models \varphi \quad\text{if and only if}\quad
    \enc w f \models \encPhi \varphi \, .
  \]
\end{lemma}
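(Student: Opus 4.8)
The plan is to prove a stronger statement about arbitrary formulas, so that a routine structural induction closes, and then read off the lemma as the special case of sentences. Throughout, fix a word $w \in (2^{\AP})^\ast$ and a stretch function $f$, and write $T = \enc{w}{f} = \{t_n \mid n \in \pos(w)\}$. The crucial structural observation is that $n \mapsto t_n$ is a \emph{bijection} from $\pos(w)$ onto $T$: since $f$ is strictly increasing, distinct positions $n \neq n'$ have distinct $o$-markers $f(n) \neq f(n')$, so $t_n \neq t_{n'}$. For an interpretation $\nu \colon \Free(\varphi) \to \pos(w)$, let $\Pi_\nu \colon \Free(\varphi) \to T$ be the trace assignment with $\Pi_\nu(x) = t_{\nu(x)}$. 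I would then prove, by induction on the structure of $\varphi$ (using the compositional reading of $\encPhi{\cdot}$, which commutes with the connectives and quantifiers as asserted in the definition), that for every $\FOw$ formula $\varphi$ in prenex normal form and every such $\nu$,
\[
  (w, \nu) \models \varphi \quad\text{if and only if}\quad (T, \Pi_\nu) \models \encPhi{\varphi}.
\]
The lemma is the case where $\varphi$ is a sentence, so that $\nu$ and $\Pi_\nu$ are empty.

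For the base cases, note first that $\encPhi{\cdot}$ is evaluated at position $0$, where $t_n$ carries exactly the label $w(n)$ (the $o$-marker sits at $f(n) \ge 1$, hence never at position $0$). Thus $(T,\Pi_\nu) \models a_x$ holds iff $a \in t_{\nu(x)}(0) = w(\nu(x))$, matching $(w,\nu) \models a(x)$. The order atom is the heart of the argument: since each trace $t_n$ has a \emph{unique} $o$-position, namely $f(n)$, the formula $\F(o_x \land \F o_y)$ holds under $\Pi_\nu$ iff there is a $j$ with $o \in t_{\nu(x)}(j)$ and some $j' \ge j$ with $o \in t_{\nu(y)}(j')$, which forces $j = f(\nu(x))$ and $j' = f(\nu(y))$, so the formula is equivalent to $f(\nu(x)) \le f(\nu(y))$. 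By strict monotonicity of $f$ this is in turn equivalent to $\nu(x) \le \nu(y)$, i.e.\ to $(w,\nu) \models x \le y$. The Boolean cases $\lnot$ and $\lor$ follow immediately from the induction hypothesis and the semantic clauses.

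Finally, the quantifier cases are where the bijection $n \mapsto t_n$ is used. For $\exists x.\ \varphi$, the HyperLTL semantics give $(T, \Pi_\nu) \models \exists x.\ \encPhi{\varphi}$ iff there is a trace $t \in T$ with $(T, \Pi_\nu[x \mapsto t]) \models \encPhi{\varphi}$; writing $t = t_n$ for the unique $n \in \pos(w)$ with $t = t_n$, and observing that $\Pi_\nu[x \mapsto t_n] = \Pi_{\nu[x \mapsto n]}$, the induction hypothesis turns this into the existence of $n \in \pos(w)$ with $(w, \nu[x \mapsto n]) \models \varphi$, which is exactly $(w,\nu) \models \exists x.\ \varphi$; the universal case is dual. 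There is no serious obstacle here — the argument is a direct induction — but the step that genuinely requires care, and where all the hypotheses on the encoding are consumed, is the order atom: correctness of $\F(o_x \land \F o_y)$ depends both on each trace carrying exactly one $o$, and on $f$ being strictly monotone, so that the arithmetic order of the $o$-positions faithfully mirrors the order of the positions of $w$.
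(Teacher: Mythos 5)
Your proof is correct and is essentially the paper's intended argument: the paper states \cref{lem:FO-to-hyltl} without an explicit proof, treating it as a routine structural induction, and your strengthened claim for formulas with free variables---using the bijection $n \mapsto t_n$ (injective by strict monotonicity of $f$) for the quantifier cases, and the unique $o$-position of each trace together with order-preservation of $f$ for the atom $x \le y$---is precisely that induction. You also correctly handle the one point that could trip this up, namely that $f(n) \ge 1$ keeps the $o$-marker off position $0$, so $t_n(0) = w(n)$ and the label atoms are evaluated faithfully.
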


In particular, note that the evaluation of $\encPhi \varphi$ on $\enc w f$ does
not depend on $f$. We call such a formula \emph{stretch-invariant}:
a \hyltl sentence  $\varphi$ is \emph{stretch-invariant} if for all finite words $w$ and all stretch functions~$f$ and $g$,
\[
  \enc w f \models \varphi \quad\text{ if and only if }\quad
  \enc w g \models \varphi \, .
\]

\begin{lemma}\label{lem:stretchinv}
  For all $\varphi \in \FOw$, $\encPhi \varphi$ is stretch-invariant.
\end{lemma}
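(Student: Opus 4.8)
The plan is to read off \cref{lem:stretchinv} directly from \cref{lem:FO-to-hyltl}, which has already done all the work. The key observation is that \cref{lem:FO-to-hyltl} characterises when $\enc w f \models \encPhi \varphi$ purely in terms of the word $w$ and the $\FOw$-semantics of $\varphi$, with no reference whatsoever to the stretch function $f$. Stretch-invariance is then just the assertion that two equivalent conditions (one for $f$, one for $g$) are equivalent to each other.

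Concretely, I would fix a $\FOw$ sentence $\varphi$ in prenex normal form, a finite word $w \in {(2^{\AP})}^\ast$, and two stretch functions $f$ and $g$. Applying \cref{lem:FO-to-hyltl} once with $f$ and once with $g$ yields
\[
  \enc w f \models \encPhi \varphi
  \quad\Longleftrightarrow\quad w \models \varphi
  \quad\Longleftrightarrow\quad \enc w g \models \encPhi \varphi \, .
\]
Chaining these two biconditionals gives $\enc w f \models \encPhi \varphi$ if and only if $\enc w g \models \encPhi \varphi$, which is exactly the definition of stretch-invariance for the sentence $\encPhi \varphi$. Since $w$, $f$, and $g$ were arbitrary, this completes the argument.

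There is no real obstacle here: the statement is a corollary whose entire content lives in \cref{lem:FO-to-hyltl}. The substance --- namely that the \hyltl encoding of the order predicate, $\F(o_x \land \F o_y)$, correctly reads off the relative order of the $o$-positions regardless of their absolute spacing --- is discharged in the proof of that earlier lemma, presumably by structural induction on $\varphi$ matching each $\FOw$ quantifier to the corresponding \hyltl trace quantifier over $\enc w f$. If one preferred a self-contained argument, one could instead prove stretch-invariance directly by induction on $\varphi$, showing at each step that the truth value of the encoded subformula under a trace assignment depends only on the ordering of the $o$-positions and not on their exact locations; but this merely reproves the content of \cref{lem:FO-to-hyltl}, so invoking it is the economical route.
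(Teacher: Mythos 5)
Your proposal is correct and is exactly the paper's argument: the paper states Lemma~\ref{lem:stretchinv} as an immediate consequence of Lemma~\ref{lem:FO-to-hyltl} (``In particular, note that the evaluation of $\encPhi \varphi$ on $\enc w f$ does not depend on $f$''), which is precisely your chaining of the two biconditionals through $w \models \varphi$. No gap to report.
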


\subparagraph{Going Back From HyperLTL to FO.}
Let $\simpleHyperLTL$ denote the fragment of $\HyperLTL$ consisting of
all formulas $\encPhi \varphi$, where $\varphi$ is a $\FOw$ formula in
prenex normal form.
Equivalently, $\psi \in \simpleHyperLTL$ if it is a HyperLTL formula of the
form $\psi = Q_1x_1 \cdots Q_kx_k.\ \psi_0$, where $\psi_0$ is a Boolean
combination of formulas of the form  $a_x$ or $\F(o_x \land \F o_y)$.

Let us prove that every \hyltl sentence is equivalent, over
sets of traces of the form~$\enc w f$, to a sentence in $\simpleHyperLTL$
with the same quantifier prefix.
This means that if a \hyltl sentence $\encPhi \varphi$ is equivalent to a \hyltl
sentence with a smaller number of quantifier alternations, then it is also
equivalent over all word encodings to one of the form $\encPhi \psi$, which
in turns implies that the $\FOw$ sentences $\varphi$ and $\psi$ are equivalent.

The \emph{temporal depth} of a quantifier-free formula in \hyltl
is defined inductively as
\begin{itemize}\label{defdepth}
\item 	$\depth(a_\pi) = 0$,
\item  $\depth(\lnot \varphi) = \depth(\varphi)$,
\item $\depth(\varphi \lor \psi) = \max(\depth(\varphi),\depth(\psi))$,
\item $\depth(\X \varphi) = 1 + \depth(\varphi)$, and
\item $\depth(\varphi \U \psi) = 1 + \max(\depth(\varphi,\psi))$.
\end{itemize}
For a general \hyltl formula $\varphi = Q_1 \pi_1 \cdots Q_k \pi_k.\ \psi$,
we let $\depth(\varphi) = \depth(\psi)$.

\begin{lemma}\label{lem:QFsimpl}
  Let $\psi$ be a quantifier-free formula of $ \HyperLTL$.
  Let $N = \mathit{depth}(\psi)+1$.
  There exists a quantifier-free formula $\simpl \psi \in \simpleHyperLTL$
  such that for all $T = \enc w f$ and trace assignments $\Pi$,
  \[
    (\Tn T N, \nun \Pi N) \models \psi \quad\text{if and only if}\quad (T,\Pi) \models \simpl \psi
     \, .
  \]
\end{lemma}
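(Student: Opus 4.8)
The plan is to show that, over stretched encodings with spacing larger than $\depth(\psi)$, the truth value of $\psi$ depends only on a \emph{finite} amount of order-theoretic data, which is exactly the data that formulas in $\simpleHyperLTL$ can read off. Fix the finite set $V = \Free(\psi)$ of trace variables occurring in $\psi$ and write $d = \depth(\psi)$. Given $T = \enc w f$ and an assignment $\Pi \colon V \to T$ with $\Pi(\pi) = t_{n_\pi}$, I would associate with $(T,\Pi)$ its \emph{abstract type}: the pair consisting of the position-$0$ labels $(w(n_\pi))_{\pi \in V}$ together with the total preorder on $V$ defined by $\pi \preceq \pi'$ iff $n_\pi \le n_{\pi'}$ (equivalently, iff the $o$-marker of $\pi$ occurs no later than that of $\pi'$). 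In the encoding the only non-empty positions of each trace are position $0$, carrying the $\AP$-letters, and the unique $o$-position; in the stretched encoding $\Tn T N$ the $o$-positions are pushed to the multiples $N(n_\pi+1)$, so that distinct $o$-positions are at distance at least $N$. Crucially, the abstract type of $(\Tn T N, \nun \Pi N)$ coincides with that of $(T,\Pi)$, and for fixed $V$ there are only finitely many abstract types.

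The technical core is the following claim: if $N \ge d+1$, then whether $(\Tn T N, \nun \Pi N) \models \psi$ depends only on the abstract type of $(T,\Pi)$. I would prove this via an Ehrenfeucht--Fraïssé argument for temporal depth: two pointed product-words of the special shape produced by stretched encodings, with the same position-$0$ labels and the same $o$-order type, and with all gaps between consecutive non-empty positions of length at least $N$, satisfy the same quantifier-free \hyltl formulas of temporal depth $\le N-1$. The intuition is that a depth-$d$ formula can locate the $o$-markers and compare their relative order using $\U$ and $\F$, but cannot measure the exact length of a gap: a chain of $\X$'s of length $d$ cannot traverse a gap of length $> d$ (in particular, it cannot even reach the first $o$-marker from position $0$), while $\U$ detects only the \emph{existence} and \emph{order} of later events, not their distance. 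Formally, Duplicator maintains throughout the $d$-round game an invariant stating that the two current positions sit in corresponding regions — either the common aligned prefix, a matching $o$-marker, or deep inside matching gaps — which survives for $d$ rounds precisely because every gap exceeds $d$; the standard stuttering argument then lets Duplicator answer every $\X$- and $\U$-move. This yields the claim.

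With the claim in hand, I would define $\simpl\psi$ explicitly. For each abstract type $c$ over $V$, let $\chi_c$ be the conjunction of the literals $a_\pi$ or $\lnot a_\pi$ describing the position-$0$ labels of $c$, together with the literals $\F(o_\pi \land \F o_{\pi'})$ or their negations describing the preorder $\preceq$ of $c$ (one for each ordered pair of variables). Each $\chi_c$ is a Boolean combination of atoms $a_\pi$ and $\F(o_\pi \land \F o_{\pi'})$, hence lies in $\simpleHyperLTL$ and is quantifier-free. I then set $\simpl\psi = \bigvee_c \chi_c$, the disjunction ranging over the finitely many types $c$ for which the claim guarantees $(\Tn T N, \nun \Pi N) \models \psi$. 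The final equivalence follows because the simple atoms read the abstract type off $(T,\Pi)$ and off $(\Tn T N, \nun \Pi N)$ identically: $a_\pi$ reads the position-$0$ label, and $\F(o_\pi \land \F o_{\pi'})$ evaluates to $\pi \preceq \pi'$ in either model since $f$ is monotone. Hence $(T,\Pi) \models \simpl\psi$ iff the type of $(T,\Pi)$ is one of the selected $c$ iff $(\Tn T N, \nun\Pi N) \models \psi$, as required; note that the construction is independent of the particular stretch function $f$.

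The main obstacle is the Ehrenfeucht--Fraïssé claim: making the game for temporal depth precise, and in particular the book-keeping needed for the $\U$-operator, whose two-sided condition must be matched by Duplicator, together with verifying that the gap-invariant survives all $d$ rounds. An alternative to the game would be a direct induction on the structure of $\psi$, but this requires strengthening the statement to arbitrary offsets $j$ (rather than only position $0$) and to a notion of local type at offset $j$, which I expect to be more cumbersome; the EF formulation should be the cleaner route.
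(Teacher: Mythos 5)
Your proposal is correct and follows essentially the same route as the paper: both view the assigned traces as a single word over a product alphabet, isolate a finite ``type'' consisting of the position-$0$ labels and the total preorder of the $o$-markers, prove via an Ehrenfeucht--Fra\"iss\'e/stuttering claim that formulas of depth less than $N$ cannot distinguish words of this shape whose non-empty positions are at distance at least $N$, and define $\simpl\psi$ as the disjunction of the type-describing formulas (built from $a_\pi$ and $\F(o_\pi \land \F o_{\pi'})$) of the types on which $\psi$ holds. The only cosmetic difference is that you describe types by literals including negations while the paper uses positive conjunctions $\varphi_{A,\preceq}$ over total preorders; both yield the same finite case analysis.
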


\begin{proof} 
  Assume that $\Free(\psi) = \{\pi_1, \ldots, \pi_k\}$ is the set of free variables of $\psi$.
  Note that the value of $(\Tn T N, \nun \Pi N) \models \psi$ depends
  only on the traces $\nun \Pi N(\pi_1),\ldots,\nun \Pi N(\pi_k)$.
  We see the tuple $(\nun \Pi N(\pi_1),\ldots,\nun \Pi N(\pi_k))$ as a single
  trace $w_{T,\Pi,N}$ over the set of propositions
  $\AP' = \{a_{\pi} \mid a \in \AP\cup\set{o} \land \pi \in \Free(\psi)\}$,
  and $\psi$ as an LTL formula over $\AP'$.
  
  We are going to show that the evaluation of $\psi$ over words $w_{T,\Pi,N}$ is
  entirely determined by the ordering of $o_{\pi_1}, \ldots, o_{\pi_n}$ in
  $w_{T,\Pi,N}$ and the label of $w_{T,\Pi,N}(0)$, which we can both
  describe using a formula in $\simpleHyperLTL$.
  The intuition is that non-empty labels in $w_{T,\Pi,N}$ are
  at distance at least $N$ from one another, and  a temporal formula of depth
  less than $N$ cannot distinguish between $w_{T,\Pi,N}$ and other words
  with the same sequence of non-empty labels and sufficient spacing between
  them.
  More generally, the following can be easily proved via
  Ehrenfeucht-Fra\"iss\'e games:

  \begin{claim}\label{claim:EF}
    Let $m,n \ge 0$, $(a_i)_{i \in \Nat}$ be a sequence of letters in
    $2^{\AP'}$, and
    \[
      w_1, w_2 \in
      \emptyset^m a_0 \emptyset^{n}\emptyset^\ast
      a_1 \emptyset^{n}\emptyset^\ast a_2 \emptyset^{n}\emptyset^\ast \cdots
    \]
    Then for all LTL formulas $\varphi$ such that $\depth(\varphi) \le n$,
    $w_1 \models \varphi$ if and only if $w_2 \models \varphi$.
  \end{claim}

  Here we are interested in words of a particular shape.
  Let $L_N$ be the set of infinite words $w \in {(2^{\AP'})}^\omega$ such that:
  \begin{itemize}
  \item For all $\pi \in \{\pi_1,\ldots,\pi_k\}$, there is a unique
    $i \in \Nat$ such that $o_{\pi} \in w(i)$. Moreover, $i \ge N$.
  \item If $o_{\pi} \in w(i)$ and $o_{\pi'} \in w(i') $,
    then $|i-i'| \ge N$ or $i = i'$.
  \item If $a_{\pi} \in w(i)$ for some $a \in \AP$ and
    $\pi \in \{\pi_1,\ldots,\pi_k\}$, then $i = 0$.
  \end{itemize}
  Notice that $w_{T,\Pi,N} \in L_N$ for all $T$ and all $\Pi$.
  
  For $w_1, w_2 \in L_N$, we write $w_1 \sim w_2$ if $w_1$ and $w_2$ differ
  only in the spacing between non-empty positions, that is, if there are
  $\ell \le k$ and $a_0, \ldots, a_\ell \in 2^{\AP'}$ such that
  $w_1,w_2 \in a_0 \emptyset^\ast a_1 \emptyset^\ast \cdots a_\ell \emptyset^\omega$.    
  Notice that $\sim$ is of finite index.
  Moreover, we can distinguish between its equivalence classes using formulas
  defined as follows. For all
  $A \subseteq \{a_{\pi} \mid a \in \AP \land \pi \in \{\pi_1,\ldots,\pi_k\}\}$
  and all total preorders $\preceq$ over $\{\pi_1, \ldots, \pi_k\}$,\footnote{
    i.e.\ $\preceq$ is transitive and for all
    $\pi,\pi' \in \{\pi_1, \ldots, \pi_k\}$, $\pi \preceq \pi'$
    or $\pi' \preceq \pi$ (or both)} we let
  \[
    \varphi_{A,\preceq} =
    \bigwedge_{a \in A} a \land \bigwedge_{a \notin A} \lnot a
    \land \bigwedge_{\pi_i \preceq \pi_j} \F(o_{\pi_i} \land \F o_{\pi_j})
    \, .
  \]
  Note that every word $w \in L_N$ satisfies exactly one formula
  $\varphi_{A,\preceq}$, and that all words in an equivalence class
  satisfy the same one.
  We denote by $L_{A,\preceq}$ the equivalence class of $L_N/{\sim}$
  consisting of words satisfying $\varphi_{A,\preceq}$.
  So we have $L_N = \biguplus L_{A,\preceq}$.

  Since $\psi$ is of depth less than $N$, by \cref{claim:EF}
  (with $n = N-1$ and $m = 0$), for all $w_1 \sim w_2$ we have
  $w_1 \models \psi$ if and only if $w_2 \models \psi$.
  Now, define $\simpl \psi$ as the disjunction of all $\varphi_{A,\preceq}$
  such that $\psi$ is satisfied by elements in the class $L_{A,\preceq}$.
  Then $\simpl \psi \in \simpleHyperLTL$, and
  \[
    \text{for all } w \in L_N, \quad w \models \simpl \psi
    \text{ if and only if } w \models \psi \, .
  \]
  In particular, for every $T$ and every $\Pi$, we have
  $(\Tn T N,\nun \Pi N) \models \simpl \psi$ if and only if 
  $(\Tn T N,\nun \Pi N) \models \psi$.
  Since the preorder between propositions $o_\pi$ and the label of the initial
  position are the same in $(\Tn T N,\nun \Pi N)$ and $(T,\Pi)$,
  we also have $(T,\Pi) \models \simpl \psi$ if and only if
  $(\Tn T N,\nun \Pi N) \models \simpl \psi$.
  Therefore,
  \[
(\Tn T N, \nun \Pi N) \models \psi \quad\text{if and only if}\quad     (T,\Pi) \models \simpl \psi
     \, . \qedhere
  \]
\end{proof}

For a quantified \hyltl sentence $\varphi = Q_1\pi_1 \cdots Q_k\pi_k.\ \psi$, we let
$\simpl \varphi = Q_1\pi_1 \ldots Q_k\pi_k.\ \simpl \psi$, where $\simpl \psi$ is
the formula obtained through \cref{lem:QFsimpl}.

\begin{lemma}\label{lem:Qsimpl}
  For all \hyltl formulas~$\varphi $,
  for all $T = \enc w f$ and trace assignments~$\Pi$,
  \[
 (\Tn T N, \nun \Pi N) \models \varphi    \text{ if and only if }
    (T, \Pi) \models \simpl \varphi \, ,
  \]
  where $N = \mathit{depth}(\varphi)+1$.
\end{lemma}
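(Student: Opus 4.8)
The plan is to proceed by induction on the number~$k$ of quantifiers in the prefix of $\varphi = Q_1\pi_1 \cdots Q_k\pi_k.\ \psi$. First I note that $\depth(\varphi) = \depth(\psi)$, so the value $N = \depth(\varphi)+1$ is exactly the one used to form $\simpl\psi$ via \cref{lem:QFsimpl}, and it stays fixed throughout the induction (stripping quantifiers does not change the temporal matrix~$\psi$, hence not its depth). The base case $k = 0$ is immediate: there $\varphi = \psi$ is quantifier-free, $\simpl\varphi = \simpl\psi$, and the claim is precisely \cref{lem:QFsimpl}. Note that $\psi$ may have free variables, which is unproblematic since \cref{lem:QFsimpl} is already stated for quantifier-free formulas under an arbitrary trace assignment~$\Pi$.

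For the inductive step I would peel off the outermost quantifier, writing $\varphi = Q_1\pi_1.\ \varphi'$ with $\varphi' = Q_2\pi_2 \cdots Q_k\pi_k.\ \psi$, so that $\simpl\varphi = Q_1\pi_1.\ \simpl{\varphi'}$. The two facts driving the argument are: (i) the map $t_n \mapsto \Tn{t_n}N$ is a bijection from $T = \enc w f$ onto $\Tn T N = \encn w N$, since the traces $t_n$ are pairwise distinct (their unique $o$-positions differ) and $\Tn{(\cdot)}N$ merely relocates that $o$-position while preserving the initial label; and (ii) stretching commutes with extending a trace assignment, i.e.\ for every $t \in T$,
\[
  \nun\Pi N[\pi_1 \mapsto \Tn t N] = \nun{\Pi[\pi_1 \mapsto t]}N \, ,
\]
which follows directly from the definition of $\nun{(\cdot)}N$, as it acts independently on each variable's trace. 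Using (i) and (ii) in the existential case,
\begin{align*}
  (\Tn T N, \nun\Pi N) \models \exists\pi_1.\ \varphi'
  &\iff \exists t \in T.\ (\Tn T N, \nun\Pi N[\pi_1 \mapsto \Tn t N]) \models \varphi' \\
  &\iff \exists t \in T.\ (\Tn T N, \nun{\Pi[\pi_1 \mapsto t]}N) \models \varphi' \\
  &\iff \exists t \in T.\ (T, \Pi[\pi_1 \mapsto t]) \models \simpl{\varphi'} \\
  &\iff (T, \Pi) \models \exists\pi_1.\ \simpl{\varphi'} = \simpl\varphi \, ,
\end{align*}
where the first step uses the bijection~(i) (quantification over $\Tn T N$ matches quantification over $T$), the second the commutation identity~(ii), and the third the induction hypothesis applied to $\varphi'$. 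The universal case is identical, replacing $\exists$ by $\forall$.

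I expect the only genuine care to lie in the bookkeeping of this inductive step, namely in recording the bijection~(i) and the commutation identity~(ii); both are immediate from the definitions but are what make the quantifier ranges over $T$ and over $\Tn T N$ correspond, so they cannot be left implicit. Crucially, no new temporal or Ehrenfeucht-Fra\"iss\'e reasoning is needed at this level: all of that has already been absorbed into \cref{lem:QFsimpl}, and the quantifier induction here is purely a matter of transporting satisfaction along the trace bijection.
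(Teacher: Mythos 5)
Your proposal is correct and takes essentially the same route as the paper's own proof: both argue by induction on the quantifier prefix with \cref{lem:QFsimpl} as the base case (noting $N$ is unchanged since stripping quantifiers preserves $\depth$), and both handle the inductive step by peeling off the outermost quantifier and transporting satisfaction between $T$ and $\Tn T N$. The only difference is presentational: the paper leaves your facts (i) (the bijection $t \mapsto \Tn t N$) and (ii) (the identity $\nun \Pi N[\pi \mapsto \Tn t N] = \nun{(\Pi[\pi \mapsto t])} N$) implicit in its chain of equivalences, whereas you state them explicitly, which is a reasonable piece of added care rather than a deviation.
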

\begin{proof}
  We prove the result by induction. We have
  \begin{align*}
    (T, \Pi) \models \exists \pi.\ \simpl \psi
    & \quad\Leftrightarrow\quad
      \exists t \in T \text{ such that }
      (T,\Pi[\pi \mapsto t]) \models \simpl \psi \\
    & \quad\Leftrightarrow\quad
      \exists t \in T \text{ such that }
      (\Tn T N, \nun {(\Pi[\pi \mapsto t])} N)
      \models \psi && \text{(IH)} \\
    & \quad\Leftrightarrow\quad
        \exists t \in \Tn T N \text{ such that }
      (\Tn T N, \nun {\Pi} N[\pi \mapsto t])
      \models \psi \\
    & \quad\Leftrightarrow\quad
        (\Tn T N, \nun \Pi N) \models \exists \pi.\ \psi \, ,
  \end{align*}
  and similarly,
  \begin{align*}
    (T, \Pi) \models \forall \pi.\ \simpl \psi
    & \quad\Leftrightarrow\quad
      \forall t \in T, \text{we have }
      (T,\Pi[\pi \mapsto t]) \models \simpl \psi \\
    & \quad\Leftrightarrow\quad
      \forall t \in T, \text{we have }
      (\Tn T N, \nun {(\Pi[\pi \mapsto t])} N)
      \models \psi && \text{(IH)} \\
    & \quad\Leftrightarrow\quad
        \forall t \in \Tn T N, \text{we have }
      (\Tn T N, \nun \Pi N [\pi \mapsto t])
      \models \psi \\
    & \quad\Leftrightarrow\quad
        (\Tn T N, \nun \Pi N) \models \forall \pi.\ \psi \, . && \qedhere
  \end{align*}
\end{proof}

As a corollary, we obtain:

\begin{lemma}\label{lem:eqsimpl}
  For all stretch-invariant \hyltl sentences $\varphi$ and
  for all $T = \enc w f$,
  \[
    T \models \varphi \quad\text{if and only if}\quad
    T \models \simpl \varphi \, .
  \]
\end{lemma}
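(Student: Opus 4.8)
The plan is to obtain this as an immediate corollary of \cref{lem:Qsimpl}, using stretch-invariance to bridge the two stretch functions that appear in the statement. Set $N = \depth(\varphi)+1$. Since $\varphi$ is a sentence, I would first apply \cref{lem:Qsimpl} with the empty trace assignment $\Pi_\emptyset$, observing that $\nun{\Pi_\emptyset}{N} = \Pi_\emptyset$ (there are no $o$-positions to shift). This yields
\[
\Tn T N \models \varphi \quad\text{if and only if}\quad T \models \simpl \varphi \, .
\]

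Next I would unfold the definition of $\Tn T N$: by construction $\Tn T N = \encn w N = \enc w g$, where $g \colon n \mapsto N(n+1)$ is itself a stretch function. Thus $\Tn T N$ and $T = \enc w f$ are two encodings of the \emph{same} word $w$, differing only in the choice of stretch function. Invoking the stretch-invariance of $\varphi$ then gives
\[
T \models \varphi \quad\text{if and only if}\quad \Tn T N \models \varphi \, .
\]

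Chaining the two equivalences closes the argument: $T \models \varphi$ iff $\Tn T N \models \varphi$ iff $T \models \simpl \varphi$. There is essentially no obstacle here, since the content of the lemma is entirely carried by \cref{lem:Qsimpl} together with the stretch-invariance hypothesis. The only points requiring (routine) care are the two definitional checks: that the empty assignment is fixed by the shifting operation $(\cdot)^{(N)}$, so that \cref{lem:Qsimpl} specialises cleanly to the case of sentences, and that $n \mapsto N(n+1)$ is monotone and strictly positive, hence a legitimate stretch function, so that stretch-invariance is indeed applicable to $\Tn T N$.
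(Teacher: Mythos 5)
Your proof is correct and takes essentially the same route as the paper: stretch-invariance is used to pass between $T = \enc w f$ and $\Tn T N = \encn w N$ (two encodings of the same word $w$), and \cref{lem:Qsimpl}, instantiated with the empty assignment, supplies the equivalence $\Tn T N \models \varphi \Leftrightarrow T \models \simpl \varphi$. The two routine checks you make explicit---that $\Pi_\emptyset$ is fixed by the shifting operation and that $n \mapsto N(n+1)$ is a legitimate stretch function---are exactly the details the paper's two-line proof leaves implicit.
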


\begin{proof}
  By definition of $\varphi$ being stretch-invariant, we have
  $T \models \varphi$ if and only if
  $\Tn T N \models \varphi$, which by \cref{lem:Qsimpl} is
  equivalent to $T \models \simpl \varphi$.
\end{proof}

We are now ready to prove the strictness of the \hyltl quantifier alternation
hierarchy.

\begin{proof}[Proof of \cref{thm:strictness-finite}.]
  Suppose towards a contradiction that the hierarchy collapses at
  level~$n > 0$, i.e.\ every \hyltl $\Sigma_{n+1}$-sentence is equivalent to some
  $\Sigma_n$-sentence.
  Let us show that the $\FOw$ quantifier alternation hierarchy also
  collapses at level $n$, a contradiction with \cref{thm:strictness-FO}.

  Fix a $\Sigma_{n+1}$-sentence $\varphi$ of $\FOw$.
  The \hyltl sentence $\encPhi \varphi$ has the same quantifier prefix as
  $\varphi$, i.e.\ is also a $\Sigma_{n+1}$-sentence.
  Due to the assumed hierarchy collapse, there exists a  \hyltl
  $\Sigma_n$-sentence $\psi$ that is equivalent to $\encPhi \varphi$,
  and is stretch-invariant by \cref{lem:stretchinv}.
  Then the \hyltl sentence $\simpl \psi$ defined above is also a $\Sigma_n$-sentence.
  Moreover, since $\simpl \psi \in \simpleHyperLTL$, there exists
  a $\FOw$ sentence $\varphi'$ such that $\simpl \psi = \encPhi {\varphi'}$,
which has the same quantifier prefix as $\simpl \psi$, i.e.\
  $\varphi'$ is a $\Sigma_n$-sentence of $\FOw$.
  For all words $w \in (2^\AP)^\ast$, we now have
  \begin{align*}
    w \models \varphi
    & \quad\text{if and only if}\quad \enc w f \models \encPhi \varphi
    && \text{(\cref{lem:FO-to-hyltl})} \\
    & \quad\text{if and only if}\quad \enc w f \models \psi
    && \text{(assumption)} \\
    & \quad\text{if and only if}\quad \enc w f \models \simpl \psi
    && (\text{\cref{lem:eqsimpl,lem:stretchinv}}) \\
    & \quad\text{if and only if}\quad \enc w f \models \encPhi {\varphi'}
    && (\text{definition}) \\
    & \quad\text{if and only if}\quad w \models \varphi'
    && \text{(\cref{lem:FO-to-hyltl})}
  \end{align*}
  for an arbitrary encoding function $f$.
  Therefore, $\Sfo {n+1} = \Sfo n$,
  yielding the desired contradiction.

  This proves not only that for all $n > 0$, there is a \hyltl $\Sigma_{n+1}$-sentence
  that is not equivalent to any $\Sigma_n$-sentence,
  but also that there is one of the form $\encPhi \varphi$.
  Now, the proof still goes through if we replace $\encPhi \varphi$ by
  any formula equivalent to $\encPhi \varphi$ over all $\enc{w}{f}$,
  and in particular if we replace $\encPhi \varphi$ by
  $\encPhi \varphi \land \psi$, where the sentence
  \[
    \psi = \exists \pi.\ \forall \pi'.\
    (\F \G \emptyset_\pi) \land
    \G( \G \emptyset_\pi \rightarrow  \G \emptyset_{\pi'}) \, ,
    \qquad
    \text{with }
    \emptyset_\pi = \bigwedge_{a \in \AP} \lnot a_\pi \, ,
  \]
  selects models that contain finitely many traces, all in
  $(2^\AP)^\ast \cdot \emptyset^\omega$.
  Indeed, all $\enc w f$ satisfy~$\psi$.
  Notice that $\psi$ is a $\Sigma_2$-sentence, and since $n+1 \ge 2$,
  (the prenex normal form of) $\encPhi \varphi \land \psi$ is still a $\Sign {n+1}$-sentence.
\end{proof}

\section{Proofs Omitted from Section~\ref{sec:hierarchy}}
\label{sec:hierarchy-proofs}

\subsection{Proof of Lemma~\ref{lem:dupsi}}

We need to prove  that for all \hyltl sentences~$\varphi_\ell, \varphi_r$, one can construct
  a sentence $\psi$ such that for all split 
  $T \subseteq {(2^{\AP\cup \{\$\}})}^\omega$
  it holds that $\Tl \models \varphi_\ell$ and $\Tr \models \varphi_r$ if and only if
  $T \models \psi$.

\begin{proof}[Proof of \cref{lem:dupsi}]
  Let $\widehat {\varphi_r}$ denote the formula obtained from $\varphi_r$
  by replacing:
  \begin{itemize}
  \item every existential quantification $\exists \pi.\ \varphi$ with
    $\exists \pi.\ ((\F\G \lnot \$_\pi) \land \varphi)$;
  \item every universal quantification $\forall \pi.\ \varphi$ with
    $\forall \pi.\ ((\F\G \lnot \$_\pi) \rightarrow \varphi)$;
  \item the quantifier-free part $\varphi$ of $\varphi_r$ with
    $\$_\pi \U (\lnot \$_\pi \land \varphi)$,
    where $\pi$ is some free variable in $\varphi$.
  \end{itemize}
 Here, the first two replacements restrict quantification to traces in the right part while the last one requires the formula to hold at the first position of the right part. 
  We define $\widehat {\varphi_\ell}$ by similarly relativizing quantifications
  in $\varphi_\ell$.
  The formula $\widehat {\varphi_\ell} \land \widehat {\varphi_r}$ can then
  be put back into prenex normal form to define $\psi$.
\end{proof}

\subsection{Proof of Lemma~\ref{lem:split}}

We need to prove that for all \hyltl $\Sigma_n$-sentences $\varphi$,
  there exists a finite family of $\Sigma_n$-sentences
  $(\varphi^i_\ell, \varphi^i_r)_{i}$
  such that for all 
  split $T \subseteq {(2^{\AP\cup \{\$\}})}^\omega$: 
  $T \models \varphi$ if and only if there exists
  $i$ such that
  $\Tl \models \varphi^i_\ell$ and $\Tr \models \varphi^i_r$.
  To prove this result by induction, we need to strengthen the statement to make it dual and allow for formulas with free variables.
  We let $\Free(\varphi)$ denote the set of free variables of a formula
  $\varphi$. 
  We prove the following result, which implies \cref{lem:split}.

\begin{lemma}\label{lem:splitgen}
  For all HyperLTL $\Sigma_n$-formulas (resp.\ $\Pin n$-formulas) $\varphi$,
  there exists a finite family of $\Sigma_n$-formulas
  (resp.\ $\Pi_n$-formulas)
  $(\varphi^i_\ell, \varphi^i_r)_{i}$
  such that for all $i$,
  $\Free(\varphi) = \Free(\varphi^i_\ell) \uplus \Free(\varphi^i_r)$,
  and for all split $T$ and $\Pi$: $(T, \Pi) \models \varphi$
  if and only if there exists $i$ such that
  \begin{itemize}
  \item For all $\pi \in \Free(\varphi)$, $\Pi(\pi) \in \Tl$ if and only if
    $\pi \in \Free(\varphi^i_{\ell})$
    (and thus $\Pi(\pi) \in T \setminus \Tl$ if and only if
    $\pi \in \Free(\varphi^i_{r})$).
  \item $(\Tl, \Pi) \models \varphi^i_\ell$;
  \item $(\Tr, \Pi') \models \varphi^i_r$,
    where $\Pi'$ maps every $\pi \in \Free(\varphi^i_r)$ to the trace in
    $\Tr$ corresponding to $\Pi(\pi)$ in $T$
    (i.e.\ $\Pi(\pi) = \set{\$}^b \cdot \Pi'(\pi)$ for some $b$).
  \end{itemize}
\end{lemma}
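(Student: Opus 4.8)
The plan is to prove the statement by induction on the structure of $\varphi$, establishing the claims for $\Sigma_n$- and $\Pi_n$-formulas \emph{simultaneously}; the dual formulation is exactly what makes the induction over quantifiers close up. It is convenient to fix in advance a partition $\Free(\varphi) = V_\ell \uplus V_r$ and to produce, for each such partition, a finite family of pairs $(\varphi^i_\ell,\varphi^i_r)$ with $\Free(\varphi^i_\ell)=V_\ell$ and $\Free(\varphi^i_r)=V_r$ witnessing the equivalence on all split $T$ and all $\Pi$ compatible with the partition (i.e.\ $\Pi(\pi)\in\Tl$ iff $\pi\in V_\ell$); the union over the finitely many partitions then gives the stated family, the free-variable condition serving to select the partition matching $\Pi$. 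Throughout I would read a quantifier-free \hyltl formula over a tuple of traces as an ordinary LTL formula, and repeatedly use that $\Sigma_n$- and $\Pi_n$-formulas are closed under finite conjunction and disjunction, and that $\Sigma_{n-1}$- and $\Pi_{n-1}$-formulas are, up to equivalence, both $\Sigma_n$- and $\Pi_n$-formulas.

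The existential step is straightforward. For $\varphi = \exists \pi.\ \varphi'$ with $\varphi'$ a $\Pi_{n-1}$-formula, the witnessing trace lies either in $\Tl$ or in $T \setminus \Tl$, so I would apply the induction hypothesis to $\varphi'$ for the two partitions obtained by adding $\pi$ to $V_\ell$ respectively to $V_r$. Since $\exists$ distributes over the resulting disjunction and commutes with the conjuncts not mentioning $\pi$, each disjunct becomes a pair whose $\pi$-side is an existentially quantified $\Pi_{n-1}$-formula (hence $\Sigma_n$) and whose other side is an unchanged $\Pi_{n-1}$-formula (hence again $\Sigma_n$); the union of the two families is as required.

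The universal step needs a genuine bookkeeping argument. For $\varphi = \forall \pi.\ \varphi'$ with $\varphi'$ a $\Sigma_{n-1}$-formula, I would write the family for $\varphi'$ as a disjunction $\bigvee_j (\alpha_j \wedge \beta_j)$ and observe that quantifying universally over all $t \in T$ is the conjunction of ``for all $t \in \Tl$'' and ``for all $t \in T \setminus \Tl$''. Each conjunct ranges a universal quantifier over a disjunction in which the opposite side is fixed, so I would resolve it by the standard Feferman--Vaught trick: disjoin over all guesses of which opposite-side formulas actually hold, turning the conjunct into (a formula stating exactly which opposite-side formulas are satisfied) $\land$ ($\forall \pi.$ of the disjunction of the selected same-side formulas). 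Distributing the outer conjunction into disjunctive normal form and invoking the closure properties above then yields the required family of $\Pi_n$-pairs.

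The main obstacle is the base case of quantifier-free $\psi$, where temporal operators can relate left- and right-assigned traces. The key structural fact is that a split model has a single sharp boundary $b$: before $b$ every left trace carries content and every right trace shows $\$$, and from $b$ onward the roles are exactly reversed, so the two ``content phases'' meet only at $b$, which corresponds to position $0$ of $\Tr$. I would formalise the decomposition via the composition theorem for LTL: with $d = \depth(\psi)$, the depth-$d$ type at position $0$ of the combined trace is a function $\delta_u(\theta)$ of the finite left-content prefix $u$ (whose right components are read as the constant $\$$) and of the depth-$d$ type $\theta$ of the right suffix. For each $\theta$, the predicate ``the right suffix has type $\theta$'' is directly a quantifier-free formula $\beta_\theta$ over $V_r$ evaluated in $\Tr$, and ``$\psi \in \delta_u(\theta)$'' depends only on $u$. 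The delicate point is to re-express this latter property as a quantifier-free \hyltl (i.e.\ LTL) formula $\alpha_\theta$ over $V_\ell$ evaluated in $\Tl$: since $u$ is recovered from the $\Tl$-traces as the prefix preceding the first $\$$, and the type transition induced by reading $u$ is governed by the aperiodic (hence star-free, so first-order, so LTL-definable) transition structure of LTL, this property is LTL-definable over $\Tl$ using $\$$ as an endmarker. The decomposition $\psi \equiv \bigvee_\theta (\alpha_\theta \land \beta_\theta)$ then has exactly the required shape, and the degenerate cases $V_\ell = \emptyset$, $V_r = \emptyset$, or $b = 0$ are handled by the same argument with a trivial phase on one side.
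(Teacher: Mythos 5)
Your proposal is correct in substance and reaches the same statement, but by a genuinely different route in the two places where the real work happens. For the quantifier steps, the paper peels one quantifier at a time like you do, and its existential case is identical to yours; for the universal case, however, the paper does not argue directly but dualises: it applies the induction hypothesis to $\exists \pi.\ \lnot\psi$ and then negates, observing that $(T,\Pi) \models \forall\pi.\ \psi$ iff there is a choice function $h\colon \{1,\ldots,k\}\to\{\ell,r\}$ with $(\Tl,\Pi) \models \bigwedge_{h(i)=\ell}\lnot\xi^i_\ell$ and $(\Tr,\Pi') \models \bigwedge_{h(i)=r}\lnot\xi^i_r$. This is combinatorially the same guessing as your Feferman--Vaught step, but the double negation automatically produces only $\Pi_n$-components, whereas your direct version needs care with polarities (see below). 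For the base case, the paper views the quantifier-free formula as an LTL formula over a word in $\Sigmal^\ast\cdot\Sigmar^\omega$ with disjoint alphabets and invokes a language-theoretic splitting lemma: take a counter-free automaton for $L(\psi)$ and decompose $L = \bigcup_q L^q_1 \cdot L^q_2$ by the state $q$ reached at the boundary, each part again counter-free and hence LTL-definable. Your composition-theorem argument via depth-$d$ types and the aperiodicity of the induced type-transition monoid is the EF-game incarnation of exactly the same fact; both rest on LTL $=$ counter-free $=$ FO on words, so neither is more elementary, though the automaton version arguably makes the finiteness of the family and the uniformity in the boundary position $b$ more immediate, while yours avoids introducing automata at all.

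One bookkeeping caveat in your universal step: you state it only for $\varphi = \forall\pi.\ \varphi'$ with $\varphi'$ a $\Sigma_{n-1}$-formula, but \hyltl $\Pi_n$-formulas have blocks $\forall\pi_1\cdots\forall\pi_k$, so when you peel the second quantifier of a block the matrix is already $\Pi_n$, and your ``exact'' guess (which conjoins negated opposite-side formulas $\lnot\beta_j$, now $\Sigma_n$) would leave $\Pi_n$. The fix is standard and cheap: use positive-only guesses, i.e.\ $\forall t.\ \bigvee_j(\alpha_j(t)\wedge\beta_j)$ iff $\bigvee_S \bigl(\bigwedge_{j\in S}\beta_j \wedge \forall\pi.\ \bigvee_{j\in S}\alpha_j\bigr)$, which suffices for both directions and keeps all components in $\Pi_n$ by closure under $\wedge$ and $\vee$; alternatively, dualise through negation as the paper does. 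With that repair your argument goes through in full.
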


\begin{proof}
  To simplify, we can assume that the partition of the free variables of
  $\varphi$ into a left and right part is fixed, i.e.\
  we take $\Vl \subseteq \Free(\varphi)$
  and $\Vr = \Free(\varphi) \setminus \Vl$, and we restrict our attention
  to split $T$ and $\Pi$ such that $\Pi(\Vl) \subseteq \Tl$
  and $\Pi(\Vr) \subseteq T \setminus \Tl$.
  The formulas $(\varphi^i_\ell, \varphi^i_r)_{i}$ we are looking
  for should then be such that $\Free(\varphi^i_\ell) = \Vl$ and
  $\Free(\varphi^i_r) = \Vr$.
  If we can define sets of formulas $(\varphi^i_\ell, \varphi^i_r)$ for each
  choice of $\Vl, \Vr$, then the general case is solved by taking the union
  of all of those.
  So we focus on a fixed $\Vl,\Vr$, and prove the result by induction on
  the quantifier depth of $\varphi$.
  
  \subparagraph{Base case.}
  If $\varphi$ is quantifier-free, then it can be seen as an LTL formula
  over the set of propositions
  $\{a_\pi, \$_\pi \mid \pi \in \Free(\varphi), a \in \AP\}$,
  and any split model of $\varphi$ consistent with $\Vl,\Vr$
  can be seen as a word in $\Sigma_\ell^\ast \cdot \Sigma_r^\omega$,
  where
  \begin{align*}
    \Sigmal & = \big\{ \alpha \cup \{\$_\pi \mid \pi \in \Vr\} \mid 
    \alpha \subseteq {\{a_\pi \mid \pi \in \Vl \land a \in \AP\}} \big\} \text{ and }\\
    \Sigmar & = \big\{ \alpha \cup \{\$_\pi \mid \pi \in \Vl\} \mid 
    \alpha \subseteq {\{a_\pi \mid \pi \in \Vr \land a \in \AP\}} \big\} \, .
  \end{align*}
  Note in particular that $\Sigmal \cap \Sigmar = \emptyset$.
  We can thus conclude by applying the following standard result of
  formal language theory:
  
  \begin{lemma}\label{lem:splitLTL}
    Let $L \subseteq \Sigma_1^\ast \cdot \Sigma_2^\omega$, where
    $\Sigma_1 \cap \Sigma_2 = \emptyset$.
    If $L = L(\varphi)$ for some LTL formula $\varphi$, then there exists a finite family~$(\varphi^i_1, \varphi^i_2)_{i}$  of LTL formulas 
    such that $L = \bigcup_{1 \le i \le k} L(\varphi^i_1) \cdot L(\varphi^i_2)$
    and for all $i$, $L(\varphi^i_1) \subseteq \Sigma_1^\ast$
    and $L(\varphi^i_2) \subseteq \Sigma_2^\omega$.
  \end{lemma}

  \begin{proof}
    A language is definable in LTL if and only if it is accepted by some
    counter-free automaton~\cite{DG08,Thomas81}.
    Let $\mathcal{A}$ be a counter-free automaton for $L$.
    For every state $q$ in $\mathcal{A}$, let
    \begin{align*}
      L^q_1 & = \{w \in \Sigma_1^\ast \mid q_0 \xrightarrow{w} q
              \text{ for some initial state $q_0$}\} \text{ and }\\
      L^q_2 & = \{w \in \Sigma_2^\omega \mid \text{there is an accepting run
              on $w$ starting from $q$}\} \, .
    \end{align*}
    We have $L = \bigcup_q L^q_1\cdot  L^q_2$.
    Moreover, $L^q_1$ and $L^q_2$ are still recognisable by counter-free
    automata, and therefore LTL definable.
  \end{proof}
  
  \subparagraph{Case $\varphi = \exists \pi.\ \psi$.}
  Let $(\psi^i_{\ell,1},\psi^i_{r,1})$ and $(\psi^i_{\ell,2},\psi^i_{r,2})$
  be the formulas constructed respectively for
  $(\psi,\Vl \cup \{\pi\},\Vr)$ and $(\psi,\Vl,\Vr \cup \{\pi\})$.
  We take the union of all $(\exists \pi.\ \psi^i_{\ell,1},\psi^i_{r,1})$
  and $(\psi^i_{\ell,2},\exists \pi.\ \psi^i_{r,2})$.

  \subparagraph{Case $\varphi = \forall \pi.\ \psi$.}
  Let $(\xi^i_{\ell},\xi^i_{r})_{1 \le i \le k}$ be the formulas obtained for
  $\exists \pi.\ \lnot \psi$.
  We have ${(T,\Pi) \models \varphi}$ if and only if for all $i$,
  $(\Tl,\Pi) \not\models \xi^i_{\ell}$ or $(\Tr,\Pi') \not\models \xi^i_r$;
  or, equivalently, if there exists $h: \{1,\ldots,k\} \to \{\ell,r\}$ such
  that $(\Tl,\Pi) \models \bigwedge_{h(i) = \ell} \lnot \xi^i_{\ell}$ and
  $(\Tr,\Pi') \models \bigwedge_{h(i) = r} \lnot \xi^i_{r}$.
  Take the family $(\varphi^h_{\ell},\varphi^h_{r})_h$, where
  $\varphi^h_{\ell} = \bigwedge_{h(i) = \ell} \lnot \xi^i_{\ell}$ and
  $\varphi^h_{r} = \bigwedge_{h(i) = r} \lnot \xi^i_{r}$.
  Since $\varphi = \forall \pi.\ \psi$ is a $\Pi_n$-formula,
  the formula $\exists \pi.\ \lnot \psi$ and by induction all
  $\xi^i_{\ell}$ and $\xi^i_{r}$ are $\Sigma_n$-formulas.
  Then all $\lnot \xi^i_{r}$ are $\Pi_n$-formulas, and since
  $\Pi_n$-formulas are closed under conjunction
  (up to formula equivalence), all $\varphi^h_{\ell}$ and $\varphi^h_{r}$
  are $\Pi_n$-formulas as well.
\end{proof}

\subsection{Definition of \boldmath$\phib$}

\begin{lemma}\label{lemma:bounded}
  There exists a \hyltl sentence $\phib$ %
  such that for all $T \subseteq {(2^{\AP\cup \{\$\}})}^\omega$, we have
  $T \models \phib$ if and only if $T$ is bounded.
\end{lemma}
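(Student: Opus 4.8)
The plan is to exhibit a single $\forall\forall$-sentence that simultaneously forces every trace of $T$ to have the shape ``finite $\AP$-prefix followed by $\{\$\}^\omega$'' and forces the length of that prefix to be the same across all traces. Concretely, I would take $\phib = \forall \pi.\ \forall \pi'.\ \big(\phi_{\mathrm{sh}}(\pi) \land \G(\$_\pi \leftrightarrow \$_{\pi'})\big)$, where the per-trace shape constraint is
\[
  \phi_{\mathrm{sh}}(\pi) \;=\; \F\,\$_\pi \;\land\; \G(\$_\pi \rightarrow \X\,\$_\pi) \;\land\; \G\big(\$_\pi \rightarrow \bigwedge\nolimits_{a \in \AP} \neg a_\pi\big).
\]
The first conjunct says $\$$ eventually holds, the second that $\$$ is persistent (so the set of positions where $\$$ holds is upward closed), and the third that wherever $\$$ holds the label is exactly $\{\$\}$. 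Together with persistence, the first occurrence of $\$$ in a trace $t$ marks a switch point $b_t$ with $\$$ false on $[0,b_t)$ and label $\{\$\}$ on $[b_t,\infty)$, i.e.\ $t \in {(2^{\AP})}^{b_t}\cdot \{\$\}^\omega$. The cross-trace conjunct $\G(\$_\pi \leftrightarrow \$_{\pi'})$ is then used to pin down a common switch point.

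The correctness argument then splits into the two expected directions. For soundness, assuming $T \models \phib$, applying $\phi_{\mathrm{sh}}$ to each $t \in T$ gives a switch point $b_t$ with $t \in {(2^{\AP})}^{b_t}\cdot\{\$\}^\omega$ as above (and if $T = \emptyset$ the claim holds with $b = 0$). For any two traces $t,t'$, instantiating the quantifiers with $t,t'$ yields $\G(\$\leftrightarrow\$)$; since both $\$$-regions are upward-closed intervals $[b_t,\infty)$ and $[b_{t'},\infty)$, their equivalence forces $b_t = b_{t'}$. Hence one bound $b$ works for all traces and $T$ is bounded. For completeness, assuming $T \subseteq {(2^{\AP})}^b\cdot\{\$\}^\omega$, every $t \in T$ has $\$$ false on $[0,b)$ and label $\{\$\}$ on $[b,\infty)$, so $\F\,\$$, persistence, and the ``only $\$$'' clause all hold, giving $\phi_{\mathrm{sh}}(t)$; and since every trace switches at exactly $b$, the equivalence $\G(\$_\pi \leftrightarrow \$_{\pi'})$ holds for every pair. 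Thus $T \models \phib$.

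There is no deep obstacle here; the only points needing care are bookkeeping. The first is making sure the \emph{uniformity} of the bound is genuinely enforced and not merely the per-trace shape: this is exactly the role of $\G(\$_\pi \leftrightarrow \$_{\pi'})$, which exploits persistence to turn ``same $\$$-region'' into ``same switch point''. The second is checking the degenerate cases ($T = \emptyset$ and $b = 0$), which are handled automatically since the universal quantifiers are vacuously satisfied on the empty set and the cross-trace conjunct is trivial when $\pi,\pi'$ denote the same trace. Finally, although $\phib$ reads as a conjunction of a one-variable and a two-variable constraint, it is already presented under a common $\forall\pi.\,\forall\pi'$ prefix, so it is a genuine \hyltl sentence (indeed a $\Pi_1$-sentence) with no further prenexing required.
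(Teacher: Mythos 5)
Your proposal is correct and takes essentially the same approach as the paper, whose proof also defines $\phib$ as a single $\forall \pi.\ \forall \pi'$-sentence combining a per-trace shape constraint (there $(\lnot \$_\pi \U \G \$_\pi) \land \bigwedge_{a \in \AP} \G\lnot(a_\pi \land \$_\pi)$, equivalent to your $\phi_{\mathrm{sh}}(\pi)$) with a two-variable synchronisation of the switch point (there $\F(\lnot \$_\pi \land \lnot \$_{\pi'} \land \X \$_\pi \land \X \$_{\pi'})$, where you use $\G(\$_\pi \leftrightarrow \$_{\pi'})$). The only substantive difference is that your pointwise-equivalence conjunct gracefully admits the degenerate bound $b = 0$ (the trace $\{\$\}^\omega$), whereas the paper's $\F$-conjunct forces the switch to occur at a position at least $1$ and hence rejects that boundary case.
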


\begin{proof}

  We let
    \begin{align*}
    \phib =
    \forall \pi.\ \forall \pi'.\
    & (\lnot \$_\pi \U \G \$_\pi) \land
      \bigwedge_{a \in \AP} \G(\lnot (a_\pi \land \$_\pi)) \land
      \F \left(
      \lnot \$_\pi \land \lnot \$_{\pi'} \land \X \$_\pi \land \X \$_{\pi'}
      \right) \, .
    \end{align*}
    The conjunct $(\lnot \$_\pi \U \G \$_\pi) \land
    \bigwedge_{a \in \AP} \G(\lnot (a_\pi \land \$_\pi))$
    ensures that every trace is in ${(2^{\AP})}^\ast \cdot \{\$\}^\omega$,
    while
    $\F \left(
      \lnot \$_\pi \land \lnot \$_{\pi'} \land \X \$_\pi \land \X \$_{\pi'}
    \right)$
    ensures that the $\$$'s in any two traces $\pi$ and $\pi'$ start at the same
    position.
\end{proof}

\section{Proofs Omitted from Section~\ref{sec:ctlsat}}
\label{sec:hyctl-proofs}
\subsection{Proof of Proposition~\ref{prop:hyctl-model}}

The proof of Proposition~\ref{prop:hyctl-model} uses a Skolem function to create a model. Before giving this proof, we should therefore first introduce Skolem functions for \hyctlstar.

Let $\phi$ be a \hyctlstar formula. A quantifier in $\phi$ occurs \emph{with polarity 0} if it occurs inside the scope of an even number of negations, and \emph{with polarity 1} if it occurs inside the scope of an odd number of negations. We then say that a quantifier \emph{occurs existentially} if it is an existential quantifier with polarity 0, or a universal quantifier with polarity 1. Otherwise the quantifier \emph{occurs universally}. A Skolem function will map choices for the universally occurring quantifiers to choices for the existentially occurring quantifiers.

For reasons of ease of notation, it is convenient to consider a single Skolem function for all existentially occurring quantifiers in a \hyctlstar formula $\phi$, so the output of the function is an $l$-tuple of paths, where $l$ is the number of existentially occurring quantifiers in~$\phi$. The input consists of a $k$-tuple of paths, where $k$ is the number of universally occurring quantifiers in $\phi$, plus an $l$-tuple of integers. The reason for these integers is that we need to keep track of the time point in which the existentially occurring quantifiers are invoked.

Consider, for example, a \hyctlstar formula of the form $\forall \pi_1.\ \G \exists \pi_2.\ \psi$. This formula states that for every path $\pi_1$, and for every future point $\pi_1(i)$ on that path, there is some $\pi_2$ starting in $\pi_1(i)$ satisfying $\psi$. So the choice of $\pi_2$ depends not only on $\pi_1$, but also on $i$. For each existentially occurring quantifier, we need one integer to represent this time point at which it is invoked. A \hyctlstar Skolem function for a formula $\phi$ on a transition system $\tsys$ is therefore a function $f:\mathit{paths}(\tsys)^k\times \mathbb{N}^l\rightarrow \mathit{paths}(\tsys)^l$, where $\mathit{paths}(\tsys)$ is the set of paths over $\tsys$, $k$ is the number of universally occurring quantifiers in $\phi$ and $l$ is the number of existentially occurring quantifiers.

Now, we are able to prove that every satisfiable \hyctlstar formula has a  model of size~$\cont$.

\begin{proof}[Proof of Proposition~\ref{prop:hyctl-model}]
If $\phi$ is satisfiable, let $\tsys$ be one of its models, and let $f$ be a Skolem function witnessing the satisfaction of $\phi$ on $\tsys$. We create a sequence of transition systems $\tsys_\alpha$ as follows.
\begin{itemize}
	\item $\tsys_0$ contains a single, arbitrarily chosen, path of $\tsys$ starting in the initial vertex.
	\item $\tsys_{\alpha+1}$ contains exactly those vertices and edges from $\tsys$ that are (i) part of $\tsys_{\alpha}$ or (ii) among the outputs of the Skolem function $f$ when restricted to input paths from $\tsys_\alpha$.
	\item if $\alpha$ is a limit ordinal, then $\tsys_\alpha= \bigcup_{\alpha'<\alpha} \tsys_{\alpha'}$.
\end{itemize}
Note that if $\alpha$ is a limit ordinal then $\tsys_\alpha$ may contain paths~$\rho(0)\rho(1)\rho(2)\cdots$ that are not included in any $\tsys_{\alpha'}$ with $\alpha' <\alpha$, as long as each every finite prefix $\rho(0) \cdots \rho(i)$ is included in some $\alpha'_i<\alpha$.

First, we show that this procedure reaches a fixed point at $\alpha = \omega_1$. Suppose towards a contradiction that $\tsys_{\omega_1+1}\not =\tsys_{\omega_1}$. Then there are $\vec{\rho}=(\rho_1,\ldots,\rho_k)\in \mathit{paths}(\tsys_{\omega_1})^k$ and $\vec{n}\in \mathbb{N}^l$ such that $f(\vec{\rho},\vec{n})\not \in \mathit{paths}(\tsys_{\omega_1})^l$. Then for every $i\in\mathbb{N}$ and every $1\leq j \leq k$, there is an ordinal $\alpha_{i,j}<\omega_1$ such that the finite prefix $\rho_j(0)\cdots \rho_j(i)$ is contained in $\tsys_{\alpha_{i,j}}$. The set $\{\alpha_{i,j}\mid i\in \mathbb{N}, 1\leq j \leq k\}$ is countable, and because $\alpha_{i,j}<\omega_1$ each $\alpha_{i,j}$ is also countable. A countable union of countable sets is itself countable, so $\sup \{\alpha_{i,j}\mid i\in \mathbb{N}, 1 \leq j \leq k\}=\bigcup_{i\in\mathbb{N}}\bigcup_{1\leq j \leq k}\alpha_{i,j}=\beta< \omega_1$. 

But then the $\vec{\rho}$ are all contained in $\tsys_\beta$, and therefore $f(\vec{\rho},\vec{n})\in \mathit{paths}(\tsys_{\beta+1})^l$. But $\beta+1< \omega_1$, so this contradicts the assumption that $f(\vec{\rho},\vec{n})\not \in \mathit{paths}(\tsys_{\omega_1})^l$. From this contradiction we obtain $\tsys_{\omega_1+1}=\tsys_{\omega_1}$, so we have reached a fixed point. Furthermore, because $\tsys_{\omega_1}$ is contained in $\tsys$ and closed under the Skolem function and $\tsys$ satisfies $\phi$, we obtain that $\tsys_{\omega_1}$ also satisfies $\phi$.

Left to do, then, is to bound the size of $\tsys_{\omega_1}$, by bounding the number of vertices that get added at each step in its construction. We show by induction that $\size{\tsys_\alpha}\leq \cont$ for every $\alpha$. As base case, we have $\size{\tsys_{0}}
\le \aleph_0$, since it consists of a single path. 

Consider then $\size{\tsys_{\alpha+1}}$. For each possible input to $f$, there are at most $l$ new paths, and therefore at most $\size{\nats \times l}$ new vertices in $\tsys_{\alpha+1}$. Further, there are $\size{\mathit{paths}(\tsys_\alpha)}^k\times \size{\mathbb{N}}^l$ such inputs. By the induction hypothesis, $\size{\tsys_\alpha}\leq \cont$, which implies that $\size{\mathit{paths}(\tsys_\alpha)}\leq \cont$. As such, the number of added vertices in each step is limited to $\cont^k\times \aleph_0^l \times \aleph_0\times l= \cont$. So $\size{\tsys_{\alpha+1}}\leq \size{\tsys_\alpha}+\cont=\cont$.

If $\alpha$ is a limit ordinal, $\tsys_\alpha$ is a union of at most $\aleph_1$ sets, each of which has, by the induction hypothesis, a size of at most $\cont$. Hence $\size{\tsys_\alpha}\leq \aleph_1\times \cont = \cont$.
\end{proof}

\subsection{Proof of Proposition~\ref{prop:hyctl-membership}}

Recall that we need to prove that \hyctlstar satisfiability is in $\Sigma_1^2$.

\begin{proof}[Proof of Proposition~\ref{prop:hyctl-membership}]
We encode the existence of a winning strategy for Verifier in the \hyctlstar model checking game~$\mcgame(\tsys, \varphi)$ induced by a transition system~$\tsys$ and an \hyctlstar formula~$\varphi$. 
This game is played between Verifier and Falsifier, one of them aiming to prove that $\tsys \models \varphi$ and the other aiming to prove $\tsys \not\models \varphi$. It is played in a graph whose positions correspond to subformulas which they want to check (and suitable path assignments of the free variables): each vertex (say, representing a subformula $\psi$) belongs to one of the players who has to pick a successor, which represents a subformula of $\psi$. A play ends at an atomic proposition, at which point the winner can be determined.

Formally, a vertex of the game is of the form~$(\Pi, \psi,b)$ where $\Pi$ is a path assignment, $\psi$ is a subformula of $\varphi$, and $b \in \set{0,1}$ is a flag used to count the number of negations encountered along the play; the initial vertex is $(\Pi_\emptyset, \varphi,0)$.
Furthermore, for until-subformulas~$\psi$, we need auxiliary vertices of the form~$(\Pi,\psi, b, j)$ with $j \in \nats$.
The vertices~$v$ of Verifier are 
\begin{itemize}
	\item of the form $(\Pi, \psi, 0)$ with $\psi = \psi_1 \vee \psi_2$, $\psi = \psi_1 \U \psi_2$, or $\psi = \exists \pi.\ \psi'$, 
	\item of the form $(\Pi, \forall \pi.\ \psi', 1)$, or
	\item of the form $(\Pi, \psi_1 \U \psi_2, 1, j)$.
\end{itemize}
The moves of the game are defined as follows:
\begin{itemize}
	
	\item A vertex~$(\Pi, a_\pi, b)$ is terminal. It is winning for Verifier if $b = 0$ and $a \in \lambda(\Pi(\pi)(0))$ or if $b=1$ and $a \notin \lambda(\Pi(\pi)(0))$, where $\lambda$ is the labelling function of $\tsys$.
	
	\item A vertex~$(\Pi, \neg \psi, b)$ has a unique successor~$(\Pi,\psi, b+1 \bmod 2)$.
	
	\item A vertex~$(\Pi, \psi_1\vee \psi_2, b)$ has two successors of the form~$(\Pi, \psi_i, b)$ for $i \in \set{1,2}$.
	
	\item A vertex~$(\Pi, \X \psi, b)$ has a unique successor~$(\suffix{\Pi}{1}, \psi, b)$.
	
	\item A vertex~$(\Pi, \psi_1\U \psi_2, b)$ has a successor~$(\Pi, \psi_1\U \psi_2, b,j)$ for every $j \in \nats$.
	
	\item A vertex~$(\Pi, \psi_1\U \psi_2, b,j)$ has the  successor~$(\suffix{\Pi}{j}, \psi_2, b)$ as well as successors~$(\suffix{\Pi}{j'}, \psi_1, b)$ for every $0 \le j' <j$.
	
	\item A vertex~$(\Pi, \exists \pi.\ \psi, b)$ has successors~$(\Pi[\pi\mapsto \rho],\psi, b)$ for every path~$\rho$ of $\tsys$ starting in $\last(\Pi)$.
	
	\item A vertex~$(\Pi, \forall \pi.\ \psi, b)$ has successors~$(\Pi[\pi\mapsto \rho],\psi, b)$ for every path~$\rho$ of $\tsys$ starting in $\last(\Pi)$.
\end{itemize}

A play of the model checking game is a finite path through the graph, starting at the initial vertex and ending at a terminal vertex. It is winning for Verifier if the terminal vertex is winning for her. 
Note that the length of a play is bounded by~$2d$, where $d$ is the depth\footnote{Depth is defined similarly to  temporal depth (see Page~\pageref{defdepth}), but takes every Boolean connective and temporal operator into account, not just the temporal ones.} of $\varphi$, as the formula is simplified during each move. 

A strategy~$\sigma$ for Verifier is a function mapping each of her vertices~$v$  to some successor of~$v$. 
A play~$v_0 \cdots v_k$ is consistent with $\sigma$, if $v_{k'+1} = \sigma(v_{k'})$ for every $0 \le k' < k$ such that $v_{k'}$ is a vertex of Verifier.
A straightforward induction shows that Verifier has a winning strategy for $\mcgame(\tsys,\varphi)$ if and only if $\tsys\models\varphi$.
 
Recall that every satisfiable \hyctlstar sentence has a model of cardinality~$\contcard$~(\cref{prop:hyctl-model}). 
Thus, to place \hyctlstar satisfiability in $\Sigma_1^2$, we express, for a given natural number encoding a \hyctlstar formula~$\varphi$, the existence of the following type~$2$ objects (using suitable encodings):
\begin{itemize}
	\item A transition system~$\tsys$ of cardinality~$\contcard$. 
	\item A function~$\sigma$ from $V$ to $V$, where $V$ is the set of vertices of $\mcgame(\tsys,\varphi)$. Note that a single vertex of $V$ is a type~$1$ object. 
\end{itemize}
Then, we express that $\sigma$ is a strategy for Verifier, which is easily expressible using quantification over type~$1$ objects. 
Thus, it remains to express that $\sigma$ is winning by stating that every play (a sequence of type~$1$ objects of bounded length) that is consistent with $\sigma$ ends in a terminal vertex that is winning for Verifier. 
Again, we leave the tedious, but standard details to the reader.
 \end{proof}

\subsection{Proof of Lemma~\ref{lem:phiset}}
We need to prove that there is a satisfiable \hyctlstar sentence that has only uncountable models.

 \begin{proof}[Proof of Lemma~\ref{lem:phiset}]
  The formula $\phiset$ is defined as the conjunction of the formulas below:
  \begin{enumerate}
  \item The label of the initial vertex is $\{\fbt\}$ and the labels of non-initial vertices are $\set{\fbt, 0}$, $\set{\fbt, 1}$, $\set{\pset, 0}$, or $\set{\pset, 1}$:
    \[
      \forall \pi.\ ( \fbt_\pi \land \lnot 0_\pi \land \lnot 1_\pi \land
      \lnot \pset_\pi )\wedge \X \G \big(
      (\pset_\pi \leftrightarrow \lnot \fbt_\pi) \land 
      (0_\pi \leftrightarrow \lnot 1_\pi)      \big)
    \]

  \item All $\fbt$-labelled vertices have a successor with label~$\set{\fbt,0}$ and one with label~$\set{\fbt,1}$, and no $\pset$-labelled successor:
    \[
      \forall \pi.\ \G\big(\fbt_\pi \rightarrow
      (\exists \pi_0.\ \X (\fbt_{\pi_0} \land 0_{\pi_0}))
      \land (\exists \pi_1.\ \X (\fbt_{\pi_1} \land  1_{\pi_1})) \land
      (\forall \pi'.\ \X \fbt_{\pi'})
      \big)
    \]
  \item For every path of $\fbt$-labelled vertices starting at a successor
    of the initial vertex, there is a path of $\pset$-labelled vertices
    (also starting at a successor of the initial vertex) with the same
    $\{0,1\}$ labelling:
    \[
      \forall \pi.\  \big( (\X \fbt_\pi) \rightarrow
      \exists \pi'.\ \X( \pset_{\pi'} \land \G(0_\pi \leftrightarrow 0_{\pi'}))
      \big)
    \]
  \item Any two paths starting in the same $\pset$-labelled vertex have the same sequence of labels:
    \[
      \forall \pi.\ \G \big(\pset_\pi \rightarrow
      \forall \pi'.\ \G(0_\pi \leftrightarrow 0_{\pi'})
      \big) \, .
    \]
  \end{enumerate}

  It is easy to check that $\Kset \models \phiset$.
  Note however that it is not the only model of $\phiset$: for instance,
  some paths may be duplicated, or merged after some steps if their label
  sequences share a common suffix.
  So, consider an arbitrary transition system $\tsys = (V,E,v_\initmark, \lambda)$
  such that $\tsys \models \phiset$.
  By condition 2, for every set $A \subseteq \Nat$,  there is a path
  $\rho_A$ starting at a successor of $v_\initmark$
  such that $\lambda(\rho_A(i)) = \{\fbt,1\}$ if $i \in A$ and
  $\lambda(\rho_A(i)) = \{\fbt,0\}$ if $i \notin A$.
  Condition 3 implies that there is also a $\pset$-labelled path $\rho'_A$
  such that $\rho'_A$ starts at a successor of $v_\initmark$, and has the same
  $\{0,1\}$ labelling as $\rho_A$.
  Finally, by condition 4, if $A \neq B$ then $\rho_A'(0) \neq \rho_{B}'(0)$.
 \end{proof}

\subsection{Proof of Lemma~\ref{lem:phiop}}

We prove that the set of traces $\Top$ can be defined in \hyltl.
 
 \begin{proof}[Proof of Lemma~\ref{lem:phiop}]
  The first condition can be expressed as follows in \hyltl:
  \[
    \forall \pi.\ \Big( \G (\mult_\pi \land \lnot \add_\pi) \lor
    \G (\add_\pi \land \lnot \mult_\pi \Big)
    \land \bigwedge_{a \in \{\argl,\argr,\res\}}
    \lnot a \U (a \land \X \G \lnot a) \, .
  \]
  The second condition can be expressed using inductive definitions
  of addition and multiplication:
  \begin{enumerate}
  \item Addition and multiplication are commutative:
    \[
      \forall \pi.\ \exists \pi'.\
      (\add_\pi \leftrightarrow \add_{\pi'}) \land
      \F(\argl_\pi \land \argr_{\pi'}) \land
      \F(\argr_\pi \land \argl_{\pi'}) \land
      \F(\res_\pi \land \res_{\pi'}) \, .
    \]
  \item $0+0 = 0$, and if $i+j = k$, then $i+ (j+1) = (k+1)$:
    \begin{align*}
      & \exists \pi.\ \add_\pi \land \argl_\pi \land \argr_\pi \land \res_\pi \\
      & \forall \pi.\ \exists \pi'.\ \add_\pi \rightarrow {} \\
      & \quad \add_{\pi'} \land \F(\argl_\pi \land \argl_{\pi'}) \land
        \F(\argr_\pi \land \X \argr_{\pi'}) \land
        \F(\res_\pi \land \X \res_{\pi'})
    \end{align*}
    Together with commutativity, this ensures that all valid traces for addition
    are present.
  \item $i+j = 0$ if and only if ($i = 0$ and $j = 0$), and for $j > 0$,
    $i+j = k$ implies $i + (j-1) = (k-1)$:
    \begin{align*}
      & \forall \pi.\ \add_\pi \rightarrow
        (\res_\pi \leftrightarrow (\argl_\pi \land \argr_\pi)) \\
       & \forall \pi.\ \exists \pi'.\
         (\add_\pi \land \lnot \argr_\pi) \rightarrow {}\\
      & \quad \add_{\pi'} \land \F(\argl_\pi \land \argl_{\pi'}) \land
        \F(\X \argr_\pi \land \argr_{\pi'}) \land
        \F(\X \res_\pi \land \res_{\pi'}) \, .
    \end{align*}
    Together with commutativity, this ensures that \emph{only} valid traces
    for addition are present.
  \item $0\cdot 0 = 0$, and if $i \cdot j = k$, then $i \cdot (j+1) = i+ k$:
    \begin{align*}
      & \exists \pi.\ \mult_\pi \land \argl_\pi \land \argr_\pi \land \res_\pi\\
      & \forall \pi.\ \exists \pi'.\ \exists \pi''.\ \mult_\pi \rightarrow {}\\
      & \quad \mult_{\pi'} \land \add_{\pi''} \land
        \F(\argl_\pi \land \argl_{\pi'} \land \argl_{\pi''}) \land
        \F(\argr_\pi \land \X \argr_{\pi'}) \land {} \\
      & \quad
        \F(\argr_{\pi''} \land \res_\pi) \land
        \F(\res_{\pi''} \land \res_{\pi'}) \, .
    \end{align*}
    Together with commutativity, this ensures that all valid traces for
    multiplication are present.
  \item $i \cdot j = 0$ if and only if ($i = 0$ or $j = 0$),
    and if $j > 0$ and $i \cdot j = k$, then there exists $k'$ such that
    $i \cdot (j-1) = k'$ and $i+k' = k$:
    \begin{align*}
      & \forall \pi.\ \mult_\pi \rightarrow
        (\res_\pi \leftrightarrow (\argl_\pi \lor \argr_\pi)) \\
      & \forall \pi.\ \exists \pi'.\ \exists \pi''.\
         (\mult_\pi \land \lnot \argr_\pi) \rightarrow {}\\
      & \quad \mult_{\pi'} \land \add_{\pi''} \land
        \F(\argl_{\pi} \land \argl_{\pi'} \land \argl_{\pi''}) \land
        \F(\X \argr_{\pi} \land \argr_{\pi'}) \land {} \\
      & \quad \F(\res_{\pi'} \land \argr_{\pi''}) \land
        \F(\res_{\pi''} \land \res_{\pi}) \, .
    \end{align*}
    Together with commutativity, this ensures that \emph{only} valid traces
    for multiplication are present. \qedhere
  \end{enumerate}
\end{proof}

\appendix
\end{document}